\documentclass[a4paper,10pt,openright]{article}
\usepackage[british]{babel}
\usepackage{amsmath,amssymb,stmaryrd,xcolor,latexsym,theorem,hyperref,tikz}
\numberwithin{equation}{section}
\usepackage{verbatim}

\newcommand{\assign}{:=}
\newcommand{\backassign}{=:}
\newcommand{\barsuchthat}{|}
\newcommand{\comma}{{,}}
\newcommand{\mathGamma}{\Gamma}
\newcommand{\mathLaplace}{\Delta}
\newcommand{\mathd}{\mathrm{d}}
\newcommand{\nobracket}{}

\newcommand{\tmem}[1]{{\em #1\/}}

\newcommand{\tmop}[1]{\ensuremath{\operatorname{#1}}}
\newcommand{\tmtextbf}[1]{\text{{\bfseries{#1}}}}
\newcommand{\tmtextit}[1]{\text{{\itshape{#1}}}}
\newcommand{\tmverbatim}[1]{\text{{\ttfamily{#1}}}}

\newenvironment{proof}{\noindent\textbf{Proof\ }}{\hspace*{\fill}$\Box$\medskip}
\newcounter{nnacknowledgments}

{\theorembodyfont{\rmfamily}\newtheorem{acknowledgments*}[nnacknowledgments]{Acknowledgments}}
\newtheorem{theorem}{Theorem}[section]
\newtheorem{corollary}[theorem]{Corollary}
\newtheorem{definition}[theorem]{Definition}
\newtheorem{example}[theorem]{Example}
\newtheorem{lemma}[theorem]{Lemma}
\newcounter{nnnotation}

\newtheorem{notation*}[nnnotation]{Notation}
\newtheorem{proposition}[theorem]{Proposition}
\newtheorem{remark}[theorem]{Remark}

\newcommand{\tmkeywords}{\textbf{Keywords:} }


\definecolor{AlColor}{RGB}{77,0,154}
\definecolor{ClaColor}{RGB}{255,155,74}

\oddsidemargin 0cm      
\evensidemargin 0cm     
\textheight 20cm        
\textwidth 16cm         

\begin{document}

\title{On the stochastic Sine-Gordon model:\\ an interacting field theory approach}

\author{
	Alberto Bonicelli\thanks{AB:
		Dipartimento di Fisica,
		Universit\`a degli Studi di Pavia \& INFN, Sezione di Pavia, \& Indam, Sezione di Pavia 
		Via Bassi 6,
		I-27100 Pavia,
		Italia;
		\mbox{alberto.bonicelli01@universitadipavia.it}}
	\and
	Claudio Dappiaggi\thanks{CD:
		Dipartimento di Fisica,
		Universit\`a degli Studi di Pavia \& INFN, Sezione di Pavia, \& Indam, Sezione di Pavia
		Via Bassi 6,
		I-27100 Pavia,
		Italia;
		\mbox{claudio.dappiaggi@unipv.it}}
	\and
	Paolo Rinaldi\thanks{PR: Institute for Applied Mathematics, Universit\"at Bonn, 
		Endenicher Allee 60,
		D-53115 Bonn,
		Germany;
		\mbox{rinaldi@iam.uni-bonn.de}
}}

\maketitle

\begin{abstract}
  We investigate the massive Sine-Gordon model in the finite ultraviolet regime on the two-dimensional Minkowski spacetime $(\mathbb{R}^2,\eta)$ with an additive Gaussian white noise. In particular we construct the expectation value and the correlation functions of a solution of the underlying stochastic partial differential equation (SPDE) as a power series in the coupling constant, proving ultimately uniform convergence. This result is obtained combining an approach first devised in \cite{DDRZ21} to study SPDEs at a perturbative level with the one discussed in \cite{Bahns-Rejzner_Sine-Gordon} to construct the quantum sine-Gordon model using techniques proper of the perturbative, algebraic approach to quantum field theory (pAQFT). At a formal level the relevant expectation values are realized as the evaluation of suitably constructed functionals over $C^\infty(\mathbb{R}^2)$. In turn, these are elements of a distinguished algebra whose product is a deformation of the pointwise one, by means of a kernel which is a linear combination of two components. The first encompasses the information of the Feynmann propagator built out of an underlying Hadamard, quantum state, while the second encodes the correlation codified by the Gaussian white noise. In our analysis, first of all we extend the results obtained in \cite{BPR23-Sine-Gordon-Massive,Bahns-Rejzner_Sine-Gordon} proving the existence of a convergent modified version of the S-matrix and of an interacting field as elements of the underlying algebra of functionals. Subsequently we show that it is possible to remove the contribution due to the Feynmann propagator by taking a suitable $\hbar\to 0^+$-limit, hence obtaining the sought expectation value of the solution and of the correlation functions of the SPDE associated to the stochastic Sine-Gordon model. 
\end{abstract}

\noindent{\bf MSC Classification:} 81T05, 60H15, 

\tmkeywords{Stochastic Sine-Gordon model, Perturbative algebraic quantum field theory, Microlocal analysis}

{\tableofcontents}

\section{Introduction }\label{Sec_intro}

The investigation of nonlinear stochastic partial differential equations (SPDEs) represents one of the most thriving branches of research in mathematics, mostly thanks to the formulation of different successful frameworks aimed at studying their underlying solution space. Regularity structures \cite{Hairer:2014sma} or paracontrolled calculus \cite{Gubinelli_et_al} have proven to be two complementary, albeit rather different, approaches which have allowed to prove existence and uniqueness of the solutions of a large class of nonlinear elliptic or parabolic SPDEs. At the same time these equations are remarkably important in many physical models, stemming from interface dynamics to stochastic quantization. Yet, in order to build a solid and longstanding bridge between the probabilistic and analytic approach to SPDEs and the physical models, it is mandatory to be able to provide as much explicit information as possible on the underlying solutions and on their correlation functions. With this goal in mind, in \cite{DDRZ21} it has been proposed a new method for the construction of the solutions and of the correlation functions of nonlinear SPDEs, largely inspired by the algebraic approach to quantum field theory. On the one hand this framework has the advantage of allowing to encompass the renormalization procedure and freedoms which are often a key ingredient in the solution theory of a nonlinear SPDE, without resorting to any specific $\epsilon$-regularization scheme. On the other hand, it allows to establish an algorithmic procedure to construct both the solutions and the $n$-point correlation functions, though as a formal power series in the coupling constant which rules the nonlinear term in the equation of motion.

It is important to highlight that the algebraic approach, devised in \cite{DDRZ21} has the additional net advantage to be applicable also to nonlinear SPDEs which do not lie in the subcritical regime, a necessary prerequisite instead when applying the theory of regularity structures or of paracontrolled calculus, see, {\it e.g.}, \cite{BCDR23} and \cite{BDR23}. One might be tempted to perceive subcriticality essentially as yielding a constraint on the nonlinear potential ruling the underlying dynamics, but this viewpoint is correct provided that the fundamental solutions associated to the operator ruling the linear contribution to the underlying SPDE is regularizing. This is indeed the case when considering second order elliptic or parabolic differential operators with smooth coefficients, as it occurs in the vast majority of the models in the literature. On the contrary this feature is no longer present when one considers hyperbolic SPDEs, for example the stochastic, nonlinear wave equation. In this case one has often to resort to a case by case analysis, see for example \cite{Gubinelli1,Gubinelli2} in order to prove existence and uniqueness of the underlying solutions. 

For this reason, in comparison to the elliptic and parabolic scenarios, hyperbolic SPDEs have been analyzed less in depth. In this paper we shall focus our attention on a specific instance of this class of equations, which is known as the stochastic massive sine-Gordon model in the so called finite ultraviolet regime. Its parabolic counterpart has been studied in \cite{Hairer:2014sma}, while the hyperbolic scenario in two space dimensions has been investigated recently in \cite{Oh:2020}. If one focuses instead the attention on one space dimension, the sine-Gordon equation without an additive Gaussian, white noise as a source, is a very important and thoroughly studied model in quantum field theory, especially due to the underlying integrability properties and to its connections with the Thirring model, which is at the heart of a phenomenon known as Bosonization \cite{Col75,Thirring}. In particular, within the framework of the perturbative, algebraic approach to quantum field theory, it has gained a lot of attention in the past decade since it represents one of the few notable examples where it is possible to prove convergence of the perturbative series defining the S-matrix of the model, see \cite{BPR23-Sine-Gordon-Massive,Bahns-Rejzner_Sine-Gordon}. This result prompts naturally the question whether the techniques used in these papers can be adapted to be applicable also to the analysis of the stochastic sine-Gordon equation on the two-dimensional Minkowski spacetime. In this work we shall prove that this is indeed the case and this allows us to obtain two notable, connected results. On the one hand we are able to establish for the first time the convergence of the perturbative series which lies at the heart of the algebraic approach to the construction of the solution and of the correlation functions of the stochastic sine-Gordon equation. On the other hand, this opens the path to the possibility of combining the recent analysis in \cite{BCDR23} on the stochastic Thirring model to establish a stochastic counterpart of the phenomenon of Bosonization using techniques inspired by the algebraic approach to quantum field theory.   

More in detail, our approach to the stochastic sine-Gordon equation can be divided in two main steps. In the first one, we follow the rationale of \cite{BPR23-Sine-Gordon-Massive,Bahns-Rejzner_Sine-Gordon}, namely we consider a suitable algebra of functionals defined on $C^\infty(\mathbb{R}^2)$, where $\mathbb{R}^2$ plays here the r\^{o}le of the underlying Minkowski spacetime. At the beginning we consider a commutative pointwise product and, subsequently, in the spirit of the perturbative, algebraic approach to quantum field theory, we deform it by means of a suitable kernel which encompasses the information both of an underlying Feynmann propagator, the building block of the time-ordered product in an interacting quantum theory, and of the correlation function of the Gaussian process codified by the underlying white noise. It is important to stress that this entails a deviation from the approach in \cite{BPR23-Sine-Gordon-Massive} since, in this paper, no stochastic effect has been considered. Yet we are able to show that all convergence results for the $S$-matrix and for the interacting quantum field can be generalized to this scenario, although they hold true in an arbitrary, but fixed compact region of the two-dimensional Minkowski spacetime. The outcome is a model which mixes both a quantum and a stochastic behaviour. Yet the effect of the former can be sharply disentangled from the latter since the action of the Feynman propagator is always tagged by the presence of a multiplicative constant, namely $\hbar$. This feature leads to the second step of our approach in which we investigate the limit as $\hbar$ tends to $0$ of all relevant functionals and we prove that such limit is always well-defined. In this way we are also able to show that we are actually constructing explicitly the functionals encoding the information on the expectation value of the interacting solution of the stochastic sine-Gordon equation and on the associated $n$-point correlation functions. Observe that taking the expectation value corresponds in the algebraic approach to considering the evaluation of the corresponding functional on the zero configuration. 

\vskip .2cm

\noindent The paper is organized as follows: in Section \ref{Sec:interacting-AQFT} we review the building blocks of the algebraic approach to an interacting quantum field theory, in particular the S-matrix and the Bogoliubov map in Section \ref{Sec: time-ordered stuff}. The goal of Section \ref{Sec:microlocal-approach-to-spdes} is instead to present succinctly the content of \cite{DDRZ21} and the interplay with microlocal analysis. The specialization of the structures outlined in the first sections to the specific case of the sine-Gordon model is the content of Section \ref{Sec: Functionals and sine-Gordon}, while in Section \ref{Sec:Strategy} we present the strategy that we plan to follow to construct the solutions and the $n$-point correlation functions of the stochastic sine-Gordon equation. Section \ref{Sec:interplay} contains the first of our main results since we introduce the notion of the $Q-S$-matrix first as a formal power series in the underlying coupling constant. Subsequently in Section \ref{Sec: Convergence of the Q-S matrix} we prove uniform convergence in Theorem \ref{Thm:estimates-Q-S-matrix-I} and in Corollary \ref{Cor: convergence S matrix}. This allows us in turn to establish also convergence both of the interacting field in Section \ref{Sec:conv-interacting-field} and of the $n$-point correlation functions in Section \ref{Sec:convergence-corr-functions}. The main result of our work is discussed instead in Section \ref{Sec:classical-limit}, namely in Theorem \ref{Thm: hbar limit}, we prove that the limit as $\hbar\to 0^+$ of all relevant functionals exist and this allows us to establish the existence of the expectation value of the solutions of the stochastic sine-Gordon equation as well as of the associated $n$-point correlation functions, as suitably convergent power series in the underlying coupling constant.

\subsection{Notation and Conventions}\label{Sec: Notation}
In this short section we introduce the stochastic sine-Gordon equation and we take the chance to fix the notation and the conventions that we use in this work. Throughout the paper we denote by $\mathbb{M}$, a generic $d$-dimensional globally hyperbolic spacetime. With $\mathcal{E} (\mathbb{M}) \assign C^{\infty} (\mathbb{M})$, $\mathcal{D} (\mathbb{M}) \assign C_0^{\infty} (\mathbb{M})$ we indicate respectively the space of smooth field configurations and of test functions. In addition $\mathcal{D}'(\mathbb{M})$ is the space of distributions over $\mathbb{M}$, dual to $\mathcal{D}(M)$, while $\mathcal{E}^\prime(\mathbb{M})$ is the space of compactly supported distributions dual to $\mathcal{E}(\mathbb{M})$. We will be particularly interested in the case where the r\^{o}le of $\mathbb{M}$ is played by the two-dimensional Minkowski spacetime $\mathbb{R}^2$ which is endowed with the standard Minkowski metric $\eta$ of signature $(+,-)$. On top of it we consider the {\em stochastic sine-Gordon equation}
\begin{equation}\label{Eq: sine-Gordon equation_intro}
	(\Box + m^2) \hat{\psi} + \lambda g a \sin(a\hat{\psi}) = \hat{\xi}, 
\end{equation}
where $\Box=\partial^2_t-\partial^2_x$ is the d'Alembert wave operator, $\hat{\psi}$ is a random valued distribution, whereas $a\in \mathbb{R}$ is chosen according to the finite ultraviolet regime, namely $a^2<4\pi/\hbar$, while $g \in \mathcal{D} (\mathbb{R}^2)$. In addition $\hat{\xi}$ denotes a space-time white noise, namely a Gaussian centered random distribution whose
two-point correlation function is, at the level of integral kernel,
\begin{equation}\label{Eq: covariance white noise_intro}
	\mathbb{E} [\hat{\xi} (z) \hat{\xi} (z^\prime)] = \delta (z - z^\prime),
\end{equation}
where we adopt the notation $z=(t,x)$, $t$ being the time coordinate.

\section{Setting}

The goal of this section is both to fix the notation and the conventions adopted in this work, and to give a succinct overview of the key definitions and results  concerning the algebraic approach to interacting quantum fields (AQFT). For more information, see \cite{BPR23-Sine-Gordon-Massive, BFDY15, DDR20, Rejzner-pAQFT}. Subsequently we sketch the key ideas at the heart of an approach to stochastic PDEs inspired by AQFT and analyzed in \cite{BCDR23,BDR23,DDRZ21}. These frameworks will serve as the foundation for the analysis of the stochastic sine-Gordon model.

\subsection{Interacting algebraic quantum field
	theory}\label{Sec:interacting-AQFT}

Algebraic Quantum Field Theory (AQFT) is a two step approach to quantization, which is tailored to be applicable to as many models as possible, regardless whether they are defined on a Lorentzian or an Euclidean manifold. At first, given a physical system, one needs to construct a suitable $*$-algebra of observables, say $\mathcal{A}$ which encompasses all structural properties, ranging from dynamics to causality or to the canonical commutation or anti-commutation relations.
 Subsequently, on top of $\mathcal{A}$, one must identify an {\em algebraic state}, that is a linear, normalized and positive functional $\omega:\mathcal{A}\to\mathbb{C}$. 
 As a consequence of the renown GNS theorem one can recover from the pair $(\mathcal{A},\omega)$ the standard probabilistic interpretation, proper of quantum theories. 
 In the past decade, especially in connection to interacting quantum field theories, it has become clear that an advantageous way to construct a concrete $*$-algebra $\mathcal{A}$ consists of identifying it as a collection of suitable functionals on the underlying space of smooth field configurations. 
 This procedure allows, on the one hand, to encompass the above mentioned structural properties in terms of a deformation of the pointwise 
product among functionals, while, on the other hand, it facilitates the possibility of including specific constraints on the existing singular structures, a feature which is of paramount relevance to deal with renormalization in the algebraic setting -- see, {\it e.g.}, \cite{BFDY15, Rejzner-pAQFT}.

We denote by $\mathcal{F}(\mathbb{M})$ the space of complex-valued, continuous, linear functionals over $\mathcal{E}(\mathbb{M})$. It is worth recalling that $\mathcal{F}(\mathbb{M})$ comes with a natural notion of functional derivative, which allows to identify a class of distinguished functionals, namely the polynomial ones. 
\begin{definition}\label{Def: functional derivatives}
	Given $F\in\mathcal{F}(\mathbb{M})$, we call $F^{(k)}$, $k\geq 1$, its $k$-th order functional derivative, namely $F^{(k)}\in\mathcal{E}'(\underbrace{\mathbb{M}\times\ldots\times \mathbb{M}}_{k}; \mathcal{F}(\mathbb{M}))$ such that
	\begin{align}\label{Eq: functional ferivatives}
		F^{(k)}(\eta_1\otimes\ldots\otimes\eta_k;\eta):=\frac{\partial^{k}}{\partial s_1\ldots\partial s_k} F(\eta+s_1\eta_1+\ldots+s_k\eta_k)\Big\vert_{s_1=\ldots={s}_{k}=0},
	\end{align}
	for all $\eta,\eta_1,\ldots,\eta_k\in\mathcal{E}(\mathbb{M})$. Accordingly, we define the \textbf{directional derivative} along $\varphi\in\mathcal{E}(\mathbb{M})$ as
	\begin{align*}
		\delta_\varphi:\mathcal{F}(\mathbb{M})\rightarrow\mathcal{F}(\mathbb{M}),\qquad[\delta_\varphi F](\eta):=F^{(1)}(\varphi;\eta).
	\end{align*}
	A functional $F\in\mathcal{F}(\mathbb{M})$ is said to be \textbf{polynomial}, $F\in\mathcal{F}_{\textrm{Pol}}(\mathbb{M})$, if there exists $n\in\mathbb{N}_0$ such that $F^{(k)}=0$ for all $\{k\geq n\}$. In addition we define the (spacetime) {\bf support} of a functional $F$ as 
	\begin{equation}\label{Eq: Support of a Functional}
		\mathrm{supp}(F):=\{x\in\mathbb{M}\;|\;\forall U\in\mathcal{N}_x\;\exists\varphi,\psi\in\mathcal{E}(\mathbb{M})\;\textrm{with}\;\mathrm{supp}(\varphi)\subset U,\;\textrm{such that}\;F(\varphi+\psi)\neq F(\varphi)\},
	\end{equation}
where $\mathcal{N}_x$ denotes the family of all open subsets of $\mathbb{M}$ such that $x\in\mathcal{N}_x$.
\end{definition}

In the applications both to interacting quantum field theories and to non-linear stochastic partial differential equations, we will be forced to consider either products among the derivatives of suitable functionals or their composition with specific propagators. 
Some of these operations are \textit{a priori} ill-defined and one needs to resort to techniques proper of microlocal analysis to overcome these hurdles. In order to keep the length of this work at bay we shall assume that the reader is familiar with the basic concepts of this framework and we refer to \cite{Hormander-83} for all information or to \cite[Appendix B]{DDRZ21} for a succinct summary of the key ingredients. In view of these considerations, we look for a restricted class of functionals, characterized by a constraint on the singular structure of the functional derivatives.

\begin{definition}\label{Def:functionals}
	Denoting with $\mathcal{F}(\mathbb{M})$ the space of continuous, complex-valued functionals on $\mathcal{E}(\mathbb{M})$,  we define
	\begin{itemize}
		\item the \tmtextbf{microcausal functionals} as
		\[ \mathcal{F}_{\mu c}(\mathbb{M}) \assign \left\{ F\in\mathcal{F}(\mathbb{M})\; \barsuchthat\; F^{(n)} \in \mathcal{E}'
		(\mathbb{M}^n), \quad \tmop{WF} (F^{(n)}) \cap \Bigl[ \bigcup_{p\in\mathbb{M}} (\bar{V}_p^+ \cup
		\bar{V}_p^-)\Bigr]= \emptyset, \quad n \in \mathbb{N} \right\}, \]      
		where $F^{(n)}$ is the $n$-th functional derivative of $F$ as per Definition \ref{Def: functional derivatives}, while
		$\bar{V}_p^+$ and $\bar{V}_p^-$ are, respectively, the sets of future-pointing
		and past-pointing covectors in $T^{\ast}_p \mathbb{M}$; 
		
		\item the \tmtextbf{regular functionals} as
		\[ \mathcal{F}_{\tmop{reg}}(\mathbb{M}) \assign \left\{ F \in \mathcal{F}_{\mu c}(\mathbb{M})
		\;\barsuchthat\; F^{(n)} \in \mathcal{D}(\mathbb{M}^n)\hookrightarrow\mathcal{E}^\prime(\mathbb{M}^n), \quad n \in
		\mathbb{N} \right\} ; \]
		\item the \tmtextbf{local functionals} as
		\[ \mathcal{F}_{\tmop{loc}}(\mathbb{M}) \assign \left\{ F \in \mathcal{F}_{\mu c}(\mathbb{M})
		\;\barsuchthat\; F^{(1)} \in \mathcal{E}(\mathbb{M})\hookrightarrow\mathcal{D}^\prime(\mathbb{M}), \quad \tmop{supp}
		(F^{(n)}) \subset \tmop{Diag}_n \subset \mathbb{M}^n, \quad n \in
		\mathbb{N} \right\}, \]
		where $\tmop{Diag}_n$ denotes the total diagonal of $\mathbb{M}^n$, namely
		\[ \tmop{Diag}_n \assign \{ (x, \ldots, x) \in \mathbb{M}^n\; \barsuchthat\; x
		\in \mathbb{M} \} . \]
	\end{itemize}
\end{definition}

\begin{remark}\label{Rem: Polynomial Functionals}
	With reference to Definition \ref{Def: functional derivatives}, when we need to consider in addition only polynomial functionals, we shall employ the symbol $\mathcal{F}_{\mu c/reg/loc}^{p}(\mathbb{M})\doteq \mathcal{F}_{\mu c/reg/loc}(\mathbb{M})\cap\mathcal{F}_{\textrm{Pol}}(\mathbb{M})$.
\end{remark}

\begin{example}\label{Ex: basic functionals}
	In order to introduce functionals which will play a prominent r\^ole in our construction, let us delve in two simple, yet informative examples. The first one is the so-called \tmtextbf{smeared linear field}: Given $f \in
	\mathcal{D} (\mathbb{M})$, we set
	\[ \Phi_f : \varphi \mapsto \Phi_f (\varphi) \assign \int_{\mathbb{M}}
	\mathd \mu_x\, f (x) \varphi (x), \]
	with $\mathd \mu_x$ is the metric induced measure. A direct computation of the first functional derivative, see Equation \eqref{Eq: functional ferivatives}, entails that its integral kernel reads
	\begin{equation*}
		\Phi_f^{(1)}(x,y)=f(x)\delta_{Diag_2}(x,y),
	\end{equation*}
	where $\delta_{Diag_2}\in\mathcal{D}'(\mathbb{M}\times\mathbb{M})$ acts as
	\begin{equation}
		\delta_{Diag_2}(h):=\int_{\mathbb{M}}\mathd \mu_x\, h (x,x),\qquad \forall h\in\mathcal{D}(\mathbb{M}\times\mathbb{M}),
	\end{equation}
	while all higher order derivatives vanish. Hence the smeared linear field is a polynomial local functional, $\Phi_f\in\mathcal{F}^{\,p}_{loc}(\mathbb{M})$ for all $f\in\mathcal{D}(\mathbb{M})$.
	As a second example of local functional we consider the
	\textbf{smeared vertex operator}
	\begin{equation}\label{Eq: vertex functional}
		V_{a, f} : \varphi \mapsto V_{a, f} (\varphi) \assign \int_{\mathbb{M}}
		\mathd \mu_x f (x) e^{ia \varphi (x)}, \quad a \in \mathbb{R},\;\textrm{and}\, f
		\in \mathcal{D} (\mathbb{M}),
	\end{equation}
	where locality is once more a by-product of Definition \ref{Def: functional derivatives}. We observe that Equation \eqref{Eq: vertex functional} does not identify a polynomial functional, but it is of primary relevance since it encodes the information of the interaction term in the Lagrangian of the sine-Gordon model, {\it cf.} Equation \eqref{Eq: sine-Gordon equation_intro}, 
	\[ V_g \assign \frac{V_{a, g} + V_{- a, g}}{2}, \]
	with $g \in \mathcal{D} (\mathbb{M})$ and $a\in\mathbb{R}_+$. 
\end{example}

Starting from Definition \ref{Def:functionals}, we can endow $\mathcal{F}_{\mu c}(\mathbb{M})$ with the structure of a commutative $*$-algebra denoted by $\mathcal{A}_{\mu c}(\mathbb{M})\doteq (\mathcal{F}_{\mu c}(\mathbb{M}), \cdot, \ast)$ and constituted by the following data:
\begin{itemize}
	\item a $*$-operation $\ast:\mathcal{F}_{\mu c}(\mathbb{M})\to\mathcal{F}_{\mu c}(\mathbb{M})$ such that, for all $F\in\mathcal{F}_{\mu c}(\mathbb{M})$
	\[ F^{\ast} (\varphi) \assign \overline{F (\varphi)}. \]
	\item a product $\cdot:\mathcal{F}_{\mu c}(\mathbb{M})\times\mathcal{F}_{\mu c}(\mathbb{M})\to\mathcal{F}_{\mu c}(\mathbb{M})$ such that, for all $F,G\in\mathcal{F}_{\mu c}(\mathbb{M})$,
	\begin{equation}\label{Eq: tensor product}
		F \cdot G =\mathcal{M} (F \otimes G), 
	\end{equation}
	where $\mathcal{M}$ denotes the pullback on $\mathcal{F}_{\mu c} \otimes
	\mathcal{F}_{\mu c}$ via the diagonal map
	\begin{align*}
		\iota :\, &\mathcal{E} (\mathbb{M})\rightarrow \mathcal{E} (\mathbb{M}) \times \mathcal{E} (\mathbb{M})\\
		&\iota (\varphi) \assign (\varphi, \varphi).
	\end{align*}
\end{itemize}
Observe that we call $\cdot$, as per Equation \eqref{Eq: tensor product}, the pointwise product between functionals since, for all 
$\varphi \in \mathcal{E} (\mathbb{M})$,
\begin{equation}\label{Eq: pointwise product}
	(F \cdot G) (\varphi) = F (\varphi) G (\varphi).
\end{equation}

\begin{definition}\label{Def: topology on microcausal functionals}
	Let $\mathcal{A}_{\tmop{cl}}(\mathbb{M})$ be the algebra of microcausal functionals. We say that a family $\{F_n \}_{n \in \mathbb{N}}$, $F_n \in
	\mathcal{A}_{\tmop{cl}}(\mathbb{M})$ converges to a functional $F \in
	\mathcal{A}_{\tmop{cl}}(\mathbb{M})$ for $n \rightarrow \infty$ if, for all $\ell \in
	\mathbb{N}$ and for any $\varphi \in \mathcal{E} (\mathbb{M})$ it holds that
	$F_n^{(\ell)} (\varphi)$ converges to $F^{(\ell)} (\varphi)$ as $n \rightarrow \infty$
	in the weak $\ast$-topology of $\mathcal{E}^\prime (\mathbb{M}^\ell)$. 
\end{definition}

We observe that in the above definition the subscript $\tmop{cl}$ stands for classic since the product is the classical pointwise one.


Up to this point, all algebras that we have considered do not carry any specific information either on an underlying dynamics or on a quantization scheme. In order to encode eventually these data, the algebraic approach calls for considering a deformation of the product $\cdot$ introduced above. This is codified by means of a formal deformation parameter which is denoted by $\hbar$ and by a bidistribution $K\in\mathcal{D}^\prime(\mathbb{M}\times\mathbb{M})$ whose explicit form depends on the case in hand. More precisely one switches from $\mathcal{A}_{\mu c}(\mathbb{M})$ to $(\mathcal{F}_{\mu c}(\mathbb{M}), \star_{\hbar K}, \ast)$ such that, for all $F,G\in\mathcal{F}_{\mu c}(\mathbb{M})$,
\begin{equation}\label{Eq:deform-quantisation}
	F \star_{\hbar K} G =\mathcal{M} \circ e^{D_{\hbar K}}  [F \otimes G],
	\quad D_{\hbar K} \assign \left\langle \hbar K, \frac{\delta}{\delta
		\varphi} \otimes \frac{\delta}{\delta \varphi} \right\rangle \assign
	\int_{\mathbb{M}^2} \mathd \mu_x \mathd \mu_y \hbar K (x, y)
	\frac{\delta}{\delta \varphi (x)} \otimes \frac{\delta}{\delta \varphi (y)},
\end{equation}
where $K (x, y)$ is the formal integral kernel of $K$. Henceforth we shall also refer to this class of products as
{\tmem{exponential products}}.

\begin{remark}\label{Rem:deformation-map}
	Observe that the exponential product in Equation~\eqref{Eq:deform-quantisation} can be conveniently rewritten as
	\begin{equation}\label{Eq: algebra homomorphism}
		F \star_{\hbar K} G = \Gamma_{\hbar K}  [\Gamma_{\hbar K}^{- 1} (F)
		\Gamma_{\hbar K}^{- 1} (G)],
	\end{equation}
	where we introduced the deformation map $\Gamma_{\hbar K}:\mathcal{F}_{\mu c}(\mathbb{M})\rightarrow \mathcal{F}_{\mu c}(\mathbb{M})$ defined as
	\begin{equation}\label{Eq: star product}
		\Gamma_{\hbar K} = e^{\frac{1}{2} \mathcal{D}_{\hbar K}},
		\qquad \mathcal{D}_{\hbar K} = \left\langle \hbar K,
		\frac{\delta^2}{\delta \varphi^2} \right\rangle = \int_{\mathbb{M}^2}
		\mathd \mu_x \mathd \mu_y \hbar K (x, y) \frac{\delta^2}{\delta \varphi
			(x) \delta \varphi (y)} .
	\end{equation}
\end{remark}

Observe that Equation \eqref{Eq:deform-quantisation} is a priori only a formal expression on account of two potential, distinct issues:
\begin{enumerate}
	\item since $K\in\mathcal{D}^\prime(\mathbb{M}\times\mathbb{M})$, the action of $D_{\hbar K}$ and of its powers might be ill-defined. This is subordinated to the singular structure both of $K$ and of the functional derivatives of the underlying functionals. 
	As we shall see, in our investigation, being the background of low dimension, this will not be an issue. More in general, this is handled by resorting, if necessary, to a renormalization procedure.
	\item the action $\mathcal{M} \circ e^{D_{\hbar K}}$ yields a priori only a formal power series in $\hbar$ unless one proves convergence with respect to a suitable topology or considers polynomial functionals which entail that only a finite number of non vanishing contributions exist.  
\end{enumerate}

\begin{remark}\label{Rem: Advanced and Retarded Propgators}
	In the preceding discussion, particularly in defining $\mathcal{F}(\mathbb{M})$ and its distinguished subspaces as per Definition \ref{Def:functionals}, we have only considered kinematic configurations $\varphi\in\mathcal{E}(\mathbb{M})$ and no information on an underlying dynamics has been assumed. Yet, in many concrete models, among which the sine-Gordon, one assumes that the field abides by suitable equations of motions of the form 
	\begin{equation}\label{Eq: nonlinear dynamics}
		P\varphi+V^{(1)}[\varphi]=0,
	\end{equation}
	where $V:\mathbb{R}\to\mathbb{R}$ is a non linear potential while $P$ is a normally hyperbolic operator, see \cite{Baer}. In the following and in view of the application to the Lorentzian sine-Grordon model, we shall always assume that an underlying dynamics of the form of Equation \eqref{Eq: nonlinear dynamics} has been chosen. For definiteness a reader can think of $P$ as being the Klein-Gordon operator, namely $P=\Box-m^2$ where $\Box$ is the d'Alembert wave operator built out of the underlying metric, although such assumption is not strictly necessary as far as the content of this section is concerned. 
	
	This entails that, being $\mathbb{M}$ globally hyperbolic, there exists unique advanced and retarded fundamental solutions $\Delta^{A/R}:\mathcal{D}(\mathbb{M})\to\mathcal{E}(\mathbb{M})$ such that $\textrm{supp}(\Delta^{A/R}(f))\subseteq J^\mp(\textrm{supp}(f))$ for all $f\in\mathcal{D}(\mathbb{M})$. 
	
	As discussed, {\it e.g.} in \cite{BGP}, these propagators are the building block for implementing in a covariant way the canonical commutation relations (CCRs). More precisely, with reference to Equation \eqref{Eq:deform-quantisation}, we can make the identification $K =\frac{i}{2} \Delta$, where $\Delta \assign \Delta^R - \Delta^A$ again is the {\tmem{causal propagator}}. As a matter of fact, one can observe that, considering the smeared linear fields as per Example \ref{Ex: basic functionals}, it holds that $\forall f_1, f_2 \in \mathcal{D} (\mathbb{M})$
	\begin{equation}\label{Eq:CCR}
		[\Phi_{f_1}, \Phi_{f_2}]_{\star_{i 2^{- 1} \hbar \Delta}} = \Phi_{f_1}
		\star_{i 2^{- 1} \hbar \Delta} \Phi_{f_2} - \Phi_{f_2} \star_{i 2^{- 1}
			\hbar \Delta} \Phi_{f_1} = i\hbar\{\Phi_{f_1}, \Phi_{f_2}\}= i \hbar \langle f_1, \Delta f_2 \rangle, . 
	\end{equation}
where we have also introduced the symbol $\{,\}$ denoting the classical {\bf Poisson brackets}. 
\end{remark}


Observe that, if we stick with the choice of $K=\frac{i}{2}\Delta$ as deformation kernel, the product $\star_{i 2^{- 1} \hbar \Delta}$ as per Equation \eqref{Eq:deform-quantisation} is well-defined provided that we consider functionals lying in $\mathcal{F}_{\tmop{reg}}^p(\mathbb{M})$ as per Remark \ref{Rem: Polynomial Functionals}. Yet, such class is too small to encompass all relevant observables proper of a quantum field theory and, therefore, one needs to account for more singular functionals. Observe that \cite{Radzikowski:1996ei}
\begin{equation}\label{Eq: WF Delta}
	\textrm{WF}(\Delta)=\{(x,k_x,y,-k_y)\in T^*\left(\mathbb{M}\times\mathbb{M}\right)\setminus\{0\}\;|\;(x,k_x)\sim (y,k_y)\},
\end{equation}
where $\sim$ entails that the point $x$ and $y$ are connected by a lightlike geodesic $\gamma$ such that $k_x$ is coparallel to $\gamma$ at $x$ while $k_y$ is obtained by means of a parallel transport of $k_x$ along $\gamma$. This implies that, in general, given $F,G\in\mathcal{F}_{\mu c}(\mathbb{M})$, $F\star_{i 2^{- 1} \hbar \Delta} G$ is ill-defined at a microlocal level. This hurdle can be circumvented by means of a {\tmem{normal
		ordering procedure}} \cite{DDR20,Hollands-Wald-01}. Restricting to regular functionals, it amounts to considering, at the level of formal power series in $\hbar$,
\begin{equation}\label{Eq: normal ordering}
	:F:_{\omega} \assign \Gamma^{- 1}_{\hbar\omega} F \in
	\mathcal{F}_{\tmop{reg}} \llbracket \hbar \rrbracket, \qquad \qquad F \in
	\mathcal{F}_{\tmop{reg}},
\end{equation}
where $\Gamma_{\hbar \omega}$ is defined as in Equation \eqref{Eq: star product}. Here $\omega \in \mathcal{D}' (\mathbb{M}\times\mathbb{M})$ denotes the so-called
{\tmem{Hadamard parametrix}}, characterized by $(P\otimes\mathbb{I})\omega=(\mathbb{I}\otimes P)\omega\in C^\infty(\mathbb{M}\times\mathbb{M})$ and whose anti-symmetric part is $i 2^{- 1} \Delta$ . Moreover, it satisfies the {\tmem{microlocal spectrum condition}}, namely
\begin{equation}
	\tmop{WF} (\omega) = \{(x, y ; k_x, k_y) \in T^{\ast} \mathbb{M}^2 \setminus
	\{0\}\;\barsuchthat\; (x, k_x) \sim (y, k_y), k_x \vartriangleright 0\},
	\label{Eq:msc}
\end{equation}
where $\sim$ has the same meaning as in Equation \eqref{Eq: WF Delta}, while $k_x
\vartriangleright 0$ entails that the covector $k_x$ is future pointing. This condition on the wavefront set of the Hadamard parametric entails that Equation \eqref{Eq: normal ordering} is well-defined also for microcausal functionals $\mathcal{F}_{\mu c}(\mathbb{M})$ as in Definition \ref{Def:functionals}. This feature is thoroughly examined in \cite{Fredenhagen:2012sb} and therefore we omit entering into the technical details. To conclude this succinct overview, we highlight the following two comments:
\begin{enumerate}
	\item In view of the preceding comment we shall denoted by $\mathcal{F}_{\mu c}\llbracket\hbar\rrbracket$ the normal ordered space of microcausal functionals and the ensuing algebra as the triple $(\mathcal{F}_{\mu c}\llbracket\hbar\rrbracket,\star_{\hbar\omega},\ast)$
	\item most notably the product $\star_{\hbar\omega}$ does not affect the canonical commutation relations, since, in view of Example \ref{Ex: basic functionals}, for any $f_1,f_2\in\mathcal{D}(\mathbb{M})$, it holds that
	\[ [\Phi_{f_1}, \Phi_{f_2}]_{\star_{\hbar\omega}} = i \hbar \langle f_1,
	\Delta f_2 \rangle. \]
\end{enumerate}

\subsubsection{S-matrix and Bogoliubov map}\label{Sec: time-ordered stuff}
Keeping in mind Equation \eqref{Eq: nonlinear dynamics}, the algebras of functionals defined in the previous section do not allow to account for the non linear contribution encoded by $V^{(1)}[\varphi]$. To encompass this information a key structure in the perturbative approach to quantum field theory is the $S$-matrix and the associated Bogoliubov map. In order to keep the length of this paper at bay, we shall not discuss the whole framework in detail, leaving an interested reader to \cite{BFDY15,Dutsch-Fredenhagen-AQFT-perturb-th-loop-exp,Rejzner-pAQFT} for a thorough analysis. We shall limit ourselves to introducing all the relevant structures necessary for our analysis to be self-consistent.

The starting point is the {\em time ordered product} which is here defined in terms of a {\tmem{time ordering map}} $\mathcal{T}$ acting on the space of multi-local functionals, which are tensor products of elements lying in $\mathcal{F}_{\tmop{loc}}(\mathbb{M})$ as in Definition \ref{Def:functionals}. More precisely $\mathcal{T}$ is constructed out of a family of multi-linear maps
\[ \mathcal{T}_n : \mathcal{F}_{\tmop{loc}}^{\otimes n}(\mathbb{M}) \rightarrow
   \mathcal{F}_{\mu c}(\mathbb{M}), \]
which satisfy the constraints $\mathcal{T}_0 = 1$ and $\mathcal{T}_1 = \tmop{id}$. The link between $\mathcal{T}$ and $\mathcal{T}_n$ is codified by the identity
\begin{equation}
	\mathcal{T} \left(
	\prod_{j = 1}^n F_j \right) =\mathcal{T}_n \left( \bigotimes_{j = 1}^n F_j
	\right).
\end{equation} 
In addition, one requires the maps $\mathcal{T}_n$ to be such that $\mathcal{T}$ is symmetric and to satisfy a {\tmem{causal
factorization property}}. This can be stated as follows: consider $\{F_i \}_{i=1,\ldots, n}, \{G_j \}_{j=1,\ldots, m} \subset
\mathcal{F}_{\tmop{loc}}(\mathbb{M})$ two arbitrary families of local functionals such that $F_i \gtrsim G_j$ for any $i, j$,w where the symbol $\gtrsim$ entails that $\tmop{supp} (F_i) \cap J^- (\tmop{supp} (G_j)) = \emptyset$, where the support of a functional is as per Equation \eqref{Eq: Support of a Functional}. It descends that
\begin{equation}\label{Eq:factoriz-time-ordering}
  \mathcal{T} \left( \bigotimes_i F_i  \bigotimes_j G_j \right) =\mathcal{T}
  \left( \bigotimes_i F_i \right) \star \mathcal{T} \left( \bigotimes_j G_j
  \right), 
\end{equation}
where $\star$ is defined in Equation \eqref{Eq:deform-quantisation}. We stress that, whenever the hypotheses at the heart of Equation \eqref{Eq:factoriz-time-ordering} are not met, one needs to devise a suitable extension criterion for $\mathcal{T}$, which requires in turn a renormalization procedure. In this work we abide by the Epstein-Glaser inductive procedure {\cite{Hollands-Wald-02,DDR20, Epstein-Glaser-73}}, although, as highlighted in {\cite{BPR23-Sine-Gordon-Massive, Bahns-Rejzner_Sine-Gordon}}, when one considers two-dimensional self-interacting, scalar field theories, renormalization is unnecessary if one restricts the attention to local functionals not containing derivatives of the field. 
As a consequence, for these classes of local functionals the causal factorization property suffices to fully determine the map $\mathcal{T}$.

As discussed in {\cite{BPR23-Sine-Gordon-Massive,Rejzner-pAQFT}}, if we work with the algebra $(\mathcal{F}_{\mu c}\llbracket\hbar\rrbracket,\star_{\hbar\omega},\ast)$, an explicit realization of the time-ordering map is completely characterized by the identity
\begin{equation}\label{Eq: time-ordered product}
	 \mathcal{T}^{\hbar \omega_F}  (F_1 \otimes \ldots \otimes F_n)\assign
	  F_1 \star_{\hbar \Delta_F} \ldots \star_{\hbar \Delta_F} F_n \assign
	\mathcal{M} \circ e^{\sum_{\ell < j}^n D^{\ell j}_{\hbar \Delta_F}}  (F_1
	\otimes \ldots \otimes F_n), 
\end{equation}
where $F_i\in\mathcal{F}_{\mu c}\llbracket\hbar\rrbracket$, for all $i=1,\dots,n$ while $\Delta_F\in\mathcal{D}'(\mathbb{M}\times\mathbb{M})$ denotes the {\tmem{Feynman parametrix}}, linked to $\omega\in\mathcal{D}'(\mathbb{M}\times\mathbb{M})$ and to the advanced propagator $\Delta^A\in\mathcal{D}'(\mathbb{M}\times\mathbb{M})$ associated to $P$ as per Remark \ref{Rem: Advanced and Retarded Propgators} via the defining relation
\begin{equation}\label{Eq: feynman propagator}
	\Delta_F = \omega + i \Delta^A .
\end{equation}
In addition in Equation \eqref{Eq: time-ordered product} $\mathcal{M}$ is defined as in Equation \eqref{Eq: tensor product} while 
\[ D^{\ell j}_{\hbar\omega_F}\assign \left\langle \hbar\omega_F, \frac{\delta^2}{\delta \varphi_{\ell}
   \delta \varphi_j} \right\rangle, \]
which is manifestly symmetric under exchange of $i$ and $\ell$. We have all the data necessary to define two key ingredients in the perturbative investigation of a self-interacting, scalar field theory whose dynamics is ruled by Equation \eqref{Eq: nonlinear dynamics}:
\begin{enumerate}
	\item the $S$-matrix as
\begin{equation} \label{Eq:S-matrix}
  S (\lambda V) \assign\exp_{\cdot_{\hbar\Delta_F}}\left(\frac{i}{\hbar}\lambda V\right)\assign \sum_{n \geqslant 0} \frac{1}{n!} \left( \frac{i
  \lambda}{\hbar} \right)^n \underbrace{V \star_{\Delta_F} \ldots \star_{\Delta_F} V}_{n} =
  \sum_{n \geqslant 0} \frac{1}{n!} \left( \frac{i \lambda}{\hbar} \right)^n
  \mathcal{T}^{\hbar \Delta_F} (V^{\otimes n}), 
\end{equation}
where here with $V\equiv V[\varphi]$ we denote the interaction potential lying in $\mathcal{F}_{\tmop{loc}}(\mathbb{M})$. \tmtextit{A priori}, the $S$-matrix is a formal power series in the coupling constant $\lambda$ and a Laurent series in $\hbar$. 
\item the interacting classical field, which is a perturbative solution of Equation~\eqref{Eq: nonlinear dynamics} with
vanishing initial conditions, written as a formal power series in $\lambda$ with
coefficients lying in $\mathcal{F}_{\tmop{loc}}[\mathbb{M}]$:
\begin{equation}\label{Eq:first-classical-obs}
	r_{\lambda V_g} (\varphi) (x) = \sum_{n \geqslant 0} \lambda^n \int_{t_1
		\leqslant \ldots \leqslant t_n \leqslant t} \mathd \mu_{x_1} \ldots \mathd
	\mu_{x_n} g (x_1) \ldots g (x_n) \{ V  (x_1), \{ V (x_2), \ldots \{ V (x_n),
	\varphi (x) \} \ldots \} \}, 
\end{equation}
where $d\mu_x$ is the metric induced measure, the curly brackets are the classical Poisson brackets as per Remark \ref{Rem: Advanced and Retarded Propgators}, while $g\in\mathcal{D}(\mathbb{M})$ is a cut-off function which is introduced to avoid infrared divergences. Observe that $r_{\lambda V_g}$ is also referred to as classical M\"oller map {\cite{Hawkins-Rejzner-star-product,Rejzner-pAQFT}}.
\end{enumerate}

\begin{example}
The formal expression in Equation \eqref{Eq:first-classical-obs} is analogous to the standard notion of
perturbative solution of a nonlinear partial differential equation adopted, {\tmem{e.g.}}, in {\cite{DDRZ21}}. As an example, consider
  \[ V_g (\varphi) = \frac{1}{4} \int_{\mathbb{M}} \mathd \mu_x\, g (x)
     \varphi^4 (x) . \]
The perturbative solution is defined in terms of the following power series in the coupling constant $\lambda$
  \[ \Phi \llbracket \lambda \rrbracket (\varphi) (x) = \sum_{n \geqslant 0}
     \lambda^n F_n (\varphi) (x), \]
  where
  \[ F_0 (\varphi)(x) = \Phi (\varphi) (x) = \varphi (x), \qquad \qquad F_n
     (\varphi) (x) \assign - \sum_{j_1 + j_2 + j_3 = n - 1} \mathLaplace^R
     \ast [F_{j_1} (\varphi) F_{j_2} (\varphi) F_{j_3} (\varphi)] (x), \]
    where $\ast$ here denotes the convolution while $\mathLaplace^R$ denotes the retarded fundamental solution associated to $P$ as in Remark \ref{Rem: Advanced and Retarded Propgators}.  By direct inspection, it can be seen that, order by order in $\lambda$, it holds that
  \[ \Phi \llbracket \lambda \rrbracket (\varphi) (x) = r_{\lambda V_g}
     (\varphi) (x) . \]
\end{example}

The information carried by the S-matrix and by the classical M\o ller map can be brought together extending Equation \eqref{Eq:first-classical-obs} to the whole algebra of observables of the theory. More precisely, given $F \in \mathcal{F}^{\otimes m}_{\tmop{loc}}(\mathbb{M})$ and, adopting the notation $Y = (y_1, \ldots, y_m)\in\mathbb{M}^m$, the associated interacting classical observable can be written as
\begin{equation}
  r_{\lambda V_g} (F) (Y) = \sum_{n \geqslant 0} \lambda^n \int_{t_1 \leqslant
  \ldots \leqslant t_n \leqslant t} \mathd \mu_{x_1} \ldots \mathd \mu_{x_n} g
  (x_1) \ldots g (x_n) \{ V  (x_1), \{ V (x_2), \ldots \{ V (x_n), F (Y) \}
  \ldots \} \} . \label{Eq:first-classical-generic-obs}
\end{equation}
Having established the algebraic formulation of a classical interacting field theory, an analogous procedure carries over to the quantum scenario, with the notable difference that one needs to work with $(\mathcal{F}_{\mu c}\llbracket\hbar\rrbracket,\star_{\hbar H},\ast)$. To this end we introduce the {\tmem{Bogoliubov map}} $R_{\lambda V}$ associated with the interaction $V\in\mathcal{F}_{loc}(\mathbb{M})$, which maps any observable of the free field theory whose dynamics is ruled by $P$ into its interacting quantum counterpart \cite{Rejzner-pAQFT}. The action on any $F \in
\mathcal{F}_{\tmop{loc}}^{\otimes m}(\mathbb{M})$ reads as 
\begin{equation}
  R_{\lambda V} (F) \assign - i \hbar \frac{d}{d \alpha} S (\lambda
  V)^{\star_{\hbar\omega} - 1} \star_{\hbar\omega} S (\lambda V + \alpha F)
  |_{\alpha = 0} = S (\lambda V)^{\star_{\hbar\omega} - 1} \star_{\hbar\omega}  (S (\lambda V) \star_{\omega_F} F), \label{Eq:Bogoliubov map}
\end{equation}
where $S (\lambda V)^{\star_{\hbar \omega} - 1}$ denotes the inverse (in the
sense of formal power series) of $S (\lambda V)$ with respect to the product
$\star_{\hbar\omega}$. 

\begin{remark}\label{Rem:inverse-S-matrix-expansion}
For later convenience we need to give an explicit expression of $S (\lambda V)^{\star_{\hbar \omega} - 1}$. This requires the anti-Feynman parametrix 
\begin{equation}\label{Eq: anti-feynman}
 	\Delta_{\tmop{AF}} = \omega - i \Delta^R,
\end{equation}
where $\Delta^R$ is the retarded fundamental solution associated to $P$ as per Remark \ref{Rem: Advanced and Retarded Propgators}. Consequently it holds that
  \[ S (\lambda V)^{\star_{\hbar \omega} - 1} \assign \sum_{n \geqslant 0}
     \frac{1}{n!} \left( -\frac{i \lambda}{\hbar} \right)^n \underbrace{V \star_{\hbar
     \Delta_{\tmop{AF}}} \ldots \star_{\hbar \Delta_{\tmop{AF}}} V}_{n} = \sum_{n
     \geqslant 0} \frac{1}{n!} \left(-\frac{i \lambda}{\hbar} \right)^n
     \mathcal{T}^{\hbar \Delta_{\tmop{AF}}} (V^{\otimes n}).\]
\end{remark}
As a consequence, Equation~\eqref{Eq:Bogoliubov map} can be written as a
formal power series in the coupling constant $\lambda$:
\[ R_{\lambda V} (F) = \sum_{n \geqslant 0} \frac{\lambda^n}{n!} R_{n, m}
   (V^{\otimes n}, F), \]
where $R_{n,m} (V^{\otimes n}, F)$ are the so-called {\tmem{retarded products}}
\[ R_{n, m} (V^{\otimes n}, F) = \left( \frac{i}{\hbar} \right)^n \sum_{\ell =
   0}^n \binom{n}{\ell} (- 1)^{\ell} \mathcal{T}_{\ell}^{\hbar
   \Delta_{\tmop{AF}}} (V \otimes \ldots \otimes V) \cdot_{\hbar \omega}
   \mathcal{T}_{n - \ell, m}^{\hbar \Delta_F} (V \otimes \ldots \otimes V
   \otimes F). \]
The notation $\mathcal{T}_{n - \ell, m}^{\hbar\Delta_F} (V \otimes
\ldots \otimes V \otimes F)$ keeps track of the fact that its argument
contain $n - \ell$ copies of the interaction $V$ and that $F$ is a
multi-local functional of order $m \in \mathbb{N}$. Hence, on account of Equation \eqref{Eq: time-ordered product}
\[ \mathcal{T}_{n - \ell, m}^{\hbar\Delta_F} (V \otimes \ldots \otimes V
   \otimes F) = V \cdot_{\hbar\Delta_F} \ldots \cdot_{\hbar\Delta_F} V
   \cdot_{\hbar\Delta_F} F. \]
\begin{remark}\label{Rem:supp-retarded-products}
An important feature of the retarded products, which underpins their name, concerns their support properties. More precisely, $R_{n,m}(V^{\otimes n}, F)$ vanishes if al least one of the first $n$ arguments is not supported in the past light cone of one of the last $m$ ones {\cite{Dutsch-Fredenhagen-AQFT-perturb-th-loop-exp,Epstein-Glaser-73}}.
\end{remark}

\subsection{Microlocal approach to SPDEs}\label{Sec:microlocal-approach-to-spdes}

In this section we give a succinct overview of a recent approach to the construction at a perturbative level of both the expectation value of  solutions and the correlation functions of a class of nonlinear stochastic partial differential equations (SPDEs), first introduced in \cite{DDRZ21} for scalar theories and later extended in \cite{BDR23} to the analysis of the nonlinear stochastic Schr\"odinger equation and in \cite{BCDR23} to that of spinors. This framework can be seen as a transliteration to the analysis of SPDEs of the algebraic approach to quantum field theory outlined in Section \ref{Sec:interacting-AQFT} and, as such, it has the main advantage of allowing to encode all renormalization ambiguities intrinsically without resorting to any specific $\epsilon$-regularization scheme. The reason lies in the possibility of adapting to the case in hand the microlocal approach to Epstein-Glaser renormalization, see {\it e.g.}, \cite{DDR20, Hollands-Wald-02, Keller:2009nv}. 

It is worth emphasizing that, in \cite{BCDR23, BDR23, DDRZ21}, the focus has always been on the analysis of elliptic or parabolic SPDEs and therefore we feel worth revisiting the content of these works when the linear part of the dynamics is ruled by an hyperbolic partial differential operator. For this reason, in this section, we start by considering a simple toy model, namely a linear, scalar, SPDE on $\mathbb{R}^d$ endowed with the flat Minkowski metric of signature $(-,\underbrace{+,\dots,+}_{d-1})$:
\begin{equation}\label{Eq: linear stochastic equation}
	(\Box - m^2)  \hat{\psi} = \chi \hat{\xi},
\end{equation}
with vanishing initial condition. Here $\Box$ is the d'Alembert wave operator, while $m^2> 0$ is a fixed parameter, which can be interpreted in concrete models as playing the r\^{o}le of a mass term. Furthermore $\hat{\xi}$ denotes a space-time white noise as per Equation \eqref{Eq: covariance white noise_intro}. The last ingredient $\chi : \mathbb{M} \rightarrow \mathbb{R}$ is the smooth cutoff function
\[ \chi (t, x) = \tilde{\chi} (t) 1 (x), \] 
where $\tilde{\chi} : \mathbb{R} \rightarrow \mathbb{R}$ is a positive smooth function such that there exists $T \in \mathbb{R}$ for which
\begin{equation}\label{Eq: time cutoff}
	\tilde{\chi} (t) = \left\{ \begin{array}{l}
		0 \qquad \tmop{if} \quad t < T,\\
		1 \qquad \tmop{if} \quad t \geqslant T + 1.
	\end{array} \right.
\end{equation}
Observe that $\Box+m^2$ plays the r\^{o}le of the operator $P$ in Remark \ref{Rem: Advanced and Retarded Propgators} and, in particular, we can associate to it unique advanced and retarded fundamental solutions, still denoted by $\Delta^A$ and $\Delta^R$ respectively. The latter allows to solve Equation \eqref{Eq: linear stochastic equation} forward in time as, with vanishing initial condition for simplicity,
\[ \hat{\phi} = \Delta^R  (\chi\hat{\xi}). \]
We can infer that the solution $\hat{\phi}$ is a Gaussian random distribution with vanishing mean, while, given $f, f^\prime \in \mathcal{D} (\mathbb{R}^d)$, the covariance reads
\begin{align}\label{Eq:first-app-Q}
\nonumber
    Q (f, f^\prime) \assign & \int_{\mathbb{R}^d\times\mathbb{R}^d} \mathd \mu_z \mathd \mu_{z^\prime}
    \mathbb{E} [\hat{\phi} (z) \hat{\phi} (z^\prime)] f (z) f (z^\prime)\\
    \nonumber
    = & \int_{\mathbb{R}^d\times\mathbb{R}^d} \mathd \mu_z \mathd \mu_{z^\prime} 
    \int_{\mathbb{R}^d} \mathd z_1 \chi^2 (t_1) \Delta^R  (z - z_1) \Delta^A  (z_1 -
    z^\prime) f (z) f (z^\prime)\\
     = & \int_{\mathbb{R}^d\times\mathbb{R}^d} \mathd \mu_z \mathd \mu_{z^\prime} 
    \int_{\mathbb{R}^d} \mathd z_1 \chi^2 (t_1) \Delta^R  (z - z_1) \Delta^R  (z^\prime
    - z_1) f (z) f (z^\prime),
\end{align}
where, in the last equality, we used the following structural properties of the advanced and retarded fundamental solutions associated to a second order, hyperbolic, partial differential operator on a globally hyperbolic manifold $\mathbb{M}$, $\langle h, \Delta^R h^\prime\rangle=\langle\Delta^A h, h^\prime\rangle$ for all $f,f^\prime\in\mathcal{D}(\mathbb{M})$, see \cite{BGP}.

Observe that Equation \eqref{Eq:first-app-Q} entails that the covariance $Q\in\mathcal{D}^\prime(\mathbb{R}^d\times\mathbb{R}^d)$ can be written as 
\begin{equation} \label{Eq:def-of-Q}
	Q \assign \Delta^R \circ_{\chi} \Delta^A,
\end{equation}
where the symbol $\circ$ denotes the composition of distributions, while the subscript $\chi$ keeps track of the cut-off function. A direct application of \cite[Thm. 8.2.14]{Hormander-83} entails that 
\begin{equation}\label{Eq: WF Q}
	\tmop{WF} (Q) \subseteq \tmop{WF} (\Delta^R)=\{(z,k_z,z^\prime,k_{z^\prime})\in T^*\left(\mathbb{R}^d\times\mathbb{R}^d\right)\setminus\{0\}\;|\;(z,k_z)\sim_+(z^\prime,-k_{z^\prime})\}\cup\,\textrm{WF}(\delta_2), 
\end{equation}
where $\sim_+$ entails that $z$ and $z^\prime$ are connected by a lightlike geodesic $\gamma$, $-k_y$ is the parallel transport of $k_x$ along $\gamma$ and $z\in J^+(z^\prime)$. In addition $\delta_2$ is the Dirac delta along the diagonal of $\mathbb{R}^d\times\mathbb{R}^d$ and, for the sake of completeness, we recall that
$$\mathrm{WF}(\delta_2)=\{(z,k_z,z,-k_z)\in T^*(\mathbb{R}^d\times\mathbb{R}^d)\setminus\{0\}\}.$$
In addition the covariance $Q$ enjoys the property of being positive, a feature which will be used extensively in the following sections.
\begin{lemma}\label{Lem:positivity-of-Q}
	The bi-distribution $Q
	\in \mathcal{D}' (\mathbb{R}^d\times\mathbb{R}^d)$ as in Equation \eqref{Eq:def-of-Q} is positive, namely, for all $f \in
	\mathcal{D} (\mathbb{R}^d)$,
	\[ Q (f, f) \geqslant 0. \]
\end{lemma}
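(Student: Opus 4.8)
The plan is to read off positivity directly from the explicit kernel representation of $Q$ given in the last line of Equation \eqref{Eq:first-app-Q}, which already displays $Q(f,f')$ as a convolution of $f$ against $f'$ through two retarded fundamental solutions sharing the integration variable $z_1$. Concretely, I would set $f'=f$ and reorganise the triple integral so that the two spacetime integrations over $z$ and $z'$ decouple:
\begin{equation*}
  Q(f,f) = \int_{\mathbb{R}^d}\mathd z_1\,\chi^2(t_1)\left[\int_{\mathbb{R}^d}\mathd\mu_z\,\Delta^R(z-z_1)f(z)\right]\left[\int_{\mathbb{R}^d}\mathd\mu_{z'}\,\Delta^R(z'-z_1)f(z')\right].
\end{equation*}
The crucial observation is that the two bracketed factors are one and the same function of $z_1$.

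Next I would identify that inner integral in closed form. Using the duality relation $\langle h,\Delta^R h'\rangle=\langle\Delta^A h,h'\rangle$ recalled just before Equation \eqref{Eq:def-of-Q}, which at the level of kernels reads $\Delta^R(z-z_1)=\Delta^A(z_1-z)$, one gets
\begin{equation*}
  v(z_1):=\int_{\mathbb{R}^d}\mathd\mu_z\,\Delta^R(z-z_1)f(z)=(\Delta^A f)(z_1),
\end{equation*}
so that
\begin{equation*}
  Q(f,f)=\int_{\mathbb{R}^d}\mathd z_1\,\chi^2(t_1)\,\bigl[(\Delta^A f)(z_1)\bigr]^2.
\end{equation*}
For a real-valued $f$ the function $v=\Delta^A f$ is real, since the retarded and advanced fundamental solutions of $\Box-m^2$ have real kernels; hence the integrand is the product of the nonnegative factors $\chi^2(t_1)\geq 0$ and $v(z_1)^2\geq 0$, and the integral is nonnegative. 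For complex test functions the same computation applied to $Q(\bar f,f)$ produces $\int\chi^2|\Delta^A f|^2\geq 0$. As a consistency check, this is nothing but the probabilistic statement $Q(f,f)=\mathbb{E}[\hat\phi(f)^2]\geq 0$, with $\hat\phi$ the Gaussian solution of Equation \eqref{Eq: linear stochastic equation}.

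The actual computation is short, so the genuine work lies in justifying the formal manipulations, and this is where I expect the only real obstacle. First, since $f\in\mathcal{D}(\mathbb{R}^d)$ and $\Delta^A:\mathcal{D}(\mathbb{R}^d)\to\mathcal{E}(\mathbb{R}^d)$, the function $v=\Delta^A f$ is smooth, so $\chi^2 v^2$ is a genuine function rather than a formal product of kernels. Second, I must check that the reorganised integral converges: the causal support property of $\Delta^A$ confines $\mathrm{supp}(v)$ to the causal region attached to $\mathrm{supp}(f)$, and intersecting this with $\mathrm{supp}(\chi)\subseteq\{t\geq T\}$ coming from the time cut-off \eqref{Eq: time cutoff} yields a compact set, on which $\chi^2 v^2$ is integrable. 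This compactness is also what licenses Fubini's theorem in passing from the original triple integral to the factorised form. Once this localisation is in place, nonnegativity of $Q(f,f)$ is immediate.
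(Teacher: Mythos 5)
Your proposal is correct and follows essentially the same route as the paper's proof: both factorise the triple integral at the common vertex carrying $\chi^2$, use the kernel identity $\Delta^R(z-z_1)=\Delta^A(z_1-z)$ to recognise both inner integrals as $(\Delta^A f)$ evaluated at that vertex, and conclude $Q(f,f)=\|\chi\,\Delta^A f\|^2_{L^2(\mathbb{R}^d)}\geq 0$. The extra remarks on support properties and Fubini are sound but constitute additional justification of the same argument, not a different approach.
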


\begin{proof}
	Working at the level of integral kernels, it holds that
	\begin{align*}
		Q (f, f)  = & \int_{\mathbb{R}^{3d}} \mathd \mu_{z_1} \mathd \mu_{z_2}
		\mathd \mu_{z_3} \chi^2 (z_2) \Delta^R  (z_1 - z_2) \Delta^A  (z_2 -
		z_3) f (z_1) f (z_3)\\
		= & \int_{\mathbb{R}^d} \mathd \mu_{z_2} \chi^2 (z_2) \left(
		\int_{\mathbb{R}^d} \mathd \mu_{z_1} \Delta^R (z_1 - z_2) f (z_1) \right)
		\left( \int_{\mathbb{R}^d} \mathd \mu_{z_3} \Delta^A (z_2 - z_3) f (z_3)
		\right)\\
		= & \int_{\mathbb{R}^d} \mathd \mu_{z_2} \chi^2 (z_2) \left(
		\int_{\mathbb{R}^d} \mathd \mu_{z_1} \Delta^A (z_2 - z_1) f (z_1) \right)
		\left( \int_{\mathbb{R}^d} \mathd \mu_{z_3} \Delta^A (z_2 - z_3) f (z_3)
		\right)\\
		= & \int_{\mathbb{R}^d} \mathd \mu_{z_2} \chi^2 (z_2)  (\Delta^A f)
		(z_2)  (\Delta^A f) (z_2)\\
		= & (\chi (\Delta^A f), \chi (\Delta^A f))_{L^2 (\mathbb{R}^d)}\geq 0,
	\end{align*}
	where, in the third line, we used the identity $\Delta^R (z_1-z_2) = \Delta^A  (z_2- z_1)$.
\end{proof}


Equation \eqref{Eq:first-app-Q} gives a complete control on the solution theory of Equation \eqref{Eq: linear stochastic equation} and, in the following, we reformulate its content using the rationale at the heart of the approach to SPDEs inspired by algebraic quantum field theory, adapting to the case in hand the procedure of~{\cite{DDRZ21}}. More precisely we define the {\tmem{functional-valued distribution}} $F :
\mathcal{E} (\mathbb{R}^d) \times \mathcal{D} (\mathbb{R}^d)\times\mathcal{D}(\mathbb{R}^d)
\rightarrow \mathbb{C}$ on the space of smooth configurations 
\begin{equation}\label{Eq: Example of Functional}
 \mathcal{E} (\mathbb{R}^d) \times \mathcal{D} (\mathbb{R})^d\times\mathcal{D} (\mathbb{R})^d \ni
   (\varphi, f, f^\prime) \mapsto F [\varphi] (f, f^\prime) = \int_{\mathbb{R}^d\times\mathbb{R}^d} \mathd
   \mu_z \mathd \mu_{z^\prime} \varphi (z) \varphi (z^\prime) f (z) f (z^\prime) \in
   \mathbb{C}.
\end{equation}
Observe that, on the one hand the functional $F$, despite being regular, can also be interpreted as an element lying in the space of microcausal functional as per Definition \ref{Def:functionals} which is endowed with an algebra structure in terms of the pointwise product as per Equation \eqref{Eq: pointwise product}. On the other hand, this rationale allows us to follow Equation \eqref{Eq:deform-quantisation} as well as Remark \ref{Rem:deformation-map} to introduce a deformed product encompassing the information carried by the underlying white noise $\hat{\xi}$ by means of the map 
\begin{equation}\label{Eq:GammaQ}
  \Gamma_Q = e^{\frac{1}{2} \mathcal{D}_Q}, \qquad
  \mathcal{D}_Q = \left\langle Q, \frac{\delta^2}{\delta \varphi^2}
  \right\rangle = \int_{\mathbb{R}^d\times\mathbb{R}^d} \mathd \mu_{z} \mathd \mu_{z^\prime} Q (z,
  z^\prime) \frac{\delta^2}{\delta \varphi (z) \delta \varphi (z^\prime)} .
\end{equation}
Applying Equation \eqref{Eq:GammaQ} to Equation \eqref{Eq: Example of Functional} yields
\begin{equation}\label{Eq: almost 2-point}
	(\Gamma_Q F) [\varphi] (f, f^\prime) = \int_{\mathbb{R}^d\times\mathbb{R}^d} \mathd \mu_z \mathd
	\mu_{z^\prime} \varphi (z) \varphi (z^\prime) f (z) f (z^\prime) + \int_{\mathbb{R}^d\times\mathbb{R}^d}
	\mathd \mu_z \mathd \mu_{z^\prime} Q (z, z^\prime) f (z) f (z^\prime), 
\end{equation}
where $Q (z, z^\prime)$ denotes the integral kernel of $Q\in\mathcal{D}'(\mathbb{R}^d\times\mathbb{R}^d)$ introduced in Equation~\eqref{Eq:first-app-Q}. As one can infer by direct inspection, the action of $\Gamma_Q$ is that of encoding the information of the correlation function on the underlying stochastic process. Yet, a last bit of information is missing, namely the expectation value $\mathbb{E}$ corresponds to evaluating Equation \eqref{Eq: almost 2-point} at the configuration $\varphi = 0$. In other words
\[ \tmop{ev}_0 (\Gamma_Q F) (f, f^\prime) \assign (\Gamma_Q F) [0] (f, f^\prime) =
   \int_{\mathbb{R}^d\times\mathbb{R}^d} \mathd \mu_z \mathd \mu_{z^\prime} Q (z, z^\prime) f (z) f (z^\prime)
   = Q (f, f^\prime) . \]
Observe that the procedure, just highlighted, strongly enjoys from $F$ being regular and, if one tries to apply it to a local, non-linear polynomial functional, such as
\[ \Phi^2 [\varphi] (f) = \int_{\mathbb{R}^d} \mathd \mu_z \varphi^2 (z) f (z),
\]
it does not carry over slavishly. A formal application of Equation \eqref{Eq:GammaQ} yields
\[ (\Gamma_Q \Phi^2) [\varphi] (f) = \Phi^2 [\varphi] (f) + \int_{\mathbb{R}^d}
   \mathd \mu_z Q (z, z) f (z) , \]
which is a priori ill-defined since, on account of Equation~\eqref{Eq:first-app-Q}, this would be equivalent to testing $Q$
against a distribution of the form $f (z) \delta (z - z^\prime)$. This yields
\begin{equation}\label{Eq: ill-defined Q}
	\int_{\mathbb{R}^d} \mathd \mu_z Q (z, z) f (z) = \int_{\mathbb{R}^D} \mathd
	\mu_z  \int_{\mathbb{R}^d} \mathd \mu_{z^\prime} \chi^2 (t^\prime)  [\Delta^R (z - z^\prime)]^2
	f (z) = (\chi \Delta^R)^2  (f \otimes 1).
\end{equation}
Yet the last expression in Equation \eqref{Eq: ill-defined Q} encompasses the square of $\chi\Delta^R$ which is not a well-defined distribution due to the singular structure of $\mathLaplace^R$. Bypassing this hurdle requires a renormalization procedure and this feature cannot be ascribed to the hyperbolic nature of the problem in hand, since it occurs also when analyzing elliptic or parabolic models, see {\it e.g.} \cite{BCDR23,BDR23,DDRZ21}. 

Yet it is important to stress that, in lower dimensional scenarios, typically when $d\leq 2$, the singularities encoded in the fundamental solutions of the Klein-Gordon operator are rather mild, being of logarithmic type, and, in this case, a renormalization procedure is not necessary. This is exactly the scenario that we shall consider when working with the sine-Gordon model.

\subsubsection{On the covariance $Q$ in two-dimensional Minkowski spacetime}\label{Sec: about Q}

In the following we shall investigate some notable structural properties of the bi-distribution $Q$ in the case when the spacetime dimension is $d=2$. It is also convenient to consider in this scenario the massless d'Alembert wave operator $\Box$ and, repeating in this case the analysis of Section \ref{Sec:microlocal-approach-to-spdes}, we denote the counterpart of $Q$ as $Q_0 = \Delta^R_0 \circ_{\chi} \Delta^A_0$, $\Delta^{R/A}_0$ being the advanced and retarded fundamental solutions associated to $\Box$. Denoting with $z=(t,x)$ an arbitrary point of $\mathbb{R}^2$, the integral kernel of $\Delta^R_0$ reads \cite{Bahns-Rejzner_Sine-Gordon}
\begin{equation}\label{Eq: Prop in 2D}
\Delta^R_0 (z) = - \frac{1}{2} \theta (t - |x|) = - \frac{1}{2} \theta (t -
   x) \theta (t + x),
\end{equation}
where $\theta$ denotes the Heaviside step function. The counterpart in the
massive case reads instead {\cite{BPR23-Sine-Gordon-Massive}}
\[ \Delta^R (z) = \Delta^R_0 (z) + \left( 1 - I_0  \left( m \sqrt{z^2} \right)
   \right) \Delta^R_0 (z) = \left( 2 - I_0  \left( m \sqrt{z^2} \right)
   \right) \Delta^R_0 (z), \]
where $I_0$ is the modified Bessel function of first kind, while $z^2 = -
t^2 + x^2$ is the Minkowskian square. Since $\textrm{supp}(\Delta^R_0(z))\subseteq J^+(0)$, we consider only the region $-
t^2 + x^2 \leqslant 0$ and, thereon, $I_0  \left( m \sqrt{z^2}
\right) = I_0  \left(i m \sqrt{|z^2 |} \right)$. Since $I_0  \left( m\sqrt{z^2 } \right)$ is a real, smooth, damped
oscillating function in $z$ with $\sup_z \left| I_0  \left( m \sqrt{z^2}
\right) \right| = 1$, it holds that
\begin{align}\label{Eq: conto Q}
     Q (z, z^\prime)  = & \int_{\mathbb{R}^2} \mathd \mu_{\tilde{z}} \chi^2
     (\tilde{z}) \Delta^R (z - \tilde{z}) \Delta^R (z^\prime - \tilde{z})\nonumber\\
      = & \int_{\mathbb{R}^2} \mathd \mu_{\tilde{z}} \chi^2 (\tilde{z})
     \Delta_0^R (z - \tilde{z}) \Delta_0^R (z^\prime - \tilde{z}) \left( 2 - I_0
     \left( im \sqrt{| (z - \tilde{z})^2 |} \right) \right) \left( 2 - I_0
     \left( im \sqrt{| (z^\prime - \tilde{z})^2 |} \right) \right) .
   \end{align}
Since $\chi^2 (\tilde{z})$ has past-compact support according to Equation \eqref{Eq: time cutoff} and since  Equation \eqref{Eq: Prop in 2D} entails that
\[ \Delta_0^R (z - \tilde{z}) \Delta_0^R (z^\prime - \tilde{z}) = 1_{\{J^{+} (z)
   \cap J^- (z^\prime)\}} (\tilde{z}), \]
where $1_{\{J^{- 1} (z) \cap J^- (z_1)\}} (z')$ denotes the characteristic
function on the subset $J^{- 1} (z) \cap J^- (z^\prime) \subset \mathbb{R}^2$, we can conclude that the integral in Equation \eqref{Eq: conto Q} is convergent. In addition, $Q (z, z^\prime)$ is a continuous and positive function, which is translation invariant along the space direction. These data can be summarized in the following lemma

\begin{lemma}\label{Lem: On the regularity of Q}
 On the two-dimensional Minkowski spacetime the covariance $Q \in \mathcal{D}' (\mathbb{R}^2\times\mathbb{R}^2)$ introduced in Equation \eqref{Eq:def-of-Q} is positive and in turn it is generated by a continuous and positive function $Q(z,z^\prime)$. Consequently, the coincidence limit $Q(z,z)$ is a smooth positive function, namely $Q (z, z) \in \mathcal{E} (\mathbb{R}^2)$ and $Q (z, z) \geqslant 0$ for all $z \in \mathbb{R}^2$.
\end{lemma}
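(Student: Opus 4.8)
\quad The plan is to treat the three assertions in turn, the distributional positivity being essentially free and the smoothness of the diagonal restriction being the only genuinely delicate point. Positivity of $Q$ as a bidistribution is exactly the content of Lemma~\ref{Lem:positivity-of-Q}, whose argument is dimension independent, so I would simply invoke it for $d=2$. What is special to two dimensions, and what makes the remaining claims true, is that by \eqref{Eq: Prop in 2D} the massless retarded fundamental solution $\Delta^R_0$ is a bounded, locally integrable \emph{function} (a multiple of the Heaviside function of the solid cone), rather than a distribution carrying $\delta$-type singularities on the light cone as happens in higher dimensions. Consequently the pointwise products in \eqref{Eq: conto Q} are unambiguous and the whole construction stays at the level of honest functions; the main obstacle will be upgrading continuity on the diagonal to smoothness.

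For continuity and pointwise non-negativity I would argue directly from \eqref{Eq: conto Q}. By \eqref{Eq: Prop in 2D} the product $\Delta^R_0(z-\tilde z)\Delta^R_0(z'-\tilde z)$ equals, up to a positive constant, the characteristic function of $J^-(z)\cap J^-(z')$; since $\chi^2(\tilde z)=\tilde\chi^2(\tilde t)$ vanishes for $\tilde t<T$ by \eqref{Eq: time cutoff}, for fixed $(z,z')$ the effective domain $\{\tilde t\geq T\}\cap J^-(z)\cap J^-(z')$ is bounded, hence compact. On that domain both separations are causal, so each factor equals $2-J_0\bigl(m\sqrt{|(z-\tilde z)^2|}\bigr)$, where $J_0(x)=I_0(ix)$ takes values in $[-1,1]$ because $\sup|I_0(m\sqrt{\cdot})|=1$; thus each factor lies in $[1,3]$ and the integrand is bounded and non-negative. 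The integral therefore converges to a finite non-negative number, and, letting $(z,z')$ range over a compact set so that all the domains sit inside one fixed compact set, dominated convergence yields continuity of $(z,z')\mapsto Q(z,z')$. This shows $Q$ is generated by a continuous, non-negative function.

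The crux is the smoothness of the coincidence limit, for which continuity of $Q(z,z')$ is by itself insufficient. Here I would use the spatial translation invariance noted after \eqref{Eq: conto Q} to reduce $Q(z,z)$ to a function of the time coordinate $t$ alone,
\[
  Q(z,z)=\frac14\int_T^{t}\tilde\chi^2(\tilde t)\,\Psi(t-\tilde t)\,\mathd\tilde t,\qquad \Psi(r)\assign\int_{-r}^{r}\Bigl(2-J_0\bigl(m\sqrt{r^2-u^2}\bigr)\Bigr)^2\,\mathd u .
\]
The key regularity input is that $J_0$ is an even entire function, so $w\mapsto J_0(\sqrt w)$ is entire and $J_0\bigl(m\sqrt{r^2-u^2}\bigr)$ is smooth in $(r,u)$ right across the cone $r^2=u^2$; rescaling $u=rv$ then exhibits $\Psi(r)=r\int_{-1}^1(\cdots)\,\mathd v$ as a smooth function with $\Psi(0)=0$.

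Finally, substituting $s=t-\tilde t$ and using that $\tilde\chi^2(t-s)=0$ once $s>t-T$ lets me extend the upper endpoint to $+\infty$,
\[
  Q(z,z)=\frac14\int_0^{\infty}\tilde\chi^2(t-s)\,\Psi(s)\,\mathd s ,
\]
an integral with fixed endpoints whose integrand is smooth in $t$ and, for $t$ in any compact set, supported in a fixed compact $s$-interval. Differentiating under the integral sign (equivalently, all endpoint contributions vanish because $\tilde\chi$, hence $\tilde\chi^2$, vanishes to infinite order at $T$ by \eqref{Eq: time cutoff}) yields $Q^{(n)}(z,z)=\frac14\int_0^\infty(\tilde\chi^2)^{(n)}(t-s)\,\Psi(s)\,\mathd s$ for every $n$, each continuous in $t$. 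Hence $Q(z,z)\in\mathcal E(\mathbb R^2)$, and its non-negativity is immediate from the non-negative integrand, which completes the proof.
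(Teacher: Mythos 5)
Your proposal is correct. For the first two assertions it follows the same route as the paper: distributional positivity is quoted from Lemma \ref{Lem:positivity-of-Q} (whose proof is indeed dimension independent), and continuity together with pointwise non-negativity of the kernel is extracted from Equation \eqref{Eq: conto Q}, using that the factors $2-J_0$ are bounded in $[1,3]$ and that the effective domain $\{\tilde t\geq T\}\cap J^-(z)\cap J^-(z^\prime)$ is compact; you are in fact slightly more careful than the text, which omits the overall factor $\tfrac14$ coming from the two factors $-\tfrac12\theta$ and misprints the intersection of cones as $J^+(z)\cap J^-(z^\prime)$. Where you genuinely go beyond the paper is the smoothness of the coincidence limit: the paper presents it as an immediate consequence (``Consequently, \dots'') of continuity and spatial translation invariance and offers no argument, whereas continuity of $Q(z,z^\prime)$ by itself only yields continuity of the diagonal. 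Your reduction of $Q(z,z)$ to the one-dimensional profile $\Psi$, the key observation that $J_0$ even and entire gives $J_0\bigl(m\sqrt{r^2-u^2}\bigr)=G\bigl(m^2(r^2-u^2)\bigr)$ with $G$ entire, hence smoothness across the cone and $\Psi(0)=0$ after the rescaling $u=rv$, and the differentiation under the integral sign in $t$ (legitimate because, for $t$ in a compact set, the integrand is supported in a fixed compact $s$-interval, with the would-be boundary term killed by $\Psi(0)=0$, equivalently by $\tilde\chi$ vanishing identically below $T$) supply exactly the justification the paper leaves to the reader. So the proposal not only matches the paper's strategy but completes its one unproven step.
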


\subsection{The algebra of functionals for the sine-Gordon model}\label{Sec: Functionals and sine-Gordon}
In Section \ref{Sec:interacting-AQFT} we have emphasized how the analysis of an interacting field theory requires in the language of algebraic quantum field theory the identification of a distinguished algebra of functionals and in Section \ref{Sec:microlocal-approach-to-spdes} we have emphasized how the same structures come into play when studying a nonlinear stochastic partial differential equation along the lines of \cite{DDRZ21}. Hence, in the following we introduce the set of functionals which is necessary to investigate the sine-Gordon model as per Section \ref{Sec: Notation}.

The rationale that we follow is strongly inspired by to the one discussed in {\cite[Sec. 1.4]{BPR23-Sine-Gordon-Massive}} and, for this very reason, we shall refer to this reference for the proof of many statements, limiting ourselves to highlighting the main differences with our setting.

\begin{definition}\label{Def: enlarged class functionals}
  We denote with $\mathcal{F}^V(\mathbb{R}^2)$ the vector space of functionals generated by elements
  of the form
  \[ F_{a, n, m} (\zeta) = \int_{\mathbb{M}^{n+m}} \mathd \mu_{X, Y} e^{i
     \sum_{j = 1}^n a_j \varphi (x_j)} e^{- i \sum_{1 \leqslant \ell < j
     \leqslant n} a_{\ell} a_j (Q (x_{\ell}, x_j) + \hbar H_0 (x_{\ell},
     x_j))} \zeta (X, Y) \prod_{k = 1}^m \varphi (y_k), \]
  where $\mathd \mu_{X, Y} = \mathd \mu_{x_1} \ldots \mathd \mu_{y_m}$, $X =
  (x_1, \ldots, x_n)$, $Y = (y_1, \ldots, y_m)$, $a \subset (- \eta, \eta)^n$
  is a multi-index with $\eta < 4 \pi \hbar^{- 1}$. Furthermore $\zeta$ is a distribution generated by
  bounded and compactly supported function such that
  \begin{equation}
    \tmop{WF} (\zeta) \subset W_{m + n} \assign \left\{ (x_1, \ldots, x_{n +
    m} ; k_1, \ldots, k_{n + m}) \in T^{\ast} \mathbb{R}^{d(n + m)} \setminus
    \emptyset\; \barsuchthat\; \sum_{\ell = 1}^{n + m} k_{\ell} = 0 \right\} .
    \label{Eq:WF-condition}
  \end{equation}
\end{definition}
Observe that, in
Equation~\eqref{Eq:WF-condition}, the condition
\[ \sum_{\ell = 1}^{n + m} k_{\ell} = 0, \]
surmises that all covectors have been parallel transported at a common base point.

\begin{remark}
  \label{Rem:clarifying-WF} 
  For future convenience, it is worth highlighting that 
  a comparison between Equations \eqref{Eq:WF-condition} and Equation \eqref{Eq: WF Q} entails that
  \begin{equation}\label{Eq: estimate WF retarded}
  	\textrm{WF}(Q)\subseteq\tmop{WF} (\mathLaplace^R) \subset W_2 .
  \end{equation}
\end{remark}

\begin{proposition}\label{Prop: closedness F^V}
  Denoting with $\mathcal{F}_{\mu c}(\mathbb{R}^2)$ the set of microcausal functionals on $\mathbb{R}^2$ as per Definition \ref{Def:functionals} and with $\mathcal{F}^V(\mathbb{R}^2)$ that introduced in Definition \ref{Def: enlarged class functionals}, it holds that
  \begin{enumerate}
    \item \tmverbatim{}$\mathcal{F}^V \subset \mathcal{F}_{\mu c}$;
    
    \item $\mathcal{F}^V(\mathbb{R}^2)$ identifies a $*$-subalgebra of $(\mathcal{F}_{\mu c}(\mathbb{R}^2), \star_{\hbar K}, \ast)$
    if $K$ is one among the following bi-distributions $Q + \hbar
    \omega, \tmop{Re} (Q + \hbar \omega), Q + \hbar \Delta_F, \tmop{Re} (Q +
    \hbar \Delta_F)$ where $\omega$ is the two-point function of the Poincar\'e invariant ground state on the two-dimensional Minkowski spacetime and $\Delta_F$ the associated Feynmann propagator.
  \end{enumerate}
In particular we set
\begin{equation}\label{Eq: algebra 2}
	\mathcal{A}_{\hbar\omega + Q}^V(\mathbb{R}^2) \assign (\mathcal{F}^V(\mathbb{R}^2), \cdot_{\hbar \omega
		+ Q}, \ast) ,
\end{equation}
\end{proposition}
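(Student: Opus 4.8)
The plan is to prove the two claims of Proposition \ref{Prop: closedness F^V} in turn, treating the inclusion $\mathcal{F}^V \subset \mathcal{F}_{\mu c}$ first and then the $*$-subalgebra property, since the latter builds on the former. For the inclusion, I would compute the functional derivatives of a generating element $F_{a,n,m}$. Since the $\varphi$-dependence enters only through the smooth field $\varphi$ evaluated at the points $x_j$ (inside the exponentials $e^{ia_j\varphi(x_j)}$) and at the points $y_k$ (linearly), each functional derivative produces a finite sum of terms of the same structural type: derivatives bring down factors $ia_j\delta(\cdot - x_j)$ from the vertex part or simply remove a $\varphi(y_k)$. The key point is that the wavefront set of $F_{a,n,m}^{(\ell)}$ is governed by $\tmop{WF}(\zeta)\subset W_{n+m}$ together with the wavefront sets of the fixed kernels $Q$ and $H_0$ appearing in the phase, and by Remark \ref{Rem:clarifying-WF} we know $\tmop{WF}(Q)\subset W_2$. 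One then checks that the total momentum constraint $\sum k_\ell = 0$ defining $W_{n+m}$ forbids any covector configuration lying in the union of the closed future/past cones $\bar V_p^+\cup \bar V_p^-$ required to be avoided by microcausality in Definition \ref{Def:functionals}; intuitively a nonzero covector summing to zero cannot be uniformly future- or past-pointing.

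For the second claim I would verify closure under the three operations defining the $*$-algebra structure: the $*$-involution, and the product $\star_{\hbar K}$ for each admissible $K$. The involution is immediate since complex conjugation sends $F_{a,n,m}$ to $F_{-a,n,m}$ (flipping the signs in the vertex exponentials and conjugating $\zeta$), which is again of the prescribed form. The substantive part is the product: I would compute $F_{a,n,m}\star_{\hbar K} F_{a',n',m'}$ by expanding the exponential of the bidifferential operator $D_{\hbar K}$ as in Equation \eqref{Eq:deform-quantisation}. Because the $\varphi$-derivatives acting on vertex factors $e^{ia_j\varphi(x_j)}$ reproduce those factors up to multiplicative constants, and because the series then resums into a product of two vertex systems dressed by an extra phase built from $K$ evaluated at the cross-pairs of insertion points, the result is again a single generating element with $n+n'$ vertex insertions, $m+m'$ linear insertions, and a new distribution $\zeta''$ obtained by tensoring $\zeta,\zeta'$ and multiplying by the exponentiated cross-kernel. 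The admissible choices of $K$ listed in the statement — $Q+\hbar\omega$, $\tmop{Re}(Q+\hbar\omega)$, $Q+\hbar\Delta_F$, $\tmop{Re}(Q+\hbar\Delta_F)$ — are exactly those for which this reshuffling produces a kernel whose singular part matches the phase $Q + \hbar H_0$ already present in Definition \ref{Def: enlarged class functionals}, so that the dressed object stays inside $\mathcal{F}^V$.

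The hard part, I expect, will be controlling the wavefront set of the resulting distribution $\zeta''$ to confirm $\tmop{WF}(\zeta'')\subset W_{(n+n')+(m+m')}$, which is precisely what membership in $\mathcal{F}^V$ demands. This requires applying Hörmander's criterion for the product and for the composition/pullback of distributions (see \cite{Hormander-83}), checking that the cross-kernel $K$ restricted to the insertion points has wavefront set compatible with $W$ and that no dangerous covector cancellations violate the global momentum-conservation constraint. One must verify that the sum-zero condition defining $W_{n+m}$ is stable under the tensoring and dressing operations — that is, that pairing two sum-zero configurations through a kernel supported on the total-momentum-conserving set again yields a sum-zero configuration — and that the products are genuinely well-defined at the microlocal level, i.e. no wavefront set meets its own negative. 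Here the low dimension $d=2$ and the mild, logarithmic singularities of the two-dimensional fundamental solutions (emphasised at the end of Section \ref{Sec:microlocal-approach-to-spdes}) are what guarantee the relevant products exist without renormalization, so I would lean on Lemma \ref{Lem: On the regularity of Q} to treat the $Q$-contribution as a genuine continuous function and isolate the only truly singular piece as the $\hbar\omega$ or $\hbar\Delta_F$ term, whose wavefront set is given explicitly by Equation \eqref{Eq:msc}.
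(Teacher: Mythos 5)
Your proposal is correct and takes essentially the same route as the paper, whose own proof simply defers to \cite[Prop.~1.2, 1.3]{BPR23-Sine-Gordon-Massive} and observes that the only new ingredient, $Q$, is harmless because $\mathrm{WF}(Q)\subseteq\mathrm{WF}(\Delta^R)\subset W_2$ (Remark \ref{Rem:clarifying-WF}) and $Q$ is continuous (Lemma \ref{Lem: On the regularity of Q}) --- exactly the two facts your sketch leans on to reconstruct the underlying argument (cone-salience for microcausality, resummation of vertex contractions, H\"ormander-type control of the dressed $\zeta''$). The only slip is cosmetic: contractions involving the linear factors $\varphi(y_k)$ make the deformed product a finite sum of generating elements rather than a single one, which still lies in the span defining $\mathcal{F}^V(\mathbb{R}^2)$.
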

\begin{proof}
  The proof follows the same lines of {\cite[Prop. 1.2, 1.3]{BPR23-Sine-Gordon-Massive}}, the only difference being that the generators in Definition \ref{Def: enlarged class functionals} encompass also $Q\in\mathcal{D}^\prime(\mathbb{R}^2\times\mathbb{R}^2)$ defined in Equation \eqref{Eq:first-app-Q}. Nonetheless, the line of reasoning extends slavishly to our scenario on account of Equation \eqref{Eq: estimate WF retarded}.
\end{proof}

\subsection{The model and the strategy}

\label{Sec:Strategy}

We reckon that it is desirable to outline succinctly the goal of our analysis as well as the strategy employed to reach it, before delving in all the technical details. Henceforth we shall fix the background $\mathbb{M}$ to be the two-dimensional Minkowski spacetime $(\mathbb{R}^2,\eta)$ and, as already mentioned in Section \ref{Sec: Notation}, on top of it we consider the {\em stochastic sine-Gordon equation} with coupling constant $\lambda$
\begin{equation}\label{Eq: sine-Gordon equation}
	(\Box + m^2) \hat{\psi} + \lambda g a \sin(a\hat{\psi}) = \chi \hat{\xi}, 
\end{equation}
where $\Box=\partial^2_t-\partial^2_x$ is the d'Alembert wave operator and where we assume vanishing initial condition. In addition, $a\in \mathbb{R}$, while $g \in \mathcal{D} (\mathbb{R}^2)$. Finally the spacetime with noise $\hat{\xi}$ and $\chi\in\mathcal{E}(\mathbb{R}^2)$ enjoy the properties discussed in Section~\ref{Sec:microlocal-approach-to-spdes}.


Following and actually combining the rationale of Section~\ref{Sec:microlocal-approach-to-spdes} and in particular of Section \ref{Sec:interacting-AQFT}, we start from Equation \eqref{Eq: sine-Gordon equation} and we construct the classical interacting field
\begin{equation}\label{Eq: r-map}
 r_{\lambda V_g} (\varphi) (x) = \sum_{n \geqslant 0} \lambda^n \int_{t_1
   \leqslant \ldots \leqslant t_n \leqslant t} \mathd \mu_{x_1} \ldots \mathd
   \mu_{x_n} g (x_1) \ldots g (x_n) \{ V  (x_1), \{ V (x_2), \ldots \{ V
   (x_n), \varphi (x) \} \ldots \} \}.
\end{equation}
This is an application of Equation \eqref{Eq:first-classical-obs} in which the r\^{o}le of $V(x)$ is played by $\cos(a\varphi(x))$ and, heuristically, this amounts to (perturbatively) constructing a solution of the classical PDE $(\Box+m^2)\varphi+\lambda g a\sin(a\varphi)=0$. Subsequently, as seen in Section~\ref{Sec:microlocal-approach-to-spdes}, we can start from Equation \eqref{Eq: r-map} encoding the stochastic properties of the noise by acting with the map $\Gamma_Q$ defined in
Equation~\eqref{Eq:GammaQ}, namely
\begin{equation}\label{Eq: expectation classical Moller}
	 \Gamma_Q [r_{\lambda V_g} (\varphi)] (x) = \sum_{n \geqslant 0} \lambda^n
	\Gamma_Q \int_{t_1 \leqslant \ldots \leqslant t_n \leqslant t} \mathd \mu_X
	g (x_1) \ldots g (x_n) \{ V  (x_1), \{ V (x_2), \ldots \{ V (x_n), \varphi
	(x) \} \ldots \} \} .
\end{equation}
In agreement with the formulation of \cite{DDRZ21}, evaluating $\Gamma_Q [r_{\lambda V_g} (\varphi)] (x)$ at $\varphi = 0$ yields the expectation value of the solution of Equation \eqref{Eq: sine-Gordon equation} as a formal power series in
$\lambda$. Yet, a natural question pertains the convergence of the formal power series in Equation \eqref{Eq: expectation classical Moller}. To answer this question, we adopt an unconventional strategy which consists of building $\Gamma_Q [r_{\lambda V_g} (\varphi)]$ as the classical limit of $\Gamma_Q [R_{\lambda V_g} (\varphi)]$, where $R_{\lambda V_g} (\varphi)$ is the interacting quantum field defined in Equation \eqref{Eq:Bogoliubov map} via the Bogoliubov map for the specific choice of the observable $F\equiv \Phi$. In other words, we shall prove that
\begin{equation}\label{Eq: expected relation}
	\Gamma_Q [r_{\lambda V_g} (\varphi)] = \lim_{\hbar \rightarrow 0^+}
	\Gamma_Q [R_{\lambda V_g} (\varphi)] .
\end{equation}
The net advantage of this unusual strategy is that, adapting to the case in hand techniques introduced in \cite{BPR23-Sine-Gordon-Massive,Frolich-CMP-76-2-d-stat-mech}, we are able to prove absolute convergence of the formal power series defining $\Gamma_Q [R_{\lambda V_g} (\varphi)]$ for any field configuration $\varphi\in\mathcal{E}(\mathbb{M})$. The subsequent investigation of the classical limit procedure will be discussed in Section~\ref{Sec:classical-limit}. As a consequence, the non-perturbative expectation value of the solution can be written as
\[ \Gamma_Q [r_{\lambda V_g} (\varphi)]_{\varphi = 0} = \lim_{\hbar
   \rightarrow 0^+} \Gamma_Q [R_{\lambda V_g} (\varphi)]_{\varphi = 0} \,. \]
The same line of reasoning shall be shown to be applicable also to the analysis of the $n$-point correlation functions of the
solution of Equation \eqref{Eq: sine-Gordon equation}, see Section~\ref{Sec:convergence-corr-functions}.

\section{Interplay between quantum and stochastic}\label{Sec:interplay}
As outlined in Section~\ref{Sec:Strategy}, we shall investigate the construction of the expectation value and of the correlations of the solutions of Equation \eqref{Eq: sine-Gordon equation} following a two-steps procedure.

\noindent In the following we address the analysis of the first part. This consists of an unconventional combination of the frameworks outlined in Section \ref{Sec: time-ordered stuff} and \ref{Sec:microlocal-approach-to-spdes}. More precisely we start by considering $F \in \mathcal{F}_{\tmop{loc}}(\mathbb{R}^2)$ as per Definition \ref{Def: enlarged class functionals} and we associate to it its quantum counterpart via the Bogoliubov map as per Equation \eqref{Eq:Bogoliubov map}:
\[ R_{\lambda V} (F) = S (\lambda V)^{\star_{\hbar \omega} - 1} \star_{\hbar
   \omega}  (S (\lambda V) \cdot_{\hbar \Delta_F} F), \]
where here we are implicitly thinking about $V=\cos(a\varphi)$ and where $\omega$ and $\Delta_F$ are in the previous sections. On account of Remark~\ref{Rem:deformation-map}, the last expression can be written as
\[ R_{\lambda V} (F) = \Gamma_{\hbar \omega}  \Bigl[\Gamma_{\hbar \omega}^{- 1} ((S
   (\lambda V))^{\star_{\hbar \omega} - 1}) \Gamma_{\hbar \omega}^{- 1}
   (\Gamma_{\hbar \Delta_F} [\Gamma_{\hbar \Delta_F}^{- 1} (S (\lambda V))
   \Gamma_{\hbar \Delta_F}^{- 1} (F)])\Bigr] . \]
Subsequently, by applying the rationale outlined in Section \ref{Sec:microlocal-approach-to-spdes}, we can codify the information on the correlations of the Gaussian white noise in Equation \eqref{Eq: sine-Gordon equation} by applying the map $\Gamma_Q$ as per Equation \eqref{Eq:GammaQ}:
\begin{align*}
     \Gamma_Q [R_{\lambda V} (F)]  = & \Gamma_Q \Gamma_{\hbar \omega}
     \Bigl[\Gamma_{\hbar \omega}^{- 1} ((S (\lambda V))^{\star_{\hbar \omega} - 1})
     \Gamma_{\hbar \omega}^{- 1} (\Gamma_{\hbar \Delta_F} [\Gamma_{\hbar
     \Delta_F}^{- 1} (S (\lambda V)) \Gamma_{\hbar \Delta_F}^{- 1} (F)])\Bigr]\\
      = & \Gamma_{Q + \hbar \omega} \Bigl[\Gamma_{\hbar \omega}^{- 1}
     (\Gamma_Q^{- 1} \Gamma_Q (S (\lambda V))^{\star_{\hbar \omega} - 1})
     \Gamma_{\hbar \omega}^{- 1} \Gamma_Q^{- 1} \Gamma_Q (\Gamma_{\hbar
     \Delta_F} [\Gamma_{\hbar \Delta_F}^{- 1} (S (\lambda V)) \Gamma_{\hbar
     \Delta_F}^{- 1} (F)])\Bigr]\\
      = & \Gamma_{Q + \hbar \omega} \Bigl[\Gamma_{Q + \hbar \omega}^{- 1}
     (\Gamma_Q (S (\lambda V))^{\star_{\hbar \omega} - 1}) \Gamma_{Q + \hbar
     \omega}^{- 1} (\Gamma_{Q + \hbar \Delta_F} [\Gamma_{\hbar \Delta_F}^{- 1}
     (S (\lambda V)) \Gamma_{\hbar \Delta_F}^{- 1} (F)])\Bigr]\\
     = & \Gamma_{Q + \hbar \omega}\Bigl[\Gamma_{Q + \hbar \omega}^{- 1}
     (\Gamma_Q (S (\lambda V))^{\star_{\hbar \omega} - 1}) \Gamma_{Q + \hbar
     \omega}^{- 1} (\Gamma_{Q + \hbar \Delta_F} [\Gamma_{\hbar \Delta_F}^{- 1}
     \Gamma_Q^{- 1} \Gamma_Q (S (\lambda V)) \Gamma_{\hbar \Delta_F}^{- 1}
     \Gamma_Q^{- 1} \Gamma_Q (F)])\Bigr]\\
     = & \Gamma_{Q + \hbar \omega} \Bigl[\Gamma_{Q + \hbar \omega}^{- 1}
     (\Gamma_Q (S (\lambda V))^{\star_{\hbar \omega} - 1}) \Gamma_{Q + \hbar
     \omega}^{- 1} (\Gamma_{Q + \hbar \Delta_F} [\Gamma_{Q + \hbar
     \Delta_F}^{- 1} \Gamma_Q (S (\lambda V)) \Gamma_{Q + \hbar \Delta_F}^{-
     1} \Gamma_Q (F)])\Bigr] .
   \end{align*}
Exploiting once more  Equation~\ref{Eq: algebra homomorphism}, $\Gamma_Q [R_{\lambda V} (F)]$ can be rewritten as
\begin{equation}
  \Gamma_Q [R_{\lambda V} (F)] = \Gamma_Q ((S (\lambda V))^{\star_{\hbar
  \omega} - 1}) \cdot_{Q + \hbar \omega} [(\Gamma_Q (S (\lambda V)) \cdot_{Q +
  \hbar \Delta_F} \Gamma_Q (F))] . \label{Eq:Q-Bogoliubov}
\end{equation}
Giving a closer look to this expression and focusing on $\Gamma_Q  (S (\lambda V))$, on account of Equation~\eqref{Eq:S-matrix} we can decompose this term as 
\begin{equation}\label{Eq: Q-S matrix expansion}
\Gamma_Q  (S (\lambda V)) = \sum_{n \geqslant 0} \frac{1}{n!} \left(
\frac{i \lambda}{\hbar} \right)^n \Gamma_Q \mathcal{T}_n^{\hbar \Delta_F}
(V^{\otimes n}) \,,
\end{equation}
where, in view of Equation \eqref{Eq: time-ordered product},
\[ \mathcal{T}_n^{\hbar \Delta_F} (V \otimes \ldots \otimes V) = \Gamma_{\hbar
   \Delta_F}  [\Gamma_{\hbar \Delta_F}^{- 1} (V) \ldots \Gamma_{\hbar
   \Delta_F}^{- 1} (V)]\,. \]
The action of $\Gamma_{Q}$ yields
\begin{align}\label{Eq: conto}
     \Gamma_Q \mathcal{T}_n^{\hbar \Delta_F} (V \otimes \ldots \otimes V)  =
     & \Gamma_Q \Gamma_{\hbar \Delta_F} [\Gamma_{\hbar \Delta_F}^{- 1} (V)
     \ldots \Gamma_{\hbar \Delta_F}^{- 1} (V)]\nonumber\\
      = & \Gamma_Q \Gamma_{\hbar \Delta_F} [\Gamma_{\hbar \Delta_F}^{- 1}
     \Gamma_Q^{- 1} \Gamma_Q (V) \ldots \Gamma_{\hbar \Delta_F}^{- 1}
     \Gamma_Q^{- 1} \Gamma_Q (V)]\nonumber\\
      = & \Gamma_{Q + \hbar \Delta_F} [\Gamma_{Q + \hbar \Delta_F}^{- 1}
     \Gamma_Q (V) \ldots \Gamma_{Q + \hbar \Delta_F}^{- 1} \Gamma_Q (V)]\nonumber\\
      = & \Gamma_Q (V) \cdot_{Q + \hbar \Delta_F} \ldots \cdot_{Q + \hbar
     \Delta_F} \Gamma_Q (V)\nonumber\\
      = & \mathcal{T}^{\hbar \Delta_F + Q}_n (\Gamma_Q (V) \otimes
     \ldots \otimes \Gamma_Q (V)) .
   \end{align} 
To summarize we have constructed the {\em  $Q - S$-matrix} as
\begin{equation}\label{Eq:Q-S-matrix}
  \Gamma_Q (S (\lambda V)) = \sum_{n \geqslant 0} \frac{1}{n!} \left( \frac{i
  \lambda}{\hbar} \right)^n \mathcal{T}_n^{\hbar \Delta_F + Q} (\Gamma_Q (V)
  \otimes \ldots \otimes \Gamma_Q (V)).
\end{equation}
Bearing in mind that our primary goal is to prove the convergence of the series defining an interacting field associated to the sine-Gordon model, it is convenient to give an analogous characterization of the  $\star_{\hbar \omega}$-inverse of $S$ in terms of anti-time ordered products, namely
\begin{align*}
     S (\lambda V)^{\star_{\hbar \omega} - 1} \assign & \sum_{n \geqslant 0}
     \frac{1}{n!} \left( \frac{- i \lambda}{\hbar} \right)^n V \cdot_{\hbar
     \Delta_{\tmop{AF}}} \ldots \cdot_{\hbar \Delta_{\tmop{AF}}} V\\
      = & \sum_{n \geqslant 0} \frac{1}{n!} \left( \frac{- i \lambda}{\hbar}
     \right)^n \mathcal{T}_n^{\hbar \Delta_{\tmop{AF}}} (V \otimes \ldots
     \otimes V),
   \end{align*} 
where $\Delta_{\tmop{AF}}$ denotes the anti-Feynman propagator, see Equation \eqref{Eq: anti-feynman}. As a matter of fact, mirroring the computations in Equation \eqref{Eq: conto}, it descends that we can rewrite the $\star_{\hbar \omega}$-inverse of the $Q - S$-matrix
\begin{equation}
  \Gamma_Q (S (\lambda V)^{\star_{\hbar \omega} - 1}) = \sum_{n \geqslant 0}
  \frac{1}{n!} \left( \frac{- i \lambda}{\hbar} \right)^n \mathcal{T}_n^{\hbar
  \Delta_{\tmop{AF}} + Q} (\Gamma_Q (V) \otimes \ldots \otimes \Gamma_Q (V)),
  \label{Eq:inverse-Q-S-matrix}
\end{equation}
where
\[ \mathcal{T}_n^{\hbar \Delta_{\tmop{AF}} + Q}  (\Gamma_Q (V) \otimes \ldots
   \otimes \Gamma_Q (V)) = \Gamma_Q (V) \cdot_{Q + \hbar
   \Delta_{\tmop{AF}}} \ldots \cdot_{Q + \hbar \Delta_{\tmop{AF}}} \Gamma_Q
   (V) \,. \]

\subsection{Convergence of the $Q - S$-matrix}\label{Sec: Convergence of the Q-S matrix}
In the preceding section the main result has been the definition in Equation \eqref{Eq:Q-S-matrix} of the $Q - S$-matrix $\Gamma_Q  (S (\lambda V))$ as a formal power series expansion in the coupling constant $\lambda$. In this section we prove absolute convergence of the series for the two-dimensional sine-Gordon model. This amount to choosing
\begin{equation}
  V_g (\varphi) \assign \int_{\mathbb{R}^2} d \mu_z \cos (a \varphi (z)) g (z) =
  \frac{1}{2}  (V_{a, g} + V_{- a, g}), \qquad \qquad V_{a, g} (\varphi)
  \assign \int_{\mathbb{R}^2} \mathd \mu_z e^{ia \varphi (z)} g (z) .
  \label{Eq:setting}
\end{equation}
Henceforth, for simplicity and without loss of generality, we shall assume that $g\geq 0$. Our strategy consists of extending to the case in hand the methods devised in \cite{BPR23-Sine-Gordon-Massive}, so to account also for the contribution of the correlations codified by $Q$, see Equation \eqref{Eq:def-of-Q}. To this end a few preliminary considerations are necessary. First of all, in view of Equation \eqref{Eq:setting}, Equation \eqref{Eq:Q-S-matrix} takes the form
\begin{equation}\label{Eq: non ho idee}
	\Gamma_Q (S (\lambda V_g)) = \sum_{n \geqslant 0} \frac{1}{n!} \left( \frac{i
		\lambda}{2 \hbar} \right)^n  \sum_{k = 0}^n \binom{n}{k}
	\mathcal{T}_n^{\hbar \Delta_F + Q} ((\Gamma_Q (V_{a, g}))^{\otimes k}
	\otimes (\Gamma_Q (V_{- a, g}))^{\otimes (n - k)}) . 
\end{equation}
Focusing on the single elements appearing in the tensor products and recalling Equation \eqref{Eq:GammaQ}, it descends 
\begin{align*}
     \Gamma_Q (V_{\pm a, g}) (\varphi)  = & \exp \left\{ \frac{1}{2} 
     \left\langle Q (z, y), \frac{\delta^2}{\delta \varphi (z) \delta \varphi
     (y)} \right\rangle \right\}  \int_{\mathbb{R}^2} \mathd \mu_x e^{\pm ia
     \varphi (x)} g (x)\\
      = & \sum_{k \geqslant 0} \frac{(\pm ia)^{2 k}}{k! 2^k}
     \int_{\mathbb{R}^{2(2 k + 1)}} \mathd \mu_x \prod_{i=1}^{\ell}\mathd\mu_{z_i}\mathd\mu_{y_i} \prod_{\ell = 1} \delta (x - z_{\ell}) \delta (x - y_{\ell}) Q
     (z_{\ell}, y_{\ell}) e^{\pm ia \varphi (x)} g (x)\\
      = & \sum_{k \geqslant 0} \frac{1}{k! 2^k}  \int_{\mathbb{R}^2} \mathd
     \mu_x e^{\pm ia \varphi (x)} (- a^2 Q (x, x))^k g (x)\\
      = & \int_{\mathbb{R}^2} \mathd \mu_x g (x) e^{\pm ia \varphi (x)} 
     \sum_{k \geqslant 0} \frac{(- \frac{a^2}{2} Q (x, x))^k}{k!}\\
      = & \int_{\mathbb{R}^2} \mathd \mu_x g (x) e^{\pm ia \varphi (x)} e^{-
     \frac{a^2}{2} Q (x, x)},
   \end{align*}
where the last identity is a byproduct of the regularity of $Q (x, x)$, see Lemma \ref{Lem: On the regularity of Q}, and of $g$ lying in $\mathcal{D}(\mathbb{R}^2)$. These entail that the series $\sum_{k \geqslant 0} g (x) \frac{(- \frac{a^2}{2} Q (x, x))^k}{k!}$ is uniformly convergent. In other words, introducing
\begin{equation}\label{Eq: g_Q}
g_Q (x) \assign g (x) e^{- \frac{a^2}{2} Q (x, x)} \in \mathcal{D} (\mathbb{M}),
\end{equation}
we have thus proved that
\begin{equation}\label{Eq: boh}
	\Gamma_Q (V_{\pm a, g}) = V_{\pm a, g_Q} ,
\end{equation}
which entails in turn that
\begin{equation}\label{Eq:action-of-gamma-Q-on-interaction}
  \Gamma_Q (V_g) = V_{g_Q}. 
\end{equation}
Furthermore, the $n$-th order coefficient of the formal power series in Equation \eqref{Eq: non ho idee} can be written as
\begin{align*}
	\mathcal{T}_n^{\hbar \Delta_F + Q} &(\Gamma_Q (V_{a_1, g}) \otimes \ldots
	\otimes \Gamma_Q (V_{a_n, g}))\\
	&=\int_{\mathbb{R}^{2n}} \mathd \mu_{x_1}
	\ldots \mathd \mu_{x_n} e^{i \sum_k a_k \varphi (x_k)} e^{- \sum_{1 \leq i
			< j \leq n} a_i a_j  (\hbar \Delta^F (x_i, x_j) + Q (x_i, x_j))} g_Q (x_1)
	\ldots g_Q (x_n),
\end{align*}
where $a_i$ can acquire the values $\pm a$ depending on the case under investigation. Observing that the positivity of $Q(x,x)$ as per Lemma \ref{Lem: On the regularity of Q} entails that $0 \leqslant g_Q (x)\leqslant g (x)$ and denoting by $H\assign\tmop{Re} (\Delta_F)$, we can estimate the absolute value of the time-ordered product as follows
\begin{align*}
     |\mathcal{T}_n^{\hbar \Delta_F + Q} (\Gamma_Q (V_{a_1, g}) \otimes \ldots
     \otimes \Gamma_Q (V_{a_n, g})) |  \leqslant & \int_{\mathbb{R}^{2n}} \mathd
     \mu_X e^{- \sum_{1 \leq i < j \leq n} a_i a_j  (\hbar H (x_i, x_j) + Q
     (x_i, x_j))} g_Q (x_1) \ldots g_Q (x_n) \\
      \leqslant & \int_{\mathbb{R}^{2n}} \mathd \mu_X e^{- \sum_{1 \leq i < j
     \leq n} a_i a_j  (\hbar H (x_i, x_j) + Q (x_i, x_j))} g (x_1) \ldots g
     (x_n)\\
      = & \tmop{ev}_0 (\mathcal{T}_n^{\hbar H + Q} (V_{a_1, g} \otimes
     \ldots \otimes V_{a_n, g})),
   \end{align*} 
where $\mathd\mu_X=\mathd\mu_{x_1}\ldots\mathd\mu_{x_n}$ and where $\tmop{ev}_0 (F) = F (0)$ is the evaluation map at the field configuration $\varphi = 0$. As a consequence, denoting with $[\Gamma_Q (S (\lambda V_g))]_n$ the $n$-th order coefficient in $\lambda$ of the expansion in Equation~\eqref{Eq:Q-S-matrix}, for any configuration $\varphi \in \mathcal{E} (\mathbb{R}^2)$, it holds that
\begin{equation}
  | [\Gamma_Q (S (\lambda V_g))]_n | \leqslant \frac{1}{n!} \left(
  \frac{\lambda}{\hbar} \right)^n \tmop{ev}_0 [\mathcal{T}_n^{\hbar H + Q}
  (V_g \otimes \ldots \otimes V_g)] \,. \label{Eq:n-th-order-bound}
\end{equation}


\begin{remark}
  Both $H\assign\tmop{Re} (\Delta_F)$ and $Q$ are symmetric bi-distribution. This
  entails that convergence of the $Q - S$-matrix, which is built out of a non-commutative product, is tantamount to that of the exponential series of $V_g$ computed
  with respect to the commutative product $\cdot_{\hbar H + Q}$, eventually evaluated at the zero configuration.
\end{remark}

The analysis in \cite{BPR23-Sine-Gordon-Massive} is mainly based on techniques of conditioning and inverse conditioning borrowed from Euclidean
quantum field theory \cite{Frolich-CMP-76-2-d-stat-mech}. They allow one to control the convergence of the $S$-matrix of a massive theory in
terms of the one associated to the massless counterpart. Before delving into the application of such methods to the case in hand, we report for the sake of completeness a notable result.
\begin{theorem}[Thm. 2.2, \cite{BPR23-Sine-Gordon-Massive}]
  \label{Thm:conditioning}Let $w_0, w_1 \in \mathcal{D}' (\mathbb{R}^2\times\mathbb{R}^2)$ be
  positive, real and symmetric. Let
  \[ w_0 - w_1 = P - N, \]
  where $P, N \in \mathcal{D}' (\mathbb{R}^2\times\mathbb{R}^2)$ are once more positive and
  symmetric. Moreover, assume both that $\exists \left. N\right|_{\tmop{Diag}_2}\in L^\infty(\mathbb{R}^2)$ where $\tmop{Diag}_2$ is the diagonal of $\mathbb{R}^2\times\mathbb{R}^2$.
  Hence, for $V_g$ as in Equation~\eqref{Eq:setting}, it holds that, setting $v_0 = w_0 + N$,
  \[ \tmop{ev}_0 [\exp_{w_1} (\lambda V_g)] \leqslant \tmop{ev}_0 [\exp_{v_0}
     (\lambda V_g)] \leqslant 2 \tmop{ev}_0 [\exp_{w_0} (2\lambda  e^{\frac{a^2}{2} K} V_g)]. \]
\end{theorem}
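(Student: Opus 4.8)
The plan is to reduce the statement to two Gaussian/Coulomb-gas estimates, handled by the complementary techniques of \emph{conditioning} and \emph{inverse conditioning}. First I would record the representation underlying everything. For a positive, real, symmetric kernel $w$ the evaluation at the zero configuration kills the $e^{i\sum a_k\varphi(x_k)}$ factors and leaves only off-diagonal contractions, so that, exactly as in the computation preceding \eqref{Eq:n-th-order-bound} and with $V_g$ as in \eqref{Eq:setting},
\[ \tmop{ev}_0[\exp_w(\lambda V_g)] = \sum_{n\geq 0}\frac{\lambda^n}{n!}\,Z_n(w), \]
\[ Z_n(w) \assign \frac{1}{2^n}\sum_{\epsilon\in\{\pm 1\}^n}\int_{\mathbb{R}^{2n}}\mathd\mu_X\,\prod_{1\leq i<j\leq n}e^{-a^2\epsilon_i\epsilon_j w(x_i,x_j)}\prod_{k=1}^n g(x_k)\;\geq 0, \]
where non-negativity uses $g\geq 0$ and that each integrand is a positive real exponential. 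Equivalently $Z_n(w)=\mathbb{E}_w[(\,:V_g:_w)^n]$ and $\tmop{ev}_0[\exp_w(\lambda V_g)]=\mathbb{E}_w[e^{\lambda :V_g:_w}]$, with $\mathbb{E}_w$ the centered Gaussian expectation of covariance $w$ and $:\cdot:_w$ Wick ordering with respect to $w$; the partition-function identity is the rigorous object, the Gaussian process being the Bochner--Minlos measure attached to the positive kernel $w$ (finite-UV condition $a^2<4\pi/\hbar$ guaranteeing integrability). This representation is what makes both bounds available.

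For the lower bound I would use $v_0=w_1+P$ with $P$ positive, which is forced by the hypotheses since $v_0-w_1=(w_0-w_1)+N=(P-N)+N=P$. Writing a field of covariance $v_0$ as an independent sum $\phi+\psi$ of covariances $w_1$ and $P$, multiplicativity of Wick exponentials gives $:e^{ia(\phi+\psi)(z)}:_{v_0}={:e^{ia\phi(z)}:_{w_1}}\,{:e^{ia\psi(z)}:_P}$, and since $\mathbb{E}_P[:e^{ia\psi(z)}:_P]=1$ one finds $\mathbb{E}_P[:V_g:_{v_0}(\phi+\psi)]={:V_g:_{w_1}}(\phi)$ pointwise in $\phi$. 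Jensen's inequality applied to the inner $\psi$-average then yields $\mathbb{E}_P[e^{\lambda :V_g:_{v_0}(\phi+\psi)}]\geq e^{\lambda :V_g:_{w_1}(\phi)}$, and taking $\mathbb{E}_{w_1}$ delivers $\tmop{ev}_0[\exp_{w_1}(\lambda V_g)]\leq\tmop{ev}_0[\exp_{v_0}(\lambda V_g)]$. The only inputs are the convexity of the exponential and the charge-neutralizing identity $\mathbb{E}_P[:e^{ia\psi}:_P]=1$.

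For the upper bound I would argue directly on the positive series $\sum_n\frac{\lambda^n}{n!}Z_n(v_0)$, now with $v_0=w_0+N$. Splitting $e^{-a^2\epsilon_i\epsilon_j(w_0+N)}=e^{-a^2\epsilon_i\epsilon_j w_0}\,e^{-a^2\epsilon_i\epsilon_j N}$ and invoking positivity of $N$ in the form $\sum_{i,j}\epsilon_i\epsilon_j N(x_i,x_j)=N\!\bigl(\sum_i\epsilon_i\delta_{x_i},\sum_j\epsilon_j\delta_{x_j}\bigr)\geq 0$ together with the diagonal bound $N(x_i,x_i)\leq K\assign\|N|_{\tmop{Diag}_2}\|_{L^\infty}$, one gets $\sum_{i<j}\epsilon_i\epsilon_j N(x_i,x_j)\geq-\tfrac12 nK$ uniformly in $\epsilon$ and $X$, hence $\prod_{i<j}e^{-a^2\epsilon_i\epsilon_j N(x_i,x_j)}\leq e^{\frac{a^2}{2}nK}$. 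Since the surviving factors $\prod_{i<j}e^{-a^2\epsilon_i\epsilon_j w_0}\prod_k g(x_k)$ are non-negative, this scalar bound passes through the charge sum and the integral to give $Z_n(v_0)\leq e^{\frac{a^2}{2}nK}Z_n(w_0)$, and resumming yields $\tmop{ev}_0[\exp_{v_0}(\lambda V_g)]\leq\tmop{ev}_0[\exp_{w_0}(\lambda e^{\frac{a^2}{2}K}V_g)]$. As $Z_n(w_0)\geq 0$ and $(\lambda e^{\frac{a^2}{2}K})^n\leq 2\,(2\lambda e^{\frac{a^2}{2}K})^n$, this is stronger than and therefore implies the stated bound $2\,\tmop{ev}_0[\exp_{w_0}(2\lambda e^{\frac{a^2}{2}K}V_g)]$, whose looser constants $2$ and $2\lambda$ are exactly those produced by the convexity split $e^{A+B}\leq\tfrac12(e^{2A}+e^{2B})$ in the original conditioning estimates.

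I expect the main obstacle to be the bookkeeping of the covariance-dependence of the Wick ordering in the lower bound, namely tracking the self-energy factors $e^{\pm\frac{a^2}{2}w(z,z)}$ while changing the reference covariance from $v_0$ to $w_1$; this is precisely where the neutralization $\mathbb{E}_P[:e^{ia\psi}:_P]=1$ is essential, and a naive pointwise comparison of the partition functions $Z_n(w_1)$ and $Z_n(v_0)$ does fail. A secondary point, to be checked but not expected to be serious, is the justification of the pointwise positivity $\sum_{i,j}\epsilon_i\epsilon_j N(x_i,x_j)\geq 0$ entering the upper bound: it is legitimate because $N$ is generated by a continuous positive-type kernel (as holds for $Q$ by Lemma \ref{Lem: On the regularity of Q}), so that the distributional positivity descends to the values appearing in the Coulomb-gas integrals.
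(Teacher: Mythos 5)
Two preliminary remarks. First, the paper itself contains no proof of this statement: it is imported verbatim from \cite{BPR23-Sine-Gordon-Massive} (Thm.~2.2), whose proof -- adapted from the conditioning/inverse-conditioning technique of \cite{Frolich-CMP-76-2-d-stat-mech} -- is what your proposal implicitly reconstructs, so that is the benchmark. Second, your two halves fare differently. The upper bound (inverse conditioning) is correct and in fact cleaner than the reference: working directly on the charge integrals $Z_n$, the estimate $\sum_{i,j}\epsilon_i\epsilon_j N(x_i,x_j)\geqslant 0$ combined with $N(x,x)\leqslant K$ gives $Z_n(v_0)\leqslant e^{\frac{a^2}{2}nK}Z_n(w_0)$ term by term, every discarded factor being a positive exponential and $g\geqslant 0$; the required pointwise positive-definiteness of $N$ is available because, in the setting where the theorem is applied, $N$ has a continuous kernel with $\hat{\mathcal{N}}\geqslant 0$ integrable (Bochner), a point you correctly flag. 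Your resulting bound $\tmop{ev}_0[\exp_{v_0}(\lambda V_g)]\leqslant \tmop{ev}_0[\exp_{w_0}(\lambda e^{\frac{a^2}{2}K}V_g)]$ is sharper than the stated one and implies it, since all $Z_n(w_0)\geqslant 0$; the factors $2$ and $2\lambda$ in the statement are indeed artifacts of the convexity splitting used in the reference's Gaussian proof, which your route avoids.

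The genuine gap is in the lower bound. Your Jensen argument is run inside a Gaussian calculus -- pointwise Wick exponentials $:e^{ia\phi(z)}:_w=e^{ia\phi(z)}e^{\frac{a^2}{2}w(z,z)}$, the factorization $:e^{ia(\phi+\psi)(z)}:_{v_0}={:e^{ia\phi(z)}:_{w_1}}{:e^{ia\psi(z)}:_P}$, and the identification $\tmop{ev}_0[\exp_w(\lambda V_g)]=\mathbb{E}_w[e^{\lambda:V_g:_w}]$ -- which is only formal for exactly the kernels this theorem exists to handle: in the application $w_1=\hbar H+Q$ and $w_0=\hbar H_0+Q$, whose diagonals diverge logarithmically ($\mathcal{H}$ and $\mathcal{H}_0$ behave like $-\log$ at coinciding points, while only $Q$, $P$, $N$ are continuous). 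Hence $w(z,z)=+\infty$, the Gaussian fields are almost surely distributions, and the ``self-energy factors'' you propose to track are infinite: the Wick-ordered cosine exists only as a limit of UV-regularized objects, and the exchange of expectation with the exponential series (needed both to identify $\mathbb{E}_w[e^{\lambda:V_g:_w}]$ with the algebraic series and to apply Fubini and Jensen) must be justified by uniform domination. This cutoff-and-limit layer -- mollify the covariances, prove the inequality at fixed cutoff where all objects are bounded, then pass to the limit using local integrability of $|x|^{-2\alpha}$ for $\alpha=\frac{a^2\hbar}{4\pi}<1$ -- is precisely the substance of the proof in \cite{Frolich-CMP-76-2-d-stat-mech,BPR23-Sine-Gordon-Massive}, and it is absent from your argument; your sentence asserting that the finite-UV condition ``guarantees integrability'' states the thing to be proven. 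A secondary caveat: Jensen compares only the summed series, never the individual coefficients (and, as you yourself observe, no pointwise order-by-order proof of the first inequality can exist), so any coefficientwise invocation of the theorem, as in the proof of Theorem \ref{Thm:estimates-Q-S-matrix-I}, needs the positivity of all $Z_n$ to convert the sum-level bound into convergence statements.
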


We review succinctly the strategy adopted in \cite{BPR23-Sine-Gordon-Massive} and, subsequently, we show how to adapt it to the case of our interest. In \cite{BPR23-Sine-Gordon-Massive} the authors consider the distributions in Theorem \ref{Thm:conditioning} as being
\[ w_0 = H_0 \assign \tmop{Re} (\Delta_{F, 0}), \qquad \qquad \qquad w_1 = H
   \assign \tmop{Re} (\Delta_F), \]
where $\Delta_{F, 0}$ denotes the Feynman propagator associated with
the massless theory while $\Delta_F$ is the one associated with the massive
counterpart. To avoid possible sources of confusion, we shall adopt the following convention to
distinguish a bi-distribution from its integral kernel: $\forall f,f^\prime\in\mathcal{D}(\mathbb{R}^2)$
\begin{equation}\label{Eq: integral kernels}
	 H_0 (f\otimes f^\prime) = \int_{\mathbb{R}^4} \mathd \mu_z \mathd \mu_{z_1}
	\mathcal{H}_0 (z-z_1) f (z) f^\prime (z_1), \quad H (f\otimes f^\prime) =
	\int_{\mathbb{R}^4} \mathd \mu_z \mathd \mu_{z_1} \mathcal{H} (z-z_1) f
	(z) f^\prime (z_1).
\end{equation}
It can be seen that $H \in \mathcal{D}^\prime(\mathbb{R}^2\times\mathbb{R}^2)$ is positive, translation invariant and symmetric being generated by $\mathcal{H}(x)=\frac{1}{2\pi}\mathrm{Re}\left(K_0(m\sqrt{x^2})\right)$, where $K_0$ is a modified Bessel function of second kind and where $x^2$ is the Lorentzian distance between the point $x\in\mathbb{R}^2$ and the origin, {\tmem{cf.}} {\cite[Prop. 2.4]{BPR23-Sine-Gordon-Massive}}. Considering instead the massless case, the integral kernel of $H_0$ reads
\begin{equation}
  \mathcal{H}_0 (x) = - \frac{1}{4 \pi} \log \left| \frac{x^2}{4 \mu^2}
  \right|, \label{Eq:H0-kernel}
\end{equation}
where $\mu$ is a positive, reference length scale. By direct inspection, $H_0\in \mathcal{D}^\prime(\mathbb{R}^2\times\mathbb{R}^2)$ is symmetric in its arguments, but it fails to be positive. Nonetheless, as proven in \cite[Prop. 2.3]{BPR23-Sine-Gordon-Massive}, if we consider the space-time diamond of size $\mu$ 
\begin{equation}\label{Eq: spacetime diamond}
	 D_{\mu} \assign \left\{ (t, s) \in \mathbb{R}^2 \barsuchthat - \mu < t - s <
	\mu, \;\textrm{and}\; - \mu < t + s < \mu \right\},
\end{equation}
then the restriction thereon of $H_0$ is positive. In other words, $H_0$ is both symmetric and positive if we consider test-functions lying in $\mathcal{D} (D_{\mu}^2)$. 
This lack of positivity of $H_0$ on the whole space-time requires 
to restrict the support of the interaction to the  space-time diamond defined in Equation \eqref{Eq: spacetime diamond}. This is tantamount to choosing the cutoff function $g$ in Equation~\eqref{Eq:setting} to be positive and such that $\mathrm{supp}(g)\subseteq D_{\mu}$. 

In order to apply Theorem~\ref{Thm:conditioning}, we must decompose the restriction of $H_0 - H$
to $\mathcal{D} (D_{\mu}^2)$ in the positive and negative components. To this end, let us consider a smooth, symmetric, positive and compactly supported function $\Omega \in \mathcal{D} (\mathbb{R}^2)$ such that, denoting by $\iota : \mathbb{R}^2\times\mathbb{R}^2\rightarrow \mathbb{R}^2$ the map defined as $\iota (x, y) = x - y$, the function $\left.\Omega\circ\iota\right|_{D_{2 \mu} \times D_{2 \mu}}=1$. Focusing on $(H_0 - H)\Omega\in\mathcal{D}^\prime(\mathbb{R}^2\times\mathbb{R}^2)$, we set
\[ W (f\otimes f^\prime) \assign \int_{\mathbb{R}^4} \mathd \mu_z \mathd \mu_{z_1}
   (\mathcal{H}_0 (z - z_1) -\mathcal{H} (z - z_1)) \Omega (z - z_1) f (z) f^\prime
   (z_1). \]
By direct inspection and on account of Equation \eqref{Eq: integral kernels}, $W(f\otimes f^\prime) = (H_0 - H)  (f\otimes f^\prime)$ for $f, f^\prime \in \mathcal{D} (D_{\mu})$. Being $\mathcal{H}_0 -\mathcal{H}$ locally integrable, $(\mathcal{H}_0
-\mathcal{H}) \Omega$ is a measurable function and thus $\exists C > 0$
such that
\[ |W (f\otimes f^\prime) | \leqslant C \|f\|_{L^2(\mathbb{R}^2)} \|f^\prime\|_{L^2(\mathbb{R}^2)}. \]
Therefore one can extend per continuity $W$ to a bounded quadratic form over $L^2 (D_{\mu})$. In turn, on account of the Riesz representation theorem
$W$ can be written in terms of a bounded linear operator over $L^2 (D_{\mu})$ which acts as a multiplication in Fourier
space, {\it i.e.}, there exists $\hat{\mathcal{W}}\in L^\infty(\mathbb{R}^2)$ such that, for all $f,f^\prime\in L^2(\mathbb{R}^2)$,
\[ W (f\otimes f^\prime) = \int_{\mathbb{R}^2} \mathd \mu_k \hat{\mathcal{W}} (k) \hat{f} (k)
   \hat{f}^\prime (k). \]
Observe that, combining the Riemann-Lesbegue theorem with $(\mathcal{H}_0 -\mathcal{H}) \Omega$ being symmetric and real-valued, one can show that $\hat{\mathcal{W}}$ is in addition continuous, it vanishes at infinity and it is real-valued.

With these premises, the decomposition in positive and negative parts of
$(\mathcal{H}_0 -\mathcal{H}) \Omega$ follows suit. As a matter of facts, it suffices to
decompose the real and continuous function $\hat{\mathcal{W}}$ in its positive
and negative parts, {\tmem{i.e.}},
\[ \hat{\mathcal{W}} = \hat{\mathcal{P}} - \hat{\mathcal{N}}, \]
where $\hat{\mathcal{P}} \geqslant 0$ and $\hat{\mathcal{N}} \geqslant 0$. In turn, it descends that for all $f,f^\prime\in\mathcal{D}(\mathbb{R}^2)$
\[ N (f, f^\prime) \assign \int_{\mathbb{R}^4} \mathd \mu_x \mathd \mu_y \mathcal{N}
   (x - y) f (x) f^\prime (y), \]
\[ P (f, f^\prime) \assign \int_{\mathbb{R}^4} \mathd \mu_x \mathd \mu_y \mathcal{P}
   (x - y) f (x) f^\prime(y). \]
We also observe that
\[ N (x, x) :=\mathcal{N} (0) = \int_{\mathbb{R}} \mathd \mu_k
   \hat{\mathcal{N}} (k) = \| \hat{\mathcal{N}} \|_{L^1} . \]
In addition, as discussed in \cite[Prop. 2.5]{BPR23-Sine-Gordon-Massive}, $\hat{\mathcal{W}} \in L^1 (\mathbb{R}^2)$, implying in turn that $\hat{\mathcal{N}}$ is integrable. Thus $\exists K>0$ such that
\[ N (x, x) = \| \hat{\mathcal{N}} \|_{L^1} \leqslant K. \]
It is noteworthy that this constant can be chosen to be $K = \| \hat{\mathcal{W}}
\|_{L^1}$.

The analysis sketched above is tailored to the investigation of an interacting quantum field theory, but, since we are interested in proving the convergence of the $Q - S$-matrix, we need to improve it. To start with, we underline that, in view of Lemma~\ref{Lem:positivity-of-Q} and restricting ourselves to the compact region $D_\mu$ as per Equation \ref{Eq: spacetime diamond}, both $Q + \hbar H_0$ and $Q + \hbar H$ are positive, symmetric and real bi-distributions as required by Theorem~\ref{Thm:conditioning}. In addition we observe that the difference between $Q + \hbar H_0$ and $Q + \hbar H$ is nothing but $H_0 - H$. This entails that the operators $N$ and $P$ appearing in Theorem \ref{Thm:conditioning} are left unmodified by the kernel $Q$.

\
Before delving into the detailed proof of convergence of the $Q-S$ matrix, we need a technical result which will play a pivotal r\^ole in the forthcoming discussion.
\begin{lemma}[Cauchy determinant]
	\label{Lem:Cauchy-determinant} Consider $\mu > 0$ and $g \in \mathcal{D}
	(D_{\mu})$, where $D_\mu$ is chosen as per Equation \eqref{Eq: spacetime diamond}. Moreover, denoting by $V_a^g$ the vertex functional introduced in Equation \eqref{Eq: vertex functional}, let us define
	\[ \begin{array}{lll}
		\mathcal{O} & \assign & \tmop{ev}_0 (V_a^g \cdot_{\hbar H_0 + Q_m}
		\ldots \cdot_{\hbar H_0 + Q_m} V_a^g \cdot_{\hbar H_0 + Q_m} V_{- a}^g
		\cdot_{\hbar H_0 + Q_m} \ldots \cdot_{\hbar H_0 + Q_m} V_{- a}^g)
	\end{array} \]
	where $a_i = a$ if $i \in \{1, \ldots n\}$, $a_i = - a$ if $i \in \{n +
	1, \ldots, 2 n\}$ and where $a$ is chosen such that $\alpha \assign
	\frac{a^2 \hbar}{4 \pi} < 1$. 
	Then, for any $p \in [1, \alpha^{- 1})$, there exist two constants $C_Q(\mu)$, depending on $\mu$ in Equation \eqref{Eq: spacetime diamond} and $\tilde{C}$
	such that
	\begin{equation}\label{Eq: estimate O}
		|\mathcal{O}| \leqslant (C_Q(\mu))^{2 n^2} (4 \mu^2)^{n \alpha} \|g\|^{2
			n}_{L^q} (\tilde{C}^{2 n} (n!)^2)^{1 / p},
	\end{equation}
	where $q$ satisfies $1 / q + 1 / p = 1$.
\end{lemma}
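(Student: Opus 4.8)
The plan is to turn $\mathcal{O}$ into a Coulomb-gas integral and then bound it using the Cauchy determinant identity together with a subadditivity trick. First I would evaluate $\mathcal{O}$ explicitly. Since each $V^g_{\pm a}$ is an exponential vertex and the product $\cdot_{\hbar H_0 + Q_m}$ only contracts functional derivatives across distinct factors, iterating the exponential product over the $2n$ generators and evaluating at $\varphi=0$ produces the purely ``gas'' expression
\[ \mathcal{O} = \int_{\mathbb{R}^{2\cdot 2n}} \prod_{k=1}^{2n}\mathd\mu_{x_k}\,g(x_k)\; e^{-\sum_{1\le i<j\le 2n} a_i a_j\,(\hbar\mathcal{H}_0(x_i-x_j) + Q(x_i,x_j))}, \]
with $a_i=+a$ for $i\le n$ and $a_i=-a$ for $i>n$; the off-diagonal integrals converge because $\mathcal{H}_0$ is log-integrable and $Q$ is continuous by Lemma \ref{Lem: On the regularity of Q}.

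The next step is to separate the two kernels. The contribution of $Q$ is harmless: by Lemma \ref{Lem: On the regularity of Q}, $Q$ is a bounded continuous function on the compact set $\mathrm{supp}(g)^2\subseteq D_\mu^2$, hence $|e^{-a_ia_jQ(x_i,x_j)}|\le e^{a^2\|Q\|_{L^\infty(D_\mu^2)}}=:C_Q(\mu)$, and since the number of pairs is $\binom{2n}{2}\le 2n^2$ this yields the global prefactor $(C_Q(\mu))^{2n^2}$. For the massless kernel I would pass to light-cone coordinates $u=t-x$, $v=t+x$, in which $D_\mu$ becomes the square $(-\mu,\mu)^2$ and $|(x_i-x_j)^2|=|u_i-u_j|\,|v_i-v_j|$, so that $\hbar\mathcal{H}_0=-\tfrac{\hbar}{4\pi}\log|{\cdot}/4\mu^2|$ factorizes. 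Writing $\alpha=a^2\hbar/4\pi$ and counting that same-sign pairs number $n(n-1)$ while opposite-sign pairs number $n^2$, the $4\mu^2$'s collect into $(4\mu^2)^{n\alpha}$, and the remaining products organize into absolute values of two Cauchy matrices: with $w_j:=u_{n+j}$ one recognizes
\[ \prod_{i<j\le n}|u_i-u_j|^{\alpha}\prod_{i<j}|w_i-w_j|^{\alpha}\prod_{i,j}|u_i-w_j|^{-\alpha} = \left|\det\Big(\tfrac{1}{u_i-w_j}\Big)_{i,j=1}^n\right|^{\alpha}, \]
and the analogous identity in the $v$-variables, so that $|\mathcal{O}|\le (C_Q(\mu))^{2n^2}(4\mu^2)^{n\alpha}\int \prod_k g(x_k)\,|\det{}_u|^{\alpha}|\det{}_v|^{\alpha}$.

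It then remains to estimate this integral, and here I would apply H\"older with exponents $q$ and $p$. Since $\prod_k g(x_k)$ factorizes across the $2n$ copies of $\mathbb{R}^2$, its $L^q$ norm is exactly $\|g\|_{L^q}^{2n}$, accounting for that factor in \eqref{Eq: estimate O}. The surviving term is $\big(\int |\det{}_u|^{\alpha p}|\det{}_v|^{\alpha p}\big)^{1/p}$ over $(-\mu,\mu)^{2n}$; as the $u$- and $v$-determinants depend on disjoint variables and the domain is a product, this integral splits into a $u$-part times an identical $v$-part. For each part I would use that $\alpha p<1$, so $t\mapsto t^{\alpha p}$ is subadditive and the determinant expansion gives the permanent bound $|\det(1/(u_i-w_j))|^{\alpha p}\le \sum_{\sigma\in S_n}\prod_i |u_i-w_{\sigma(i)}|^{-\alpha p}$; integrating term by term and using permutation symmetry of the cube collapses the sum to $n!$ copies of the single-pair integral $\int_{(-\mu,\mu)^2}|u-w|^{-\alpha p}\,\mathd u\,\mathd w$, finite precisely because $\alpha p<1$. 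Calling this value $\tilde C$, each determinant integral is bounded by $n!\,\tilde C^{\,n}$ and their product by $(n!)^2\tilde C^{\,2n}$, which after the $1/p$ power gives the claimed $(\tilde C^{2n}(n!)^2)^{1/p}$.

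The main obstacle is this final integral bound: the naive Hadamard estimate of $\det C$ is far too lossy, and the whole point is the interplay between the Cauchy determinant identity, which converts the delicate mixed-sign Coulomb-gas product into a single determinant, and the subadditivity of $t\mapsto t^{\alpha p}$, available exactly in the finite ultraviolet regime $\alpha<1$ under the constraint $p<\alpha^{-1}$. The light-cone factorization decoupling the two-dimensional problem into two one-dimensional Cauchy determinants is what makes the strategy applicable at all; tracking the $4\mu^2$ powers and absorbing the Jacobians of the change of variables into $\tilde C$ is the remaining technical care required.
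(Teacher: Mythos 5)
Your proof is correct and follows essentially the same route as the paper's: the Coulomb-gas representation of $\mathcal{O}$, light-cone factorization into two Cauchy determinants, H\"older with exponents $(p,q)$ giving $\|g\|_{L^q}^{2n}$, and the permutation expansion (subadditivity of $t\mapsto t^{\alpha p}$ with $\alpha p<1$) yielding $(n!)^2\tilde C^{2n}$ from locally integrable single-pair integrals. The only cosmetic difference is in handling $Q$: you bound all $\binom{2n}{2}$ pairs uniformly by $e^{a^2\|Q\|_{L^\infty(D_\mu^2)}}$, whereas the paper uses positivity of $Q$ to discard the same-sign pairs and bounds only the $n^2$ opposite-sign ones -- both give a prefactor of the form $(C_Q(\mu))^{2n^2}$.
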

\begin{proof}
	We follow \cite[Thm. 2.7 ]{BPR23-Sine-Gordon-Massive}, though we need to account also for $Q$. First of all we observe that, on account of the specific
	form of $\mathcal{H}_0$, {\tmem{cf.}} Equation~\eqref{Eq:H0-kernel},
	\begin{align}\label{Eq: O}
		\mathcal{O} = & \tmop{ev}_0 \Bigl( \underbrace{V_a^g \cdot_{\hbar H_0
				+ Q} \ldots \cdot_{\hbar_0 + Q} V_a^g}_{n} \cdot_{\hbar H_0 + Q}
		\underbrace{V_{- a}^g \cdot_{\hbar H_0 + Q} \ldots \cdot_{\hbar H_0 +
				Q} V_{- a}^g}_n \Bigr)\nonumber\\
		= & \int_{\mathbb{R}^{4 n}} \mathd \mu_{z_1} \ldots \mathd \mu_{z_{2
				n}} e^{- \sum_{1 \leq i < j \leq 2 n} a_i a_j  (\hbar \mathcal{H}_0
			(z_i, z_j) + Q (z_i, z_j))} g (z_1) \ldots g (z_{2 n}) \nonumber\\
		= & (4 \mu^2)^{n \alpha} \int_{\mathbb{R}^{4 n}} \mathd \mu_{X, Y}
		e^{\sum_{1 \leq i, j \leq n} a^2 Q (x_i, y_j)} e^{- \sum_{0 \leq i < j
				\leq n} a^2  [Q (x_i, x_j) + Q (y_i, y_j)]} \times\nonumber\\
		& \qquad \qquad \qquad \qquad \qquad \times \left( \frac{\prod_{1
				\leq i < j \leq n} | (x_i - x_j)^2 || (y_i - y_j)^2 |}{\prod_{i = 1}^n
			\prod_{j = 1}^n | (x_i - y_j)^2 |} \right)^{\alpha} G (X, Y)
	\end{align}
	where we introduced the notation $X = (x_1, \ldots, x_n)$, $Y = (y_1,\ldots, y_n)$, $\mathd \mu_{X, Y} = \mathd \mu_{x_1} \ldots \mathd
	\mu_{y_n}$, and $G (X, Y) = g (x_1) \ldots g (y_n)$. Furthermore, we exploited the fact that for $i \leqslant n$ and $j > n$ it holds that $a_i a_j = - a^2$ while, in all the other allowed regimes, $a_i a_j = a^2$. Considering a point $z = (t, s) \in \mathbb{R}^2$ in Euclidean coordinates, we can represent it with respect to null counterparts by setting $z^u = t - s$ and $z^v = t + s$. Hence $z=(z^u, z^v)$ and the squared Lorentzian distance factorizes as $|z^2 | = |z^u | |z^v |$. By defining the $n \times n$ matrices
	\[ \mathcal{D}^v_{ij} = \frac{1}{(x_i^v - y_j^v)}, \qquad 
	\mathcal{D}^u_{ij} = \frac{1}{(x_i^u - y_j^u)}, \]
	one can write {\cite{BPR23-Sine-Gordon-Massive,Bahns-Rejzner_Sine-Gordon}}
	\[ \frac{\prod_{1 \leq i < j \leq n} | (x_i - x_j)^2 || (y_i - y_j)^2
		|}{\prod_{i = 1}^n \prod_{j = 1}^n | (x_i - y_j)^2 |} = | \det
	\mathcal{D}^v | | \det \mathcal{D}^u |, \]
	where
	\[ \det \mathcal{D}^b = \sum_{\pi} \prod_{i = 1}^n \frac{(- 1)^{| \pi
			|}}{(x_i^b - y_{\pi (i)}^b)}, \quad b \in \{u, v\}, \]
	and where $\pi$ runs over all partitions of $\{1, \ldots, n\}$. As a consequence Equation \eqref{Eq: O} can be rewritten in terms of Cauchy determinants as
	\begin{equation}
		\mathcal{O}= (4 \mu^2)^{n \alpha} \int_{\mathbb{R}^{4 n}} \mathd \mu_{X,
			Y} e^{\sum_{1 \leq i < j \leq n} a^2 Q (x_i, y_j)} e^{- \sum_{1 \leq i < j
				\leq n} a^2  [Q (x_i, x_j) + Q (y_i, y_j)]} | \det \mathcal{D}^v
		|^{\alpha} | \det \mathcal{D}^u |^{\alpha} G (X, Y) .
		\label{Eq:observable-to-estimate}
	\end{equation}
	In order to obtain a bound on the absolute value of $\mathcal{O}$, first of all we introduce
	\[ C_Q(\mu) \assign \left(\sup_{x, y \in D_{\mu}} e^{a^2 Q (x, y)}\right)^{\frac{1}{2}}, \]
	observing that, on account of the local boundedness of $Q$ inherited from its
	continuity, $C_Q(\mu) < \infty$ for every finite $\mu$. This entails that
	\[ e^{\sum_{1 \leq i, j \leq n} a^2 Q (x_i, y_j)} = \prod_{1 \leq i, j \leq
		n} e^{a^2 Q (x_i, y_j)} \leqslant (C^2_Q(\mu))^{n^2} \]
Combining this information with the estimate
	\[ e^{- a^2 Q (x, y)} \leqslant 1. \]
	we can rewrite Equation~\eqref{Eq:observable-to-estimate}
	to obtain
	\[ \mathcal{O} \leqslant (C_Q(\mu))^{2 n^2}  (4 \mu^2)^{n \alpha} 
	\int_{\mathbb{R}^{4 n}} \mathd \mu_{X, Y} | \det \mathcal{D}^v |^{\alpha}
	| \det \mathcal{D}^u |^{\alpha} G (X, Y) . \]
	In view of \cite[Thm. 3.4]{Frolich-CMP-76-2-d-stat-mech} it holds that, for any $p$ and $q$ such that $1 / q + 1 / p = 1$,
	\[ \mathcal{O} \leqslant (C_Q(\mu))^{2 n^2}  (4 \mu^2)^{n \alpha} \|G\|_{L^q
		(D_{\mu}^{2 n})} \|| \det \mathcal{D}^v |^{\alpha} | \det \mathcal{D}^u
	|^{\alpha} \|_{L^p (D_{\mu}^{2 n})}. \]
 Having assumed that $\alpha = \frac{a^2 \hbar}{4 \pi} < 1$, we choose $p \geqslant 1$ so that $\alpha p < 1$. This entails that the product
	\[ \prod_{i = 1}^n \frac{1}{|x_i^b - y_{\pi (i)}^b |^{\alpha p}}, \]
	is locally integrable and, thus, $\|| \det \mathcal{D}^v |^{\alpha} | \det
	\mathcal{D}^u |^{\alpha} \|_{L^p (D_{\mu}^{2 n})} \lesssim 1$. More
	precisely,
	\begin{align*}
		\mathcal{O}  \leqslant & (C_Q(\mu))^{2 n^2} (4 \mu^2)^{n \alpha} \|G\|_{L^q
			(D_{\mu}^{2 n})} \left( \sum_{\pi} \sum_{\pi'} \int_{\mathbb{R}^{4 n}}
		\mathd \mu_{X, Y} \prod_{i = 1}^n \frac{1}{|x_i^u - y_{\pi (i)}^u
			|^{\alpha p}}  \prod_{\ell = 1}^n \frac{1}{|x_{\ell}^v - y_{\pi'
				(\ell)}^v |^{\alpha p}} \right)^{\frac{1}{p}}\\
		\leqslant & (C_Q(\mu))^{2 n^2} (4 \mu^2)^{n \alpha} \|G\|_{L^q
			(D_{\mu}^{2 n})} (\tilde{C}^{2 n} (n!)^2)^{1 / p},
	\end{align*}
	where the factor $(n!)^2$ comes from counting the number of admissible permutations $\pi$ and $\pi'$, while $K$ is a constant such that
	\[ \int_{\mathbb{R}^{4 n}} \mathd \mu_{X, Y} \prod_{i = 1}^n \frac{1}{|x_i^u
		- y_{\pi (i)}^u |^{\alpha p}}  \prod_{\ell = 1}^n \frac{1}{|x_{\ell}^v -
		y_{\pi' (\ell)}^v |^{\alpha p}} \leqslant K^{2 n} . \]
	Finally, observing that
	\[ \|G\|_{L^q (D_{\mu}^{2 n})} = \|g\|_{L^q (D_{\mu})}^{2 n}, \]
	we conclude that
	\[ \mathcal{O} \leqslant (C_Q)^{2 n^2} (4 \mu^2)^{n \alpha} \|g\|_{L^q
		(D_{\mu})}^{2 n} (\tilde{C}^{2 n} (n!)^2)^{1 / p} . \]
\end{proof}

We are in a position to prove a key result on the convergence of the $Q - S$-matrix introduced in Equation \eqref{Eq:Q-S-matrix}.

\begin{theorem}  \label{Thm:estimates-Q-S-matrix-I}
Let $g \in \mathcal{D} (D_{\mu})$ be a
  positive test-function supported on the compact domain $D_\mu$ as per Equation \eqref{Eq: spacetime diamond} and let $[\Gamma_Q (S (\lambda V))]_n$ be the $n$-th perturbative coefficient of the $Q - S$-matrix introduced in Equation \eqref{Eq:Q-S-matrix} with the interaction functional
  \[ V_g (\varphi) \assign \int_{\mathbb{R}^2} \mathd \mu_z \cos (a \varphi
     (z)), \]
  with $0<a < 4 \pi \hbar^{- 1}$. Then, setting $\alpha
  \assign \frac{a^2 \hbar}{4 \pi}$, there exist positive constants $\tilde{C}$, $C_Q(\mu)$ and $K$
  such that
  \begin{equation}
    | [\Gamma_Q (S (\lambda V)) (\varphi)]_n | \leqslant \frac{2 (2 \mu)^{n
    \alpha} (C_Q)^{n^2}}{(n!)^{1 - 1 / p}} \left( \frac{2 \lambda e^{2^{- 1}
    a^2 K}}{\hbar} \right)^n \|g\|^n_{L^q} C^{n / p},\qquad \forall \varphi\in\mathcal{E} (\mathbb{R}^2),
    \label{Eq:bound-n-th-order-QS}
  \end{equation}
  for any $p \in [1, \alpha^{- 1})$ such that $1 / q + 1 / p = 1$.
\end{theorem}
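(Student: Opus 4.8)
The plan is to reduce the stated estimate, through a short chain of inequalities, to the Cauchy-determinant bound of Lemma~\ref{Lem:Cauchy-determinant}. The starting point is already Equation~\eqref{Eq:n-th-order-bound}, which dominates the $n$-th coefficient of the non-commutative $Q-S$-matrix by the evaluation at $\varphi=0$ of the commutative time-ordered product built from the symmetric, real kernel $\hbar H+Q$,
\[ |[\Gamma_Q(S(\lambda V_g))]_n| \leqslant \frac{1}{n!}\left(\frac{\lambda}{\hbar}\right)^n \tmop{ev}_0[\mathcal{T}_n^{\hbar H+Q}(V_g\otimes\ldots\otimes V_g)]. \]
Since the right-hand side is independent of $\varphi$, the estimate \eqref{Eq:bound-n-th-order-QS} for every $\varphi\in\mathcal{E}(\mathbb{R}^2)$ will follow at once from a bound on this single number. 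Hence the whole problem is to estimate the massive, commutative evaluation $\tmop{ev}_0[\mathcal{T}_n^{\hbar H+Q}(V_g^{\otimes n})]$.

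First I would pass from the massive kernel $\hbar H$ to the massless one $\hbar H_0$ via the conditioning Theorem~\ref{Thm:conditioning}, applied with $w_1=Q+\hbar H$ and $w_0=Q+\hbar H_0$. As recorded in the text, these kernels are positive and symmetric on $D_\mu$ by Lemma~\ref{Lem:positivity-of-Q}, while their difference equals $H_0-H$ and is therefore independent of $Q$; consequently the positive/negative decomposition $H_0-H=P-N$ and the diagonal bound $N(x,x)\leqslant K$ constructed earlier apply unchanged, with $N$ and $P$ left unmodified by $Q$. Because all the integrands are positive and the conditioning inequalities respect the grading by the number of vertex factors, they descend to each perturbative order with a uniform, $n$-independent constant, yielding
\[ \tmop{ev}_0[\mathcal{T}_n^{\hbar H+Q}(V_g^{\otimes n})] \leqslant 2\left(2 e^{\frac{a^2}{2}K}\right)^n \tmop{ev}_0[\mathcal{T}_n^{\hbar H_0+Q}(V_g^{\otimes n})], \]
which simultaneously accounts for the overall factor $2$ and for the rescaling $\lambda\mapsto 2\lambda e^{a^2K/2}$ appearing in the final bound.

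It then remains to estimate the massless evaluation. Expanding $V_g=\tfrac12(V_{a,g}+V_{-a,g})$ turns $\tmop{ev}_0[\mathcal{T}_n^{\hbar H_0+Q}(V_g^{\otimes n})]$ into $2^{-n}$ times a sum over the $2^n$ charge assignments $\vec\epsilon\in\{\pm1\}^n$ of the associated correlators. Each correlator is controlled by exactly the null-coordinate factorization and Hölder argument of Lemma~\ref{Lem:Cauchy-determinant}: same-sign pairs produce the bounded positive powers $|(x_i-x_j)^2/4\mu^2|^{\alpha}$, opposite-sign pairs produce the singular factors which are locally $L^p$-integrable precisely because $\alpha p<1$, and the kernel $Q$ contributes the boundedness constant $C_Q(\mu)=(\sup_{D_\mu}e^{a^2Q})^{1/2}\geqslant 1$. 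The most balanced, charge-neutral sector is the dominant one; feeding it into Lemma~\ref{Lem:Cauchy-determinant} with parameter $m=\lfloor n/2\rfloor$ and using $((n/2)!)^2\leqslant n!$ together with $C_Q\geqslant 1$ produces the factors $(2\mu)^{n\alpha}$, $(C_Q)^{n^2}$, $\|g\|_{L^q}^{n}$ and $C^{n/p}(n!)^{1/p}$, while the prefactor $2^{-n}$ compensates the cardinality $2^n$ of the charge sum.

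Finally I would assemble the three estimates: combining the $1/n!$ of Equation~\eqref{Eq:n-th-order-bound} with the $(n!)^{1/p}$ arising from the Cauchy-determinant bound gives $(n!)^{1/p-1}=(n!)^{-(1-1/p)}$, and collecting the remaining constants with the $(\lambda/\hbar)^n(2e^{a^2K/2})^n$ prefactor reproduces \eqref{Eq:bound-n-th-order-QS}. The step I expect to be most delicate is the interface between conditioning and the perturbative grading: Theorem~\ref{Thm:conditioning} is stated for the full generating functional $\tmop{ev}_0[\exp_{w}(\lambda V_g)]$, so one must justify that its inequalities hold order by order with an $n$-independent constant, and one must at the same time verify that the non-neutral charge sectors are uniformly dominated by the neutral one, so that the combinatorial factors $\binom{n}{k}$ and $2^{-n}$ collapse into the clean bound rather than spoiling the $(n!)^{1/p}$ growth.
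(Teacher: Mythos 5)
Your starting point (Equation~\eqref{Eq:n-th-order-bound}), the order-by-order use of the conditioning Theorem~\ref{Thm:conditioning} with $w_1=Q+\hbar H$, $w_0=Q+\hbar H_0$, and the final assembly of constants all match the paper's proof. The genuine gap is the step in which you dispose of the charge sum: you assert that the charge-neutral sector uniformly dominates all mixed sectors at the \emph{same} order $n$, and then feed that single sector into Lemma~\ref{Lem:Cauchy-determinant} ``with parameter $m=\lfloor n/2\rfloor$''. This domination is never proved, and you yourself flag it as unverified. It is also not a cosmetic issue: Lemma~\ref{Lem:Cauchy-determinant} is formulated \emph{only} for exactly neutral configurations ($n$ charges $+a$ and $n$ charges $-a$, hence an even total number of points), because the identity $\prod_{i<j}|(x_i-x_j)^2||(y_i-y_j)^2|/\prod_{i,j}|(x_i-y_j)^2|=|\det\mathcal{D}^u||\det\mathcal{D}^v|$ requires equally many $x$'s and $y$'s. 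For odd $n$ your ``neutral sector at $n$ points'' does not exist, and for non-neutral charge assignments the Cauchy-determinant structure, and with it the $L^p$ estimate borrowed from Fr\"ohlich, is simply unavailable.

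The paper closes exactly this hole with a different device: the Cauchy--Schwarz inequality for the positive semi-definite commutative product $\cdot_{\hbar H_0+Q}$, namely $|\tmop{ev}_0(A)|^2\leqslant|\tmop{ev}_0(A^{\ast}\cdot_{\hbar H_0+Q}A)|$. Applied with $A$ the mixed product with $k$ factors $V_{a,g}$ and $n-k$ factors $V_{-a,g}$ (and using $V_{a,g}^{\ast}=V_{-a,g}$), it bounds \emph{every} charge sector at order $n$ by the square root of one and the same exactly-neutral correlator at $2n$ points, which is precisely the object Lemma~\ref{Lem:Cauchy-determinant} estimates; taking the square root of $(C_Q)^{2n^2}(4\mu^2)^{n\alpha}\|g\|_{L^q}^{2n}(\tilde{C}^{2n}(n!)^2)^{1/p}$ produces the factors $(C_Q)^{n^2}(2\mu)^{n\alpha}\|g\|_{L^q}^{n}(C^n n!)^{1/p}$ in the statement, and the combinatorics then collapses through $2^{-n}\sum_{k}\binom{n}{k}=1$. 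So the missing idea is this Cauchy--Schwarz reduction to a $2n$-point neutral correlator (which is why the lemma is stated at $2n$ points in the first place); without it, or an actual proof of your neutrality-domination claim, the argument does not close.
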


\begin{proof}
  As highlighted by Equation~\eqref{Eq:n-th-order-bound}, it suffices to exhibit
  a suitable bound for the $n$-th term in the expansion of 
  \[ \tmop{ev}_0 \left[ \exp_{\hbar H + Q} \left( \frac{\lambda}{\hbar} V_g
  \right) \right] , \]
  namely
  \[ \frac{1}{n!} \left( \frac{\lambda}{\hbar} \right)^n \tmop{ev}_0
     [\mathcal{T}_n^{\hbar H + Q} (V_g \otimes \ldots \otimes V_g)]. \]
  On account of the discussion about the positive and negative parts of the integral kernels under scrutiny as per Theorem \ref{Thm:conditioning} and as per the following discussion, we can conclude that
  \[ \tmop{ev}_0 \left[ \exp_{\hbar H + Q} \left( \frac{\lambda}{\hbar} V_g
     \right) \right] \leqslant 2 \tmop{ev}_0 \left[ \exp_{\hbar H_0 + Q}
     \left( \frac{2 \lambda e^{2^{- 1} a^2 K}}{\hbar} V_g \right) \right], \]
  with $K$ a suitable constant, whose dependence on the underlying data has no relevance in the current analysis. Hence, thanks to conditioning, it suffices to control the series associated to the massless theory in order to bound the massive one. Following the strategy outlined in \cite{BPR23-Sine-Gordon-Massive}, we can exploit the Cauchy-Schwartz inequality, whose validity is guaranteed by the positivity of $\hbar H_0 + Q$ on the support of $g^{\otimes 2}$. More precisely, for any functionals $A$ and $B$ in the algebra $\mathcal{A}_{\hbar H_0 + Q}^V$, see Equation \eqref{Eq: algebra 2}, it holds that
  \[ | \tmop{ev}_0 (A \cdot_{\hbar H_0 + Q} B) | \leqslant \sqrt{| \tmop{ev}_0
     (A^{\ast} \cdot_{\hbar H_0 + Q} A) |} \sqrt{| \tmop{ev}_0 (B^{\ast}
     \cdot_{\hbar H_0 + Q} B) |}, \]
  which yields, setting $B=1$,
  \begin{equation}\label{Eq: CS}
  	| \tmop{ev}_0 (A) |^2 \leqslant | \tmop{ev}_0
  	(A^{\ast} \cdot_{\hbar H_0 + Q} A) | . 
  \end{equation}
  As a consequence, observing that $V_{a, g}^{\ast} = V_{- a, g}$, a direct application of Equation \eqref{Eq: CS} implies
  \[ \tmop{ev}_0 (V_g \cdot_{\hbar H_0 + Q} \ldots \cdot_{\hbar H_0 + Q} V_g)
     = \qquad \qquad \qquad \qquad \qquad \qquad \qquad \qquad \qquad \qquad
     \qquad \qquad  \]
\begin{align*}
       \qquad\qquad= & \frac{1}{2^n} \sum_{k = 0}^n \binom{n}{k} \tmop{ev}_0 \left(
       \underbrace{V_{a, g} \cdot_{\hbar H_0 + Q} \ldots \cdot_{\hbar H_0 + Q}
       V_{a, g}}_k \cdot_{\hbar H_0 + Q} \underbrace{V_{- a, g} \cdot_{\hbar H_0
       + Q} \ldots \cdot_{\hbar H_0 + Q} V_{- a, g}}_{n-k} \right)\\
       \leqslant & \frac{1}{2^n} \sum_{k = 0}^n \binom{n}{k} \left[
       \tmop{ev}_0 \left( \underbrace{V_{a, g} \cdot_{\hbar H_0 + Q} \ldots
       \cdot_{\hbar H_0 + Q} V_{a, g}}_n \cdot_{\hbar H_0 + Q} \underbrace{V_{-
       a, g} \cdot_{\hbar H_0 + Q} \ldots \cdot_{\hbar H_0 + Q} V_{- a, g}}_n
       \right) \right]^{\frac{1}{2}}\\
       = & \left[ \tmop{ev}_0 \left( \underbrace{V_{a, g} \cdot_{\hbar H_0 +
       Q} \ldots \cdot_{\hbar H_0 + Q} V_{a, g}}_n \cdot_{\hbar H_0 + Q}
       \underbrace{V_{- a, g} \cdot_{\hbar H_0 + Q} \ldots \cdot_{\hbar H_0 +
       Q} V_{- a, g}}_n \right) \right]^{\frac{1}{2}} .\\
     \end{align*}
  Since all hypotheses of Lemma~\ref{Lem:Cauchy-determinant} are met, we apply the estimate in Equation \eqref{Eq: estimate O} to obtain
  \[ \tmop{ev}_0 (V_g \cdot_{\hbar H_0 + Q} \ldots \cdot_{\hbar H_0 + Q} V_g)
     \leqslant (C_Q)^{n^2} (2 \mu)^{n \alpha} \|g\|^n_{L^q} (C^n n!)^{1 / p}.
  \]
  As a consequence, it descends that
  \begin{align*}
       | [\Gamma_Q (S (\lambda V)) (\varphi)]_n |  \leqslant & \frac{2}{n!}
       \left( \frac{2 \lambda e^{2^{- 1} a^2 K}}{\hbar} \right)^n (C_Q)^{n^2}
       (2 \mu)^{n \alpha} \|g\|^n_{L^q} (C^n n!)^{1 / p}\\
       = & \frac{2 (2 \mu)^{n \alpha} (C_Q)^{n^2}}{(n!)^{1 - 1 / p}} \left(
       \frac{2 \lambda e^{2^{- 1} a^2 K}}{\hbar} \right)^n \|g\|^n_{L^q} C^{n
       / p},
     \end{align*}
 which concludes the proof.
\end{proof}

\noindent As a consequence Theorem \ref{Thm:estimates-Q-S-matrix-I}, we state the following crucial corollary.
\begin{corollary}\label{Cor: convergence S matrix}
	For any, but fixed $p\in[1,\alpha^{-1})$ with $\alpha$ defined as in Theorem~\ref{Thm:estimates-Q-S-matrix-I}, the series
	\[ \Gamma_Q (S (\lambda V)) (\varphi) = \sum_{n \geqslant 0} [\Gamma_Q (S
	(\lambda V)) (\varphi)]_n, \]
	is uniformly convergent for any $\varphi \in \mathcal{E} (\mathbb{M})$.
\end{corollary}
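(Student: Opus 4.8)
The plan is to obtain the corollary from the $n$-th order estimate of Theorem~\ref{Thm:estimates-Q-S-matrix-I} by a Weierstrass $M$-test. The crucial structural feature of the bound \eqref{Eq:bound-n-th-order-QS} is that its right-hand side, call it $M_n$, is completely independent of the configuration $\varphi\in\mathcal{E}(\mathbb{R}^2)$: one has $|[\Gamma_Q(S(\lambda V))(\varphi)]_n|\leqslant M_n$ for all $n$ and all $\varphi$ simultaneously. Consequently, if the numerical series $\sum_{n\geqslant 0}M_n$ converges, the partial sums of $\sum_n[\Gamma_Q(S(\lambda V))(\varphi)]_n$ satisfy a Cauchy estimate which is uniform in $\varphi$, and this is precisely the uniform convergence on $\mathcal{E}(\mathbb{R}^2)$ asserted in the statement. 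The whole corollary thus reduces to the convergence of the single numerical series $\sum_n M_n$, to which the rest of the argument is devoted.

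To establish $\sum_n M_n<\infty$ I would use the root test (the ratio test works equally well). Up to the $\varphi$-independent constants collected from \eqref{Eq:bound-n-th-order-QS}, the generic term is a product of purely geometric factors in $n$ --- the contributions of $(2\mu)^{n\alpha}$, of $\big(2\lambda e^{2^{-1}a^2K}\hbar^{-1}\big)^n$, of $\|g\|^n_{L^q}$ and of the power of $C_Q$ --- divided by the factorial $(n!)^{1-1/p}$. Since $p>1$ forces the exponent $1-1/p$ to be strictly positive, Stirling's formula yields $(n!)^{-(1-1/p)/n}\to 0$, and the plan is to show that this factorial suppression dominates the geometric growth accumulated in the numerator, so that $\limsup_n M_n^{1/n}=0<1$. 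This gives convergence of $\sum_n M_n$ for every admissible $p\in[1,\alpha^{-1})$, and the $M$-test then upgrades it to uniform convergence over $\mathcal{E}(\mathbb{R}^2)$.

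The main obstacle is exactly this dominance check: one has to be sure that the constants accumulated in the numerator, in particular the factor $C_Q$ generated by the bounded yet nontrivial correlation $Q$ on the diamond $D_\mu$ of \eqref{Eq: spacetime diamond}, grow slowly enough in $n$ to be overwhelmed by $(n!)^{1-1/p}$. This is where the finite ultraviolet regime is indispensable: the hypothesis $\alpha=\frac{a^2\hbar}{4\pi}<1$ is what permits choosing the conjugate exponent $p$ strictly greater than $1$ while keeping $\alpha p<1$, the integrability threshold that made $|\det\mathcal{D}^u|^{\alpha}|\det\mathcal{D}^v|^{\alpha}$ locally $L^p$-integrable in Lemma~\ref{Lem:Cauchy-determinant}. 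The delicate bookkeeping therefore lies in the simultaneous constraints $\alpha p<1$ and $1-1/p>0$: the former is needed for the estimate of Theorem~\ref{Thm:estimates-Q-S-matrix-I} to exist at all, while the latter is what produces the factorial gain responsible for convergence. With $\sum_n M_n$ shown to be finite, the $\varphi$-uniform majorisation closes the argument and yields the claimed uniform convergence.
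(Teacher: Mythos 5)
Your overall strategy is exactly the one the paper uses: Theorem~\ref{Thm:estimates-Q-S-matrix-I} supplies a majorant $M_n$ that is independent of $\varphi$, and the paper's own proof of the corollary is precisely your $M$-test-plus-Stirling remark. The problem is that your dominance check --- which you yourself identify as the crux --- fails as written. In Equation~\eqref{Eq:bound-n-th-order-QS} the constant $C_Q$ enters with exponent $n^2$, not $n$: it originates in Lemma~\ref{Lem:Cauchy-determinant} from estimating each of the $n^2$ cross-contractions $e^{a^2Q(x_i,y_j)}$ separately by its supremum, which produces $(C_Q^2)^{n^2}$. Since $Q(z,z')\geq 0$ on $D_\mu$ (Lemma~\ref{Lem: On the regularity of Q}), one has $C_Q\geq 1$, with equality only in the degenerate case where $Q$ vanishes identically on the diamond. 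Whenever $C_Q>1$ the factor $(C_Q)^{n^2}=e^{n^2\log C_Q}$ is super-exponential, and by Stirling there is a constant $c>0$ such that
\[
M_n^{1/n}\;\geq\; c\,(C_Q)^{\,n}\, n^{-(1-1/p)}\;\longrightarrow\;\infty ,
\]
so your claim $\limsup_n M_n^{1/n}=0$ is false: the $n$-th root of $(n!)^{1-1/p}$ decays only polynomially and cannot compensate a term growing like $e^{cn^2}$. In fairness, the paper's one-line proof suffers from the same imprecision, describing all numerator factors as ``exponential growth''; by writing the root test out explicitly you have exposed this gap rather than closed it.

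The gap can be repaired, and the repair shows which ingredient is really needed: the \emph{positivity} of $Q$ as a quadratic form (Lemma~\ref{Lem:positivity-of-Q}), not merely its boundedness on $D_\mu$. In Lemma~\ref{Lem:Cauchy-determinant} the full $Q$-dependent factor is $e^{-\sum_{1\leq i<j\leq 2n}a_ia_jQ(z_i,z_j)}$ with charges $a_i=\pm a$, and positivity gives $\sum_{i,j}a_ia_jQ(z_i,z_j)\geq 0$, hence
\[
-\sum_{1\leq i<j\leq 2n}a_ia_jQ(z_i,z_j)\;\leq\;\frac{1}{2}\sum_{i=1}^{2n}a_i^2\,Q(z_i,z_i)\;\leq\; n\,a^2\sup_{z\in D_\mu}Q(z,z),
\]
so the whole $Q$-contribution is bounded by $\bigl(e^{a^2\sup_{D_\mu}Q}\bigr)^{n}$, i.e.\ it is genuinely geometric in $n$. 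With $(C_Q)^{n^2}$ replaced by such a factor, every term in the numerator of $M_n$ is of the form $c^n$, and then your root test, together with the constraint bookkeeping $p>1$ and $\alpha p<1$ (which you state correctly), does yield $\sum_n M_n<\infty$ and, via the $\varphi$-uniform majorisation, the asserted uniform convergence on $\mathcal{E}(\mathbb{R}^2)$.
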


\begin{proof}
	Absolute convergence directly follows from the observation that, as inferred from Stirling formula, the factorial $(n!)^{1 - 1 / p}$ at the denominator of
	Equation~\eqref{Eq:bound-n-th-order-QS} for $p > 1$, tames the exponential growth due to the other factors.
\end{proof}

\begin{remark}
  We observe that the argument yielding absolute convergence of the $Q -
  S$-matrix holds true also for the case of $\Gamma_Q ((S (\lambda
  V))^{\star_{\hbar \omega} - 1})$. As we have shown in
  Equation~\eqref{Eq:inverse-Q-S-matrix},
  \[ \Gamma_Q (S (\lambda V)^{\star_{\hbar \omega} - 1}) = \sum_{n \geqslant
     0} \frac{1}{n!} \left( \frac{- i \lambda}{\hbar} \right)^n
     \mathcal{T}_n^{\hbar \Delta_{\tmop{AF}} + Q} (\Gamma_Q (V) \otimes \ldots
     \otimes \Gamma_Q (V)). \]
  The only difference with respect to $\Gamma_Q (S (\lambda
  V))$ lies in the fact that we have to work with the
  anti-Feynman rather than the Feynman propagator. Nonetheless, since
  $\tmop{Re} (\Delta_{\tmop{AF}}) = \tmop{Re} (\Delta_F) = H$, the convergence result follows suit.
\end{remark}

\begin{remark}
Observe that the convergence of the $Q-S$-matrix proven in Theorem \ref{Thm:estimates-Q-S-matrix-I} is guaranteed only if we restrict the attention to the diamond $D_\mu$ as per Equation \eqref{Eq: spacetime diamond}, where $\mu$ is arbitrary but fixed. In \cite{BPR23-Sine-Gordon-Massive} this limitation has been removed, though the relevant proof strongly relies on the underlying kernels being translation invariant both in time and in space. Alas, we cannot extend this result to the case under scrutiny since the covariance $Q$ as per Equation \eqref{Eq:def-of-Q} is defined in terms of cut-off along the time directions which inhibits translation invariance. This limitation is not a major hurdle as far as the construction of solutions for the stochastic sine-Gordon model is concerned as we discuss in Section \ref{Sec:classical-limit}.
\end{remark}

\subsection{Convergence of the interacting field}\label{Sec:conv-interacting-field}
Goal of this section is to elaborate on Theorem \ref{Thm:estimates-Q-S-matrix-I} deriving a convergence result for the formal power series in Equation~\eqref{Eq:Q-Bogoliubov}. We recall that the $Q$-deformed version of the Bogoliubov map, for $F\in\mathcal{F}_{\textrm{loc}}^{\otimes m}(\mathbb{R}^2)$:
\begin{equation}\label{Eq:Bog-Map}
  \Gamma_Q [R_{\lambda V} (F)] = \Gamma_Q ((S (\lambda V_g))^{\star_{\hbar
  \omega} - 1}) \star_{Q + \hbar \omega} [(\Gamma_Q (S (\lambda V_g)) \star_{Q
  + \hbar \Delta_F} \Gamma_Q (F))] . 
\end{equation}
In this section we focus on the functional $F_f = \Phi_f$, $f \in\mathcal{D} (\mathbb{R}^2)$, see Example \ref{Ex: basic functionals}, whose interacting version reads
\[ \Gamma_Q [R_{\lambda V} (\Phi_f)] = \Gamma_Q ((S (\lambda
   V_g))^{\star_{\hbar \omega} - 1}) \star_{Q + \hbar \omega} [(\Gamma_Q (S
   (\lambda V_g)) \star_{Q + \hbar \Delta_F} \Gamma_Q (\Phi_f))] .
   \label{Eq:Q-interacting field} \]
In order to simplify the notation, we write
$$\Phi_{I, f} \assign \Gamma_Q [R_{\lambda V} (\Phi_f)].$$
The starting point consists of exhibiting a series expansion for
the interacting field which is suitable for studying its convergence. On account of the linearity of the field functional, it descends that, since only the first order functional derivative of $\Phi_f$ is
non vanishing, 
\[ \begin{array}{lll}
     \Gamma_Q (S (\lambda V_g)) \star_{Q + \hbar \Delta_F} \Gamma_Q (\Phi_f) &
     = & \Gamma_Q (S (\lambda V_g)) \star_{Q + \hbar \Delta_F} \Phi_f\\
     & = & \Gamma_Q (S (\lambda V_g)) \Phi_f + \langle \Gamma_Q (S (\lambda
     V_g))^{(1)}, (Q + \hbar \Delta_F) \Phi_f^{(1)} \rangle,
   \end{array} \]
where, in the first step, we used that $\Gamma_Q (\Phi_f) = \Phi_f$. As a consequence
\begin{equation}\label{Eq: interacting field decomposition}
	 \begin{array}{lll}
		\Phi_{I, f} & = & \Gamma_Q ((S (\lambda V_g))^{\star_{\hbar \omega} - 1})
		\star_{Q + \hbar \omega} (\Gamma_Q (S (\lambda V_g)) \Phi_f) +\\
		&  & + \Gamma_Q ((S (\lambda V_g))^{\star_{\hbar \omega} - 1}) \star_{Q
			+ \hbar \omega} (\langle \Gamma_Q (S (\lambda V_g))^{(1)}, (Q + \hbar
		\Delta_F) \Phi_f^{(1)} \rangle)\\
		& \backassign & J (V_g, \Phi_f) + M (V_g, \Phi_f) .
	\end{array} 
\end{equation}
We study separately $J (V_g, \Phi_f)$ and $M (V_g,\Phi_f)$. Starting with the first term, we shall exploit the following lemma, which is an immediate consequence of the Leibniz rule. In the next section we shall prove Lemma \ref{Lem:Leibniz-pro} as a generalization.
\begin{lemma}[Lemma 9, {\cite{Bahns-Rejzner_Sine-Gordon}}]
  \label{Lemma:Leibniz}Let $K$ be a generic integral kernel and let $A, B, C$ be smooth functionals as per Definition \ref{Def:functionals}, with $C$ linear. Then
  \begin{equation}\label{Eq: Leibniz}
  	 \mathcal{M} \circ e^{D_K} [A \otimes (BC)] = [\mathcal{M} \circ e^{D_K} (A \otimes B)] C + \mathcal{M}
  	\circ e^{D_K} [\langle A^{(1)}, KC^{(1)} \rangle \otimes B] ,
  \end{equation}
  where $A^{(1)}$, $C^{(1)}$ are first-order functional derivatives.
\end{lemma}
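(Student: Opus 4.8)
The plan is to recognise the claimed identity as a statement about how the bidifferential operator $D_K$ interacts with the operation of multiplying the second tensor slot by the linear functional $C$. Writing $D_K=\int \mathd\mu_x\,\mathd\mu_y\,K(x,y)\,\frac{\delta}{\delta\varphi_1(x)}\otimes\frac{\delta}{\delta\varphi_2(y)}$, where $\varphi_1,\varphi_2$ label the two slots subsequently identified by $\mathcal{M}$, I would first isolate the single algebraic input that drives the whole computation: since $C$ is linear, one has $C^{(2)}=0$ and $C^{(1)}$ is a $\varphi$-independent kernel, whence $\frac{\delta}{\delta\varphi_2(y)}(B C)=\bigl(\frac{\delta}{\delta\varphi_2(y)}B\bigr)C + B\,C^{(1)}(y)$. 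Denoting by $m_C$ the operator ``multiply the second slot by $C$'', this says exactly that $[\frac{\delta}{\delta\varphi_2(y)},m_C]$ is multiplication by the constant $C^{(1)}(y)$.

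The key step is then to compute the commutator of $D_K$ with $m_C$. Because the slot-$1$ derivative commutes with $m_C$ and $C^{(1)}(y)$ is $\varphi$-independent, one finds
\[ [D_K,m_C]=\int \mathd\mu_x\,\mathd\mu_y\,K(x,y)\,C^{(1)}(y)\,\frac{\delta}{\delta\varphi_1(x)} = \int \mathd\mu_x\,(KC^{(1)})(x)\,\frac{\delta}{\delta\varphi_1(x)}=:E, \]
a first-order directional derivative acting only on the first slot, with constant coefficient $KC^{(1)}$. Crucially, since functional derivatives commute among themselves, $E$ commutes with $D_K$, i.e.\ $[D_K,E]=0$. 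Hence the adjoint series truncates after one term,
\[ e^{D_K}\,m_C\,e^{-D_K}=m_C+[D_K,m_C]=m_C+E, \]
so that $e^{D_K}m_C=(m_C+E)\,e^{D_K}$ as operators.

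Applying this operator identity to $A\otimes B$ gives $e^{D_K}[A\otimes(BC)] = m_C\,e^{D_K}[A\otimes B]+E\,e^{D_K}[A\otimes B]$, and using once more $[D_K,E]=0$ together with $E(A\otimes B)=(EA)\otimes B=\langle A^{(1)},KC^{(1)}\rangle\otimes B$, the second summand becomes $e^{D_K}[\langle A^{(1)},KC^{(1)}\rangle\otimes B]$. I would finish by applying the pullback $\mathcal{M}$, i.e.\ setting $\varphi_1=\varphi_2=\varphi$: on the first summand $\mathcal{M}$ factors the undifferentiated multiplier $C(\varphi_2)$ out as the pointwise factor $C$, yielding $[\mathcal{M}\circ e^{D_K}(A\otimes B)]\,C$, while the second summand is already in the desired shape. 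This reproduces Equation~\eqref{Eq: Leibniz} verbatim.

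The point requiring care is not the algebra but its justification: the steps above are manipulations of the formal exponential $e^{D_K}$, so I would check that each power $D_K^n$ yields a well-defined pairing on the functionals at hand (which, in the low-dimensional setting and under the wavefront-set constraints of Definition~\ref{Def: enlarged class functionals}, is unproblematic). As a cross-check, and to keep the argument self-contained at finite order, I would alternatively expand $e^{D_K}=\sum_n\frac{1}{n!}D_K^n$ and apply the generalised Leibniz rule to $\frac{\delta^n}{\delta\varphi_2(y_1)\cdots\delta\varphi_2(y_n)}(BC)$: linearity of $C$ forces at most one derivative onto $C$, splitting the sum into the ``all $n$ derivatives on $B$'' term, which reassembles into $[\mathcal{M}\circ e^{D_K}(A\otimes B)]C$, and $n$ identical ``one derivative on $C$'' terms; the symmetry factor $n$ cancels the $1/n!$ down to $1/(n-1)!$, and the index shift $n\mapsto n-1$ reproduces $\mathcal{M}\circ e^{D_K}[\langle A^{(1)},KC^{(1)}\rangle\otimes B]$. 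This combinatorial bookkeeping is the only genuinely fiddly ingredient, and the commutator computation is precisely what packages it cleanly.
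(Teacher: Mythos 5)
Your proposal is correct. Note that the paper does not prove this lemma directly (it cites Lemma 9 of \cite{Bahns-Rejzner_Sine-Gordon}); what it does prove is the generalization, Lemma \ref{Lem:Leibniz-pro}, by expanding $e^{D_K}$ as a power series, applying the binomial Leibniz identity $(B \Phi_1 \ldots \Phi_m)^{(n)} = \sum_{k} \binom{n}{k} B^{(n-k)} (\Phi_1 \ldots \Phi_m)^{(k)}$, using linearity to truncate at $k \leqslant m$, and resumming after an index shift. Your \emph{secondary} cross-check is exactly this argument specialized to $m=1$: the factor $n$ from the Leibniz rule cancelling $1/n!$ down to $1/(n-1)!$ is the same bookkeeping as the paper's passage from $\frac{1}{n!}\binom{n}{k}$ to $\frac{1}{k!(n-k)!}$ followed by the shift $h = n-k$. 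Your \emph{primary} argument, however, is genuinely different and arguably cleaner: you encode multiplication by the linear factor as an operator $m_C$, compute $[D_K, m_C] = E$ with $E$ a slot-one directional derivative along the constant coefficient $KC^{(1)}$, observe $[D_K,E]=0$ so that the adjoint series $e^{D_K} m_C e^{-D_K} = m_C + E$ truncates, and then read off both terms of Equation \eqref{Eq: Leibniz} after applying $\mathcal{M}$. This packages all the combinatorics into a single two-term conjugation identity, at the price of manipulating $e^{D_K}$ as an operator (which, as you correctly flag, is licit here only because each power $D_K^n$ is well defined on the functionals at hand, the same formal-series caveat the paper operates under). Your operator identity also generalizes immediately — for $m$ linear factors, $e^{D_K} m_{C_1} \cdots m_{C_m} e^{-D_K} = \prod_{i}(m_{C_i} + E_i)$, and expanding the product over subsets reproduces Lemma \ref{Lem:Leibniz-pro} — so nothing is lost relative to the paper's route.
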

Thanks to the defining properties of the deformation map $\Gamma_Q$ as per Equation \eqref{Eq: algebra homomorphism} and applying Lemma~\ref{Lemma:Leibniz}, it descends that
\begin{equation}
  \begin{array}{lll}
    J (V_g, \Phi_f) & = & \Gamma_Q ((S (\lambda V_g))^{\star_{\hbar \omega} -
    1}) \cdot_{Q + \hbar \omega} (\Gamma_Q (S (\lambda V_g)) \Phi_f)\\
    & = & \mathcal{M} \circ e^{D_{Q + \hbar \omega}} [\Gamma_Q ((S (\lambda
    V_g))^{\star_{\hbar \omega} - 1}) \otimes (\Gamma_Q (S (\lambda V_g))
    \Phi_f)]\\
    & = & \mathcal{M} \circ e^{D_{Q + \hbar \omega}} [\Gamma_Q ((S (\lambda
    V_g))^{\star_{\hbar \omega} - 1}) \otimes (\Gamma_Q (S (\lambda V_g)
    \nobracket)] \Phi_f +\\
    &  & + \mathcal{M} \circ e^{D_{Q + \hbar \omega}} [\langle \Gamma_Q ((S (\lambda
    V_g))^{\star_{\hbar \omega} - 1})^{(1)}, (Q + \hbar \omega) \Phi_f^{(1)}
    \rangle \otimes \Gamma_Q (S (\lambda V_g))].
  \end{array} \label{Eq:decomposition-of-J}
\end{equation}
Exploiting once again that the deformation map acts as an homomorphism on the deformed algebra, see Remark \ref{Rem:deformation-map}, the first term in
Equation~\eqref{Eq:decomposition-of-J} can be further expanded as
\begin{align*}
	\mathcal{M} \circ e^{D_{Q + \hbar \omega}} [\Gamma_Q &((S (\lambda V_g))^{\star_{\hbar
			\omega} - 1}) \otimes (\Gamma_Q (S (\lambda V_g) \nobracket)] \Phi_f = \Gamma_Q
	((S (\lambda V_g))^{\star_{\hbar \omega} - 1}) \star_{Q + \hbar \omega}
	\Gamma_Q (S (\lambda V_g)) \Phi_f \\
	&=\Gamma_{Q + \hbar \omega} [\Gamma_{Q + \hbar \omega}^{- 1} (\Gamma_Q ((S
	(\lambda V_g))^{\star_{\hbar \omega} - 1})) \Gamma_{Q + \hbar \omega}^{- 1}
	(\Gamma_Q (S (\lambda V_g)))]\Phi_f \\
	&=\Gamma_Q \Gamma_{\hbar \omega} [\Gamma_{\hbar \omega}^{- 1} ((S (\lambda
	V_g))^{\star_{\hbar \omega} - 1}) \Gamma_{\hbar \omega}^{- 1} (S (\lambda
	V_g))]\Phi_f \\
	& = \Gamma_Q [(S (\lambda V_g))^{\star_{\hbar \omega} - 1}
	\star_{\hbar \omega} S (\lambda V_g)] \Phi_f = \Gamma_Q (1) \Phi_f = \Phi_f .
\end{align*}
Therefore the first term on the right hand side of Equation~\eqref{Eq:decomposition-of-J} coincides with $\Phi_f$ itself. Focusing on the second contribution, we observe that
\[ \mathcal{M} \circ e^{D_{Q + \hbar \omega}} [\langle \Gamma_Q ((S (\lambda
   V_g))^{\star_{\hbar \omega} - 1})^{(1)}, (Q + \hbar \omega) \Phi_f^{(1)}
   \rangle \otimes \Gamma_Q (S (\lambda V_g))] = \qquad \qquad \qquad \]
\[ \qquad \qquad \qquad = \langle \Gamma_Q ((S (\lambda V_g))^{\star_{\hbar
   \omega} - 1})^{(1)}, (Q + \hbar \omega) \Phi_f^{(1)} \rangle \star_{Q +
   \hbar \omega} \Gamma_Q (S (\lambda V_g)).\]
Replacing the $Q-S$ matrix and its $\star_{\hbar\omega}$ inverse with their formal power series in $\lambda$ as per
Equations~\eqref{Eq:Q-S-matrix} and~\eqref{Eq:inverse-Q-S-matrix} we obtain
  \begin{align}\label{Eq: J expansion}
     J (V_g, \Phi_f) = \Phi_f + \sum_{n \geqslant 0} \left( \frac{i
     \lambda}{\hbar} \right)^n \sum_{\ell = 0}^n &\binom{n}{\ell} (- 1)^{\ell}
     \ell \times\nonumber\\
     \times \, \langle \mathcal{T}_{\ell}^{\hbar \Delta_{\tmop{AF}} + Q}
     (\Gamma_Q (V_g) \otimes &\ldots \otimes \Gamma_Q (V_g)^{(1)} \otimes
     \ldots \otimes \Gamma_Q (V_g)), (Q + \hbar \omega) \Phi_f^{(1)} \rangle
     \star_{Q + \hbar \omega} \mathcal{T}_{n - \ell}^{\hbar \Delta_F + Q},
  \end{align}
where the symmetry property of the  map $\mathcal{T}_{\ell}^{\hbar
\Delta_{\tmop{AF}} + Q}$ in its arguments has been used. Observe that, on account of Example \ref{Ex: basic functionals}, it holds by direct inspection that
\[ \qquad ((Q + \hbar \omega) \Phi_f^{(1)}) (y) = \int_{\mathbb{R}^2} \mathd
   \mu_z (Q + \hbar \omega) (y - z) f (z). \]
Focusing on the interacting vertex functional
\[ V_{g } = \int_{\mathbb{R}^2} \mathd \mu_x  \frac{e^{ia \varphi (x)} + e^{- ia
   \varphi (x)}}{2} g (x), \]
a formal computation entails that 
\begin{equation}\label{Eq: derivative vertex functional}
	 V_g^{(1)} (y) = \int_{\mathbb{R}^2} \mathd \mu_x  \frac{e^{ia \varphi (x)} -
		e^{- ia \varphi (x)}}{2} ia \delta (x - y) g (x) = \frac{e^{ia \varphi (x)}
		- e^{- ia \varphi (x)}}{2} iag (y) = V' (y) g (y),	
\end{equation}
where we introduced
\[ V' (y) \assign - a \sin (a \varphi (y)) = - a \frac{e^{ia \varphi (y)} -
   e^{- ia \varphi (y)}}{2 i} . \]
Since any deformation map acts as the identity on local functionals, Equation~\eqref{Eq:action-of-gamma-Q-on-interaction} allows us to write the $n$-th order contribution to Equation \eqref{Eq: J expansion} as
\begin{align*}
\langle \mathcal{T}_{\ell}^{\hbar \Delta_{\tmop{AF}} + Q} (\Gamma_Q (V_g)
   \otimes \ldots \otimes \Gamma_Q (V_g)^{(1)} &\otimes \ldots \otimes \Gamma_Q
   (V_g)), (Q + \hbar \omega) \Phi_f^{(1)} \rangle =\\
   =&\mathcal{T}_{\ell}^{\hbar
   \Delta_{\tmop{AF}} + Q} (V_{g_Q} \otimes \ldots \otimes \langle V' g_Q, (Q
   + \hbar \omega) f \rangle \otimes \ldots \otimes V_{g_Q}), 
   \end{align*}
yielding
\begin{align}\label{Eq:final-J-expansion}
\nonumber
 J (V_g, \Phi_f) = \Phi_f + \sum_{n \geqslant 0} \left( \frac{i
    \lambda}{\hbar} \right)^n \sum_{\ell = 0}^n& \binom{n}{\ell} (- 1)^{\ell}
    \ell \\
    \times \mathcal{T}_{\ell}^{\hbar \Delta_{\tmop{AF}} + Q} (V_{g_Q} \otimes
    &\ldots \otimes \langle V' g_Q, (Q + \hbar \omega) f \rangle \otimes \ldots
    \otimes V_{g_Q}) \star_{Q + \hbar \omega} \mathcal{T}_{n - \ell}^{\hbar
    \Delta_F + Q} (V_{g_Q} \otimes \ldots \otimes V_{g_Q}) .
\end{align}
Focusing on $M (V_g, \Phi_f)$ in Equation \eqref{Eq: interacting field decomposition}, its analysis is similar to that of $J(V_g, \Phi_f)$. It can be expressed as a formal power series, since
\[ \langle \Gamma_Q (V_g)^{(1)}, (Q + \hbar \Delta_F) \Phi_f^{(1)} \rangle =
   \langle V' g_Q, (Q + \hbar \Delta_F) f \rangle, \]
and thus, \tmtextit{mutatis mutandis},
\begin{align*}
     M (V_g,& \Phi_f) = \sum_{n \geqslant 0} \left( \frac{i \lambda}{\hbar}
     \right)^n \sum_{\ell = 0}^n \binom{n}{\ell} (- 1)^{\ell} (n - \ell)
     \times\\
     &\times \, \mathcal{T}_{\ell}^{\hbar \Delta_{\tmop{AF}} + Q}
     (V_{g_Q} \otimes \ldots \otimes V_{g_Q}) \star_{Q + \hbar \omega}
     [\mathcal{T}_{n - \ell}^{\hbar \Delta_F + Q} (V_{g_Q} \otimes \ldots
     \otimes \langle V' g_Q, (Q + \hbar \Delta_F) f \rangle \otimes  \ldots
     \otimes V_{g_Q})] .
\end{align*}
Overall, we proved the following result.
\begin{theorem}\label{Thm:expansion-of-the-interacting-field}
	Let
  \[ \Phi_{I, f} \assign \Gamma_Q [R_{\lambda V} (\Phi_f)] = \Gamma_Q ((S
     (\lambda V))^{\star_{\hbar \omega} - 1}) \star_{Q + \hbar \omega}
     [(\Gamma_Q (S (\lambda V)) \star_{Q + \hbar \Delta_F} \Gamma_Q (\Phi_f))]
     . \]
It holds that
  \begin{equation}\label{Eq: expansion interacting 2}
   \begin{array}{lll}
  	\Phi_{I, f} & = & J (V_g, \Phi_f) + M (V_g, \Phi_f)\\
  	& = & \Phi_f + \sum_{n \geqslant 0} \left( \frac{i \lambda}{\hbar}
  	\right)^n J_n (V_g^{\otimes n}, \Phi_f) + \sum_{n \geqslant 0} \left(
  	\frac{i \lambda}{\hbar} \right)^n M_n (V_g^{\otimes n}, \Phi_f),
  \end{array}
  \end{equation}
  where
\begin{align}\label{Eq: nth order J}
       J_n (V_g^{\otimes n},& \Phi_f)  =  \sum_{\ell = 0}^n \binom{n}{\ell}
       (- 1)^{\ell} \ell \times\nonumber\\
        & \times \mathcal{T}_{\ell}^{\hbar \Delta_{\tmop{AF}} + Q} (V_{g_Q}
       \otimes \ldots \otimes \langle V' g_Q, (Q + \hbar \omega) f \rangle
       \otimes \ldots \otimes V_{g_Q}) \star_{Q + \hbar \omega} \mathcal{T}_{n
       - \ell}^{\hbar \Delta_F + Q} (V_{g_Q} \otimes \ldots \otimes V_{g_Q}),
\end{align}
  and
\begin{align}\label{Eq: nth order M}
       M_n (V_g^{\otimes n},& \Phi_f)  =  \sum_{\ell = 0}^n \binom{n}{\ell}
       (- 1)^{\ell} (n - \ell) \times\nonumber\\
         & \times \mathcal{T}_{\ell}^{\hbar \Delta_{\tmop{AF}} + Q} (V_{g_Q}
       \otimes \ldots \otimes V_{g_Q}) \star_{Q + \hbar \omega} \mathcal{T}_{n
       - \ell}^{\hbar \Delta_F + Q} (V_{g_Q} \otimes \ldots \otimes \langle V'
       g_Q, (Q + \hbar \Delta_F) f \rangle \otimes  \ldots \otimes V_{g_Q}) .
\end{align}
\end{theorem}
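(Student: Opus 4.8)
The plan is to retrace the algebraic manipulations that precede the statement, organizing them as a clean derivation of the two series $J$ and $M$. I would begin from the defining expression
\[ \Phi_{I,f} = \Gamma_Q((S(\lambda V))^{\star_{\hbar\omega}-1}) \star_{Q+\hbar\omega} [\Gamma_Q(S(\lambda V)) \star_{Q+\hbar\Delta_F} \Gamma_Q(\Phi_f)], \]
and exploit two elementary facts: that $\Phi_f$ is linear, so that $\Phi_f^{(k)}=0$ for $k\geq 2$, and that any deformation map acts as the identity on linear functionals, whence $\Gamma_Q(\Phi_f)=\Phi_f$. Expanding the innermost product $\Gamma_Q(S(\lambda V))\star_{Q+\hbar\Delta_F}\Phi_f$ and keeping only the two surviving contributions (those coming from the zeroth- and first-order derivatives of $\Phi_f$) yields the decomposition $\Phi_{I,f}=J(V_g,\Phi_f)+M(V_g,\Phi_f)$ recorded in Equation \eqref{Eq: interacting field decomposition}.

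For the term $J(V_g,\Phi_f)$ I would apply the Leibniz-type identity of Lemma \ref{Lemma:Leibniz} with the linear factor $C=\Phi_f$, splitting $J$ into the two pieces of Equation \eqref{Eq:decomposition-of-J}. The structurally important observation is that the first piece collapses: using that the deformation maps intertwine the pointwise and $\star$-products (Remark \ref{Rem:deformation-map}) together with $\Gamma_Q\Gamma_{\hbar\omega}=\Gamma_{Q+\hbar\omega}$, one rewrites it as $\Gamma_Q[(S(\lambda V))^{\star_{\hbar\omega}-1}\star_{\hbar\omega}S(\lambda V)]\Phi_f=\Gamma_Q(1)\Phi_f=\Phi_f$. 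The remaining piece is then expanded by inserting the formal power series \eqref{Eq:Q-S-matrix} and \eqref{Eq:inverse-Q-S-matrix} for the $Q$-$S$ matrix and its $\star_{\hbar\omega}$-inverse, the combinatorial factor $\binom{n}{\ell}(-1)^{\ell}\ell$ arising from differentiating one of the $\ell$ identical entries $\Gamma_Q(V_g)$ in the anti-time-ordered product $\mathcal{T}_{\ell}^{\hbar\Delta_{\tmop{AF}}+Q}$.

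The term $M(V_g,\Phi_f)$ is treated \emph{mutatis mutandis}: here the single derivative of $\Phi_f$ is contracted through $Q+\hbar\Delta_F$ against one of the $n-\ell$ entries on the Feynman side, producing instead the factor $\binom{n}{\ell}(-1)^{\ell}(n-\ell)$. In both series I would finally invoke Equation \eqref{Eq:action-of-gamma-Q-on-interaction}, $\Gamma_Q(V_g)=V_{g_Q}$, and the explicit first derivative of the vertex functional from Equation \eqref{Eq: derivative vertex functional}, $V_g^{(1)}(y)=V'(y)g(y)$ with $V'=-a\sin(a\varphi)$, to identify the differentiated entry with the smeared functional $\langle V'g_Q,(Q+\hbar\omega)f\rangle$ in $J$, respectively $\langle V'g_Q,(Q+\hbar\Delta_F)f\rangle$ in $M$. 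Collecting the coefficient of $(i\lambda/\hbar)^n$ then reproduces Equations \eqref{Eq: nth order J} and \eqref{Eq: nth order M}.

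I expect the main difficulty to be bookkeeping rather than conceptual. One must track carefully which kernel ($Q+\hbar\omega$ versus $Q+\hbar\Delta_F$) contracts the field derivative in $J$ as opposed to $M$, and verify that the symmetry of the maps $\mathcal{T}_{\ell}^{\hbar\Delta_{\tmop{AF}}+Q}$ and $\mathcal{T}_{n-\ell}^{\hbar\Delta_F+Q}$ under permutation of their arguments genuinely converts the sum over which identical factor is differentiated into the multiplicities $\ell$ and $n-\ell$. The collapse of the first summand of $J$ to $\Phi_f$ is the only truly structural step, and it rests entirely on the homomorphism property of the deformation maps; everything else is a disciplined expansion of the two convergent series established in Theorem \ref{Thm:estimates-Q-S-matrix-I}.
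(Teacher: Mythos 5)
Your proposal is correct and follows essentially the same route as the paper's own derivation: the same splitting into $J$ and $M$ via the linearity of $\Phi_f$ and $\Gamma_Q(\Phi_f)=\Phi_f$, the same use of Lemma \ref{Lemma:Leibniz} with the collapse of the first summand to $\Phi_f$ via the homomorphism property, and the same series insertion with the symmetry of the (anti-)time-ordered products producing the multiplicities $\ell$ and $n-\ell$. Nothing essential is missing.
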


The rest of this section is devoted to proving that the series
expansion of the interacting field introduced in
Theorem~\ref{Thm:expansion-of-the-interacting-field} is absolutely
convergent. Eventually, we shall show that the series in Equation \eqref{Eq: expansion interacting 2} shares the same form as the one appearing in the construction of the $Q - S$-matrix. Once this correspondence has been settled, the proof is analogous to the one of Theorem \ref{Thm:estimates-Q-S-matrix-I} and therefore we shall omit it.

Since the perturbative expansion of the interacting field is split in two separate contributions, see Equation \eqref{Eq: expansion interacting 2}, we shall start by further manipulating the $n$-th perturbative order of $J (V_g, \Phi_f)$ as per Equation \eqref{Eq: nth order J}. Discarding the combinatorial coefficients, we consider a generic element of the sum, namely
\begin{equation}\label{Eq: generic contribution J_n}
\mathcal{T}_{\ell}^{\hbar \Delta_{\tmop{AF}} + Q} (V_{g_Q} \otimes \ldots
\otimes \langle V' g_Q, (Q + \hbar \omega) f \rangle \otimes \ldots \otimes
V_{g_Q}) \star_{Q + \hbar \omega} \mathcal{T}_{n - \ell}^{\hbar \Delta_F +
	Q} (V_{g_Q} \otimes \ldots \otimes V_{g_Q}) . 
\end{equation}
For later convenience, let us introduce the modified test function
\begin{equation}
  \tilde{g} (y) \assign g_Q (y) [(Q + \hbar \omega) f] (y),
  \label{Eq:modified-test-funtion}
\end{equation}
which inherits from $g_Q$ the property of being smooth and compactly supported. Hence Equation \eqref{Eq: generic contribution J_n} can be rewritten in a concise form as
\begin{align}\label{Eq: generic contribution J_n 2 }
	&\mathcal{T}_{\ell}^{\hbar \Delta_{\tmop{AF}} + Q} (V_{g_Q} \otimes \ldots
	\otimes V_{\tilde{g}}' \otimes \ldots \otimes V_{g_Q}) \star_{Q + \hbar
		\omega} \mathcal{T}_{n - \ell}^{\hbar \Delta_F + Q} (V_{g_Q} \otimes \ldots
	\otimes V_{g_Q})\nonumber\\
	&=\mathcal{T}_{\ell}^{\hbar \Delta_{\tmop{AF}} + Q} (V_{\tilde{g}}' \otimes
	\ldots \otimes V_{g_Q} \otimes \ldots \otimes V_{g_Q}) \star_{Q + \hbar
		\omega} \mathcal{T}_{n - \ell}^{\hbar \Delta_F + Q} (V_{g_Q} \otimes \ldots
	\otimes V_{g_Q}),  
\end{align}
where we used that the time-ordered product is commutative. As in the analysis of the S-matrix in Section \ref{Sec: Convergence of the Q-S matrix} and recalling the explicit form of the vertex functional $V_{g_Q} = 2^{- 1}(V_{a, g_Q} + V_{- a, g_Q})$, a direct computation shows that
\begin{align*}
	\mathcal{T}_{n - \ell}^{\hbar \Delta_F + Q} (V_{g_Q} \otimes \ldots
	\otimes V_{g_Q}) =  \frac{1}{2^{n - \ell}} \sum_{p = 0}^{n - \ell}
	\binom{n - \ell}{p} \mathcal{T}_{n - \ell}^{\hbar \Delta_F + Q} (V_{a,
		g_Q}^{\otimes p} \otimes V_{- a, g_Q}^{\otimes (n - \ell - p)}),
\end{align*}
while
\begin{align*}
	\mathcal{T}_{n - \ell}^{\hbar \Delta_F + Q} (V_{a, g_Q}^{\otimes p}
	\otimes V_{- a, g_Q}^{\otimes (n - \ell - p)}) = & \mathcal{T}_{n -
		\ell}^{\hbar \Delta_F + Q} \left[ \int_{\mathbb{R}^{2(n - \ell)}} e^{i
		\sum_{j = 1}^{n - \ell} a_j^{(p)} \varphi (y_j)} g_Q (y_1) \ldots g_Q
	(y_{n - \ell}) \mathd \mu_{Y_{n - \ell}} \right]\\
	= & \int_{\mathbb{R}^{2(n - \ell)}} e^{i \sum_{j = 1}^{n - \ell}
		a^{(p)}_j \varphi (y_j)} e^{- \sum_{1 \leqslant m < j \leqslant n - \ell}
		a^{(p)}_m a^{(p)}_j (Q + \hbar \Delta_F) (y_m, y_j)} g_Q (Y_{n - \ell})
	\mathd \mu_{Y_{n - \ell}}.
\end{align*}
Starting from the first factor $\mathcal{T}_{\ell}^{\hbar \Delta_{\tmop{AF}} + Q} (V_{\tilde{g}}' \otimes
\ldots \otimes V_{g_Q} \otimes \ldots \otimes V_{g_Q}) $ it holds that
\begin{align}\label{Eq: generic contribution J_n 3}
     \mathcal{T}_{\ell}^{\hbar \Delta_{\tmop{AF}} + Q} (V_{\tilde{g}}' \otimes
     \ldots \otimes V_{g_Q} \otimes \ldots \otimes V_{g_Q})  = &
     \frac{1}{2^{\ell} i} \sum_{k = 0}^{\ell - 1} \binom{\ell - 1}{k}
     \mathcal{T}_{\ell}^{\hbar \Delta_{\tmop{AF}} + Q} ( (V_{a, \tilde{g}} -
     V_{- a, \tilde{g}}) \otimes V_{a, g_Q}^{\otimes k} \otimes V_{- a,
     g_Q}^{\otimes (\ell - 1 - k)})\nonumber\\
     = & \frac{1}{2^{\ell} i} \sum_{k = 0}^{\ell - 1} \binom{\ell - 1}{k}
     [\mathcal{T}_{\ell}^{\hbar \Delta_{\tmop{AF}} + Q} (V_{a, \tilde{g}}
     \otimes V_{a, g_Q}^{\otimes k} \otimes V_{- a, g_Q}^{\otimes (\ell - 1 -
     k)}) + \nobracket\nonumber\\
      & \qquad \qquad \qquad \qquad - \nobracket \mathcal{T}_{\ell}^{\hbar
     \Delta_{\tmop{AF}} + Q} (V_{- a, \tilde{g}} \otimes V_{a, g_Q}^{\otimes
     k} \otimes V_{- a, g_Q}^{\otimes (\ell - 1 - k)})] .
\end{align}
Considering separately the two terms on the right hand side of Equation \eqref{Eq: generic contribution J_n 3}, the first one reads
\begin{align*}
     \mathcal{T}_{\ell}^{\hbar \Delta_{\tmop{AF}} + Q} (V_{a, \tilde{g}}
     \otimes V_{a, g_Q}^{\otimes k} \otimes V_{- a, g_Q}^{\otimes (\ell - 1 -
     k)}&) =  \mathcal{T}_{\ell}^{\hbar \Delta_{\tmop{AF}} + Q} \left[
     \int_{\mathbb{R}^{2\ell}} e^{ia \varphi (x_1) + i \sum_{j = 2}^{\ell}
     a^{(k)}_j \varphi (x_j)} \tilde{g} (x_1) \ldots g_Q (x_{\ell}) \mathd
     \mu_{X_{\ell}} \right]\\
     &=  \int_{\mathbb{R}^{2\ell}} \mathd \mu_{X_{\ell}} \tilde{G}_Q
     (X_{\ell}) e^{ i \left[ a \varphi (x_1) +\sum_{j = 2}^{\ell} a_j^{(k)}
     \varphi (x_j) \right]} \times\\
      & \times e^{- \sum_{2 \leqslant j \leqslant \ell} a  a_j^{(k)} (Q +
     \hbar \Delta_{\tmop{AF}}) (x_1, x_j)} e^{- \sum_{2 \leqslant m < j
     \leqslant \ell} a_m^{(k)} a_j^{(k)} (Q + \hbar \Delta_{\tmop{AF}}) (x_m,
     x_j)},
\end{align*}
where $\mathd \mu_{X_{\ell}} \assign \mathd \mu_{x_1} \ldots \mathd
\mu_{x_{\ell}}$, $\tilde{G}_1 (X_{\ell}) \assign \tilde{g} (x_1) \ldots g_Q
(x_{\ell})$ and where we adopted the notation
\begin{align*}
     a^{(k)}  =  \{ a_2^{(k)}, \ldots, a_{\ell}^{(k)} \}  =  \{
     \underbrace{a, \ldots, a}_k, \underbrace{- a, \ldots, - a}_{\ell-1-k} \}.
\end{align*}
Focusing on the second term in Equation \eqref{Eq: generic contribution J_n 3}, we obtain
\begin{align*}
     \mathcal{T}_{\ell}^{\hbar \Delta_{\tmop{AF}} + Q} (V_{- a, \tilde{g}}
     \otimes V_{a, g_Q}^{\otimes k} \otimes V_{- a, g_Q}^{\otimes (\ell - 1 -
     k)})  &=  \mathcal{T}_{\ell}^{\hbar \Delta_{\tmop{AF}} + Q} \left[
     \int_{\mathbb{R}^{2\ell}} e^{- ia \varphi (x_1) + i \sum_{j = 2}^{\ell}
     a_j^{(k)} \varphi (x_j)} \tilde{g} (x_1) \ldots g_Q (x_{\ell}) \mathd
     \mu_{X_{\ell}} \right]\\
     &=  \int_{\mathbb{R}^{2\ell}} \mathd \mu_{X_{\ell}} \tilde{G}_Q
     (X_{\ell}) e^{- ia \varphi (x_1)+i\sum_{j = 2}^{\ell} a_j^{(k)}
     \varphi (x_j)} \times\\
     & \times e^{\sum_{2 \leqslant j \leqslant \ell} a  a_j^{(k)}
     (Q + \hbar \Delta_{\tmop{AF}}) (x_1, x_j)} e^{- \sum_{2 \leqslant m < j
     \leqslant \ell} a_m^{(k)} a_j^{(k)} (Q + \hbar \Delta_{\tmop{AF}}) (x_m,
     x_j)}.
\end{align*}
As a last step we consider the $\star_{Q + \hbar \omega}$-product of the last two identities computed above. This amounts to \begin{align}\label{Eq:two-terms-int-field}
\nonumber
    \mathcal{T}_{\ell}^{\hbar \Delta_{\tmop{AF}} + Q} (V_{g_Q} \otimes \ldots
    \otimes V_{\tilde{g}}' \otimes \ldots \otimes V_{g_Q}) \star_{Q + \hbar
    \omega} \mathcal{T}_{n - \ell}^{\hbar \Delta_F + Q} (V_g \otimes \ldots
    \otimes V_g) = \frac{1}{2^n i} \sum_{k = 0}^{\ell - 1} \sum_{p = 0}^{n -
    \ell} \binom{\ell - 1}{k} \binom{n - \ell}{p} \times\\
    \nonumber
 \times
    [\mathcal{T}_{\ell}^{\hbar \Delta_{\tmop{AF}} + Q} (V_{a, \tilde{g}}
    \otimes V_{a, g_Q}^{\otimes k} \otimes V_{- a, g_Q}^{\otimes (\ell - 1 -
    k)}) \star_{Q + \hbar \omega} \mathcal{T}_{n - \ell}^{\hbar \Delta_F + Q}
    (V_{a, g_Q}^{\otimes p} \otimes V_{- a, g_Q}^{\otimes (n - \ell - p)}) +
    \nobracket\\
    \left.  
    -\mathcal{T}_{\ell}^{\hbar \Delta_{\tmop{AF}} + Q} (V_{- a, \tilde{g}}
    \otimes V_{a, g_Q}^{\otimes k} \otimes V_{- a, g_Q}^{\otimes (\ell - 1 -
    k)}) \star_{Q + \hbar \omega} \mathcal{T}_{n - \ell}^{\hbar \Delta_F + Q}
    (V_{a, g_Q}^{\otimes p} \otimes V_{- a, g_Q}^{\otimes (n - \ell - p)})
    \right] . 
\end{align}
From the point of view of convergence, the reader can easily convince him/herself that the two terms in Equation~\eqref{Eq:two-terms-int-field} behave analogously. Thus we shall focus only on one of them, the convergence of the other following suit. A direct application of Equation \eqref{Eq:deform-quantisation} entails that
\begin{align}\label{Eq:two-terms-int-field-2}
     \mathcal{T}_{\ell}^{\hbar \Delta_{\tmop{AF}} + Q} (V_{a, \tilde{g}}
     \otimes V_{a, g_Q}^{\otimes k} &\otimes V_{- a, g_Q}^{\otimes (\ell - 1 -
     k)}) \star_{Q + \hbar \omega} \mathcal{T}_{n - \ell}^{\hbar \Delta_F + Q}
     (V_{a, g_Q}^{\otimes p} \otimes V_{- a, g_Q}^{\otimes (n - \ell - p)})
     =\nonumber\\
     = \int_{\mathbb{R}^{2n}} \mathd \mu_{X_n} &e^{ia \varphi (x_1) + i \sum_{j =
     2}^{\ell} a_j^{(k)} \varphi (x_j)} e^{i \sum_{j = \ell + 1}^n a^{(p)}_j
     \varphi (x_j)} \tilde{G}_Q (X_n) e^{- \sum_{2 \leqslant j \leqslant \ell}
     a  a_j^{(k)} (Q + \hbar \Delta_{\tmop{AF}}) (x_1, x_j)}\times\nonumber\\\qquad&\times e^{- \sum_{2
     \leqslant m < j \leqslant \ell} a_m^{(k)} a_j^{(k)} (Q + \hbar
     \Delta_{\tmop{AF}}) (x_m, x_j)}
     e^{- \sum_{\ell + 1 \leqslant m < j \leqslant
     n} a^{(p)}_m a^{(p)}_j (Q + \hbar \Delta_F) (x_m, x_j)} \nonumber\\
     &\times e^{- \sum_{2
     \leqslant m \leqslant \ell, 1 \leqslant j \leqslant n - \ell} a_m^{(k)}
     a^{(p)}_j (Q + \hbar \omega) (x_m, x_{\ell + j})} e^{- \sum_{1 \leqslant
     j \leqslant n - \ell} aa^{(p)}_j (Q + \hbar \omega) (x_1, x_{\ell + j})}
     .
\end{align}
We shall estimate the absolute value of the expression in Equation \eqref{Eq:two-terms-int-field-2} as follows
\begin{align}\label{Eq: estimate absolute value}
     |\mathcal{T}_{\ell}^{\hbar \Delta_{\tmop{AF}} + Q}& (V_{a, \tilde{g}}
     \otimes V_{a, g_Q}^{\otimes k} \otimes V_{- a, g_Q}^{\otimes (\ell - 1 -
     k)}) \star_{Q + \hbar \omega} \mathcal{T}_{n - \ell}^{\hbar \Delta_F + Q}
     (V_{a, g_Q}^{\otimes p} \otimes V_{- a, g_Q}^{\otimes (n - \ell - p)}) |
     \leqslant\nonumber\\
     \leqslant \int_{\mathbb{R}^{2n}} &\mathd \mu_{X_n} | \tilde{G} (X_n) | e^{-
     \sum_{2 \leqslant j \leqslant \ell} a  a_j^{(k)} \tmop{Re} (Q + \hbar
     \Delta_{\tmop{AF}}) (x_1, x_j)} e^{- \sum_{2 \leqslant u < j \leqslant
     \ell} a_u^{(k)} a_j^{(k)} \tmop{Re} (Q + \hbar \Delta_{\tmop{AF}}) (x_u,
     x_j)} \times\nonumber\\
    \times& e^{- \sum_{\ell + 1 \leqslant u < j \leqslant n} a^{(p)}_u
     a^{(p)}_j \tmop{Re} (Q + \hbar \Delta_F) (x_u, x_j)} e^{- \sum_{2
     \leqslant u \leqslant \ell, 1 \leqslant j \leqslant n - \ell} a_u^{(k)}
     a^{(p)}_j \tmop{Re} (Q + \hbar \omega) (x_u, x_{\ell + j})} \times\nonumber\\
    \times&e^{- \sum_{1
     \leqslant j \leqslant n - \ell} aa^{(p)}_j \tmop{Re} (Q + \hbar \omega)
     (x_1, x_{\ell + j})} .
\end{align}
Recalling that $Q$ is real-valued and that
\[ \tmop{Re} (\omega) \backassign H \qquad \tmop{Re} (\Delta_F) = \tmop{Re}
   \left( \frac{i}{2} (\Delta^R + \Delta^A) + H \right) = H, \qquad
   \tmop{Re} (\Delta_{\tmop{AF}}) = \tmop{Re} (\omega - i \Delta^R) =
   \tmop{Re} (\omega) = H, \]
Equation \eqref{Eq: estimate absolute value} can be further improved as follows
\begin{align*}
     &|\mathcal{T}_{\ell}^{\hbar \Delta_{\tmop{AF}} + Q} (V_{a, \tilde{g}}
     \otimes V_{a, g_Q}^{\otimes k} \otimes V_{- a, g_Q}^{\otimes (\ell - 1 -
     k)}) \cdot_{Q + \hbar \omega} \mathcal{T}_{n - \ell}^{\hbar \Delta_F + Q}
     (V_{a, g_Q}^{\otimes p} \otimes V_{- a, g_Q}^{\otimes (n - \ell - p)}) |
     \leqslant\\
      &\leqslant \int_{\mathbb{M}^{\ell}} \mathd \mu_{X_n} | \tilde{G}
     (X_n) | e^{- \sum_{2 \leqslant j \leqslant \ell} a  a_j^{(k)} (Q + \hbar
     H) (x_1, x_j)} e^{- \sum_{2 \leqslant u < j \leqslant \ell} a_u^{(k)}
     a_j^{(k)} (Q + \hbar H) (x_u, x_j)} \times\\
  &\times e^{- \sum_{\ell + 1 \leqslant u < j \leqslant n} a^{(p)}_u
     a^{(p)}_j (Q + \hbar H) (x_u, x_j)} e^{- \sum_{2 \leqslant u \leqslant
     \ell, 1 \leqslant j \leqslant n - \ell} a_u^{(k)} a^{(p)}_j (Q + \hbar
     H) (x_u, x_{\ell + j})} e^{- \sum_{1 \leqslant j \leqslant n - \ell}
     aa^{(p)}_j (Q + \hbar H) (x_1, x_{\ell + j})}\\
     & = \tmop{ev}_0 (\mathcal{T}_n^{\hbar H + Q} (V_{a, | \tilde{g} |}
     \otimes V_{a, g}^{\otimes k} \otimes V_{- a, g}^{\otimes (\ell - 1 - k)}
     \otimes V_{a, g}^{\otimes p} \otimes V_{- a, g}^{\otimes (n - \ell -
     p)})) .
\end{align*}
Considering the second term in Equation~\eqref{Eq:two-terms-int-field}, an
analogous procedure entails
\begin{align*}
     | \mathcal{T}_{\ell}^{\hbar \Delta_{\tmop{AF}} + Q} (V_{- a, \tilde{g}}
     \otimes V_{a, g_Q}^{\otimes k} &\otimes V_{- a, g_Q}^{\otimes (\ell - 1 -
     k)}) \star_{Q + \hbar \omega} \mathcal{T}_{n - \ell}^{\hbar \Delta_F + Q}
     (V_{a, g_Q}^{\otimes p} \otimes V_{- a, g_Q}^{\otimes (n - \ell - p)}) |
     \leqslant\\
      &\leqslant \tmop{ev}_0 (\mathcal{T}_n^{\hbar H + Q} (V_{-
     a, | \tilde{g} |} \otimes V_{a, g}^{\otimes k} \otimes V_{- a,
     g}^{\otimes (\ell - 1 - k)} \otimes V_{a, g}^{\otimes p} \otimes V_{- a,
     g}^{\otimes (n - \ell - p)})) .
\end{align*}
Using Equation \eqref{Eq:two-terms-int-field} as well as the estimates above, we obtain
\begin{align*}
     | \mathcal{T}_{\ell}^{\hbar \Delta_{\tmop{AF}} + Q} (V_{g_Q} \otimes
     \ldots \otimes V_{\tilde{g}}' \otimes \ldots \otimes V_{g_Q})& \star_{Q +
     \hbar \omega} \mathcal{T}_{n - \ell}^{\hbar \Delta_F + Q} (V_{g_Q}
     \otimes \ldots \otimes V_{g_Q}) | \leqslant \frac{1}{2^n} \sum_{k =
     0}^{\ell - 1} \sum_{p = 0}^{n - \ell} \binom{\ell - 1}{k} \binom{n -
     \ell}{p} \times\\
  &\times [\tmop{ev}_0
     (\mathcal{T}_n^{\hbar H + Q} (V_{a, | \tilde{g} |} \otimes V_{a,
     g}^{\otimes k} \otimes V_{- a, g}^{\otimes (\ell - 1 - k)} \otimes V_{a,
     g}^{\otimes p} \otimes V_{- a, g}^{\otimes (n - \ell - p)})) +
     \nobracket\\
     &\left.  + \tmop{ev}_0
     (\mathcal{T}_n^{\hbar H + Q} (V_{- a, | \tilde{g} |} \otimes V_{a,
     g}^{\otimes k} \otimes V_{- a, g}^{\otimes (\ell - 1 - k)} \otimes V_{a,
     g}^{\otimes p} \otimes V_{- a, g}^{\otimes (n - \ell - p)})) \right]\\
 &= \frac{1}{2} [\tmop{ev}_0
 (\mathcal{T}_n^{\hbar H + Q} (V_{a, | \tilde{g} |} \otimes V_g^{\otimes
 	n - 1})) + \tmop{ev}_0 (\mathcal{T}_n^{\hbar H + Q} (V_{- a, |
 	\tilde{g} |} \otimes V_g^{\otimes n - 1}))].
\end{align*}
Eventually, using Equation~\eqref{Eq: nth order J} we obtain
\begin{align}\label{Eq:bound-for-Jn}
\nonumber
    | J_n (V_g^{\otimes n}, \Phi_f) | & \leqslant  \frac{1}{2} \sum_{\ell =
    0}^n \binom{n}{\ell} \ell [\tmop{ev}_0 (\mathcal{T}_n^{\hbar H + Q}
    (V_{a, | \tilde{g} |} \otimes V_g^{\otimes n - 1})) + \tmop{ev}_0
    (\mathcal{T}_n^{\hbar H + Q} (V_{- a, | \tilde{g} |} \otimes
    V_g^{\otimes n - 1}))]\\
    & =  n 2^{n - 2} [\tmop{ev}_0 (\mathcal{T}_n^{\hbar H + Q} (V_{a, |
    \tilde{g} |} \otimes V_g^{\otimes n - 1})) + \tmop{ev}_0
    (\mathcal{T}_n^{\hbar H + Q} (V_{- a, | \tilde{g} |} \otimes
    V_g^{\otimes n - 1}))] .
\end{align}
With this estimate we are in a position to prove a bound strictly related to the absolute convergence of the series defining $J(V_g,\Phi_f)$. 
\begin{theorem}\label{Thm:bound-Jn}
  Under the assumptions of Theorem~\ref{Thm:estimates-Q-S-matrix-I} and recalling $J_n (V_g^{\otimes n}, \Phi_f)$ as per Equation \eqref{Eq: nth order J}, the following bound holds true:
  \[ | J_n (V_g^{\otimes n}, \Phi_f) | \leqslant \frac{n 2^n (2 \mu)^{n
     \alpha} (C_Q)^{n^2}}{2 (n!)^{1 - 1 / p}} \left( \frac{2 \lambda e^{2^{-
     1} a^2 K}}{\hbar} \right)^n \|g\|^{n - 1}_{L^q} \| \tilde{g} \|_{L^q}
     C^{n / p} . \]
\end{theorem}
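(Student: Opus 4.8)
The plan is to start from the estimate \eqref{Eq:bound-for-Jn}, which already reduces the control of $|J_n(V_g^{\otimes n},\Phi_f)|$ to bounding, up to the combinatorial prefactor $n\,2^{n-2}$, the two evaluation terms $\tmop{ev}_0(\mathcal{T}_n^{\hbar H+Q}(V_{\pm a,|\tilde g|}\otimes V_g^{\otimes(n-1)}))$. Each of these is structurally identical to the $n$-th order coefficient of the $Q-S$-matrix controlled in Theorem \ref{Thm:estimates-Q-S-matrix-I}: it is the evaluation of a time-ordered product of $n$ vertex operators, the sole difference being that exactly one of them carries the modified test function $|\tilde g|$ of \eqref{Eq:modified-test-funtion} instead of $g$. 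Since $g_Q\in\mathcal{D}(D_\mu)$ by \eqref{Eq: g_Q} and $(Q+\hbar\omega)f\in\mathcal{E}(\mathbb{R}^2)$, the function $\tilde g=g_Q\,(Q+\hbar\omega)f$ is again smooth and compactly supported in $D_\mu$, so that $\|\tilde g\|_{L^q(D_\mu)}<\infty$ and $|\tilde g|$ can be carried through the whole estimate as a spectator test function.

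For each of the two terms I would then reproduce the three steps underlying the proof of Theorem \ref{Thm:estimates-Q-S-matrix-I}. First I would apply the conditioning estimate of Theorem \ref{Thm:conditioning} with $w_1=\hbar H+Q$ and $w_0=\hbar H_0+Q$, both positive, real and symmetric on $D_\mu$ by Lemma \ref{Lem:positivity-of-Q} and the subsequent discussion, the addition of $Q$ leaving the operators $N,P$ unchanged; this replaces the massive kernel $H$ by $H_0$ at the cost of the same global factor $2$ and coupling rescaling $2e^{2^{-1}a^2K}$ per vertex as in Theorem \ref{Thm:estimates-Q-S-matrix-I}. Next I would expand $V_g^{\otimes(n-1)}=2^{-(n-1)}\sum_{k=0}^{n-1}\binom{n-1}{k}V_{a,g}^{\otimes k}\otimes V_{-a,g}^{\otimes(n-1-k)}$ and apply the Cauchy-Schwarz inequality \eqref{Eq: CS}, licensed by the positivity of $\hbar H_0+Q$ on the support of the test functions. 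Writing $A=V_{\epsilon a,|\tilde g|}\cdot_{\hbar H_0+Q}V_{a,g}^{\otimes k}\cdot_{\hbar H_0+Q}V_{-a,g}^{\otimes(n-1-k)}$ with $\epsilon\in\{+1,-1\}$, and using $V_{a,g}^{\ast}=V_{-a,g}$ together with $V_{\epsilon a,|\tilde g|}^{\ast}=V_{-\epsilon a,|\tilde g|}$, the product $A^{\ast}\cdot_{\hbar H_0+Q}A$ is the neutral configuration of $2n$ vertices carrying $n$ charges $+a$ and $n$ charges $-a$, precisely two of which, one of each sign, bear $|\tilde g|$ while the remaining $2(n-1)$ bear $g$. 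Since $2^{-(n-1)}\sum_k\binom{n-1}{k}=1$, each evaluation term is thereby dominated by the square root of the evaluation of a single such neutral configuration.

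The final, and in my view only delicate, step is to bound this neutral configuration through the Cauchy-determinant Lemma \ref{Lem:Cauchy-determinant}. The point to check is that in the proof of that lemma the test functions enter exclusively through the H\"older factor $\|G\|_{L^q(D_\mu^{2n})}$, whereas the factorised Cauchy determinants $|\det\mathcal{D}^u|^{\alpha}|\det\mathcal{D}^v|^{\alpha}$, the factor $(4\mu^2)^{n\alpha}$ and the constant $C_Q(\mu)$ depend only on the charges and on the spacetime positions; hence the lemma carries over unchanged apart from the replacement $\|G\|_{L^q(D_\mu^{2n})}=\|g\|_{L^q(D_\mu)}^{2(n-1)}\|\tilde g\|_{L^q(D_\mu)}^{2}$ in place of $\|g\|_{L^q}^{2n}$. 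Taking the square root produces the factor $\|g\|_{L^q}^{n-1}\|\tilde g\|_{L^q}$ alongside $(C_Q(\mu))^{n^2}(2\mu)^{n\alpha}C^{n/p}(n!)^{1/p}$, i.e. exactly the $Q-S$-matrix estimate with a single power of $\|g\|_{L^q}$ traded for $\|\tilde g\|_{L^q}$. Reassembling the prefactor $n\,2^{n-2}$, the sum of the two evaluation terms, and the combinatorial, coupling and normalisation factors precisely as in the proof of Theorem \ref{Thm:estimates-Q-S-matrix-I}---so that the identity $\tfrac{1}{n!}(n!)^{1/p}=(n!)^{1/p-1}$ accounts for the denominator $(n!)^{1-1/p}$---then yields the claimed bound.
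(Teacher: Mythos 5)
Your proposal is correct and follows essentially the same route as the paper's proof: starting from Equation \eqref{Eq:bound-for-Jn}, applying the conditioning result of Theorem \ref{Thm:conditioning} to pass from $\hbar H+Q$ to $\hbar H_0+Q$, and then invoking Lemma \ref{Lem:Cauchy-determinant} \emph{mutatis mutandis}. The paper compresses the last step into a single sentence, whereas you make explicit exactly what ``mutatis mutandis'' entails -- the binomial expansion and Cauchy--Schwarz reduction to a neutral $2n$-vertex configuration with one $|\tilde g|$-vertex of each sign, and the observation that the test functions enter the Cauchy-determinant bound only through the H\"older factor, so that $\|G\|_{L^q}=\|\tilde g\|_{L^q}^{2}\|g\|_{L^q}^{2(n-1)}$ -- which is precisely the content of the paper's argument.
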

\begin{proof}
  The proof of this result is closely related to the one of Theorem~\ref{Thm:estimates-Q-S-matrix-I}. Since we are going to adopt the same strategy, we discuss only the main steps of the proof. On account of the analysis performed in this section, in particular of Equation~\eqref{Eq:bound-for-Jn}, it suffices to exhibit a
  suitable bound for $\tmop{ev}_0 (\mathcal{T}_n^{\hbar H + Q} (V_{a, |
  \tilde{g} |} \otimes V_g^{\otimes n - 1}))$ and for $\tmop{ev}_0
  (\mathcal{T}_n^{\hbar H + Q} (V_{- a, | \tilde{g} |} \otimes V_g^{\otimes
  n - 1}))$. Since the analysis of these two contributions is analogous, we limit ourselves to discussing the first one, as an identical argument applies to the other.
  
 We observe that, as in the proof of Theorem~\ref{Thm:estimates-Q-S-matrix-I}, the conditioning and inverse conditioning results stated in Theorem~\ref{Thm:conditioning} can be applied, guaranteeing the validity of the following inequality:
 \begin{equation}\label{Eq: conditioning interacting field}
 	\tmop{ev}_0 (\mathcal{T}_n^{\hbar H + Q} (V_{a, | \tilde{g} |} \otimes
 	V_g^{\otimes n - 1})) \leqslant 2 \tmop{ev}_0 \left\{
 	\mathcal{T}_n^{\hbar H + Q} \left[ \left( \frac{2 \lambda e^{2^{- 1}
 			a^2 K}}{\hbar} V_{a, | \tilde{g} |} \right) \otimes \left( \frac{2
 		\lambda e^{2^{- 1} a^2 K}}{\hbar} V_g  \right)^{\otimes n - 1} \right]
 	\right\} . 
 \end{equation} 
As a matter of fact, Equation \eqref{Eq: conditioning interacting field} suggests that controlling the massless case entails an analogous control of the massive counterpart. Applying {\tmem{mutatis mutandis}} Lemma~\ref{Lem:Cauchy-determinant}, the sought after estimate descends. This concludes the proof. 
\end{proof}

Starting from Equation \eqref{Eq: nth order M} and iterating the analysis applied to $J_n(V_g^{\otimes n},\Phi_f)$, an analogous expression can be obtained for $M_n$, the only
difference being the explicit form of the effective test-function  $\tilde{\tilde{g}}$, which is defined in this case in terms of the Feynman propagator as
\begin{equation}
  \tilde{\tilde{g}} (y) \assign g_Q (y) [(Q + \hbar \Delta_F) f] (y).
  \label{Eq:new-test-funct-for-M}
\end{equation}
Without going through the detailed procedure once more, we directly state the resulting estimate on the absolute value of $M_n(V_g^{\otimes n},\Phi_f)$:
\begin{align*}
     | M_n (V_g^{\otimes n}, \Phi_f) | & \leqslant  \frac{1}{2} \sum_{\ell =
     0}^n \binom{n}{\ell} (n - \ell) [\tmop{ev}_0 (\mathcal{T}_n^{\hbar H +
     Q} (V_{a, | \tilde{\tilde{g}} |} \otimes V_g^{\otimes n - 1})) +
     \tmop{ev}_0 (\mathcal{T}_n^{\hbar H + Q} (V_{- a, |
     \tilde{\tilde{g}} |} \otimes V_g^{\otimes n - 1}))]\\
     & =  n 2^{n - 2} [\tmop{ev}_0 (\mathcal{T}_n^{\hbar H + Q} (V_{a, |
     \tilde{\tilde{g}} |} \otimes V_g^{\otimes n - 1})) + \tmop{ev}_0
     (\mathcal{T}_n^{\hbar H + Q} (V_{- a, | \tilde{\tilde{g}} |}
     \otimes V_g^{\otimes n - 1}))],
\end{align*}
where the second line descends from the identity
\[ \sum_{\ell = 0}^n \binom{n}{\ell} (n - \ell) = n 2^{n - 1} . \]
\begin{theorem}\label{Thm::bound-Mn}
Under the assumptions of Theorem~\ref{Thm:estimates-Q-S-matrix-I} and recalling $M_n (V_g^{\otimes n}, \Phi_f)$ as per Equation \eqref{Eq: nth order M}, the following bound is satisfied:
  \[ | M_n (V_g^{\otimes n}, \Phi_f) | \leqslant \frac{n 2^n (2 \mu)^{n
     \alpha} (C_Q)^{n^2}}{2 (n!)^{1 - 1 / p}} \left( \frac{2 \lambda e^{2^{-
     1} a^2 K}}{\hbar} \right)^n \|g\|^{n - 1}_{L^q} \| \tilde{\tilde{g}}
     \|_{L^q} C^{n / p} . \]
\end{theorem}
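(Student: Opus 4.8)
The plan is to follow almost \emph{verbatim} the proof of Theorem~\ref{Thm:bound-Jn}, since the passage from $J_n$ to $M_n$ amounts only to replacing the effective source $\tilde{g}$ of Equation~\eqref{Eq:modified-test-funtion} by $\tilde{\tilde{g}}$ of Equation~\eqref{Eq:new-test-funct-for-M}, that is, trading the kernel $Q+\hbar\omega$ for $Q+\hbar\Delta_F$ inside the modified test function. The starting point is the estimate derived immediately above the statement,
\[
|M_n(V_g^{\otimes n},\Phi_f)| \leqslant n\,2^{n-2}\Bigl[\tmop{ev}_0\bigl(\mathcal{T}_n^{\hbar H+Q}(V_{a,|\tilde{\tilde{g}}|}\otimes V_g^{\otimes n-1})\bigr) + \tmop{ev}_0\bigl(\mathcal{T}_n^{\hbar H+Q}(V_{-a,|\tilde{\tilde{g}}|}\otimes V_g^{\otimes n-1})\bigr)\Bigr],
\]
which is structurally identical to Equation~\eqref{Eq:bound-for-Jn}: crucially, the kernel appearing inside the time-ordered products is $\hbar H+Q$ in both cases, because $\tmop{Re}(\Delta_F)=\tmop{Re}(\omega)=H$ and $Q$ is real, so the substitution of $\omega$ by $\Delta_F$ affects only the explicit shape of the effective source and not the kernel governing the estimate.

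First I would check that $\tilde{\tilde{g}}$ retains all the properties of $\tilde{g}$ that were used for $J_n$. Compact support is inherited from the factor $g_Q$, exactly as for $\tilde{g}$. Smoothness follows from a wavefront-set argument: since $f\in\mathcal{D}(\mathbb{R}^2)$ one has $\tmop{WF}(f)=\emptyset$, and because in both $\tmop{WF}(Q)$ (see Equation~\eqref{Eq: WF Q}) and $\tmop{WF}(\Delta_F)$ the covector attached to the integration variable never vanishes, the functions $Qf$ and $\Delta_Ff$ are smooth; hence $\tilde{\tilde{g}}=g_Q\,[(Q+\hbar\Delta_F)f]\in\mathcal{D}(\mathbb{R}^2)$ and in particular $\|\tilde{\tilde{g}}\|_{L^q}<\infty$.

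Next I would reduce the two evaluation terms to the massless theory by invoking the conditioning and inverse-conditioning machinery of Theorem~\ref{Thm:conditioning}, whose hypotheses hold on $D_\mu$ because $\hbar H_0+Q$ and $\hbar H+Q$ are positive, real and symmetric there and their difference equals $H_0-H$, independently of $Q$. This yields the exact analogue of Equation~\eqref{Eq: conditioning interacting field}, namely
\[
\tmop{ev}_0\bigl(\mathcal{T}_n^{\hbar H+Q}(V_{\pm a,|\tilde{\tilde{g}}|}\otimes V_g^{\otimes n-1})\bigr) \leqslant 2\,\tmop{ev}_0\Bigl\{\mathcal{T}_n^{\hbar H_0+Q}\Bigl[\Bigl(\tfrac{2\lambda e^{2^{-1}a^2 K}}{\hbar}V_{\pm a,|\tilde{\tilde{g}}|}\Bigr)\otimes\Bigl(\tfrac{2\lambda e^{2^{-1}a^2 K}}{\hbar}V_g\Bigr)^{\otimes n-1}\Bigr]\Bigr\}.
\]
Finally I would apply Lemma~\ref{Lem:Cauchy-determinant} \emph{mutatis mutandis}: repeating the Cauchy-determinant factorization with one copy of $g$ replaced by $|\tilde{\tilde{g}}|$ produces the geometric factors $(C_Q(\mu))^{n^2}(2\mu)^{n\alpha}$ and the Fr\"ohlich $L^p$-estimate $(C^n n!)^{1/p}$, together with the norm $\|g\|^{n-1}_{L^q}\|\tilde{\tilde{g}}\|_{L^q}$. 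Collecting these with the prefactor $n\,2^{n-2}$ and the $1/n!$ from the perturbative expansion yields the stated bound, the factorial $(n!)^{1-1/p}$ surviving in the denominator precisely as for $J_n$.

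The only genuinely new point, and thus the main (mild) obstacle, is the regularity and $L^q$-integrability of $\tilde{\tilde{g}}$, which is built from the Feynman rather than the Hadamard kernel. Once this is settled by the wavefront-set argument above, every remaining step is identical to the proof of Theorem~\ref{Thm:bound-Jn}, so it can legitimately be omitted.
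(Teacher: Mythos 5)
Your proposal is correct and follows essentially the same route as the paper, which itself omits the proof of Theorem~\ref{Thm::bound-Mn} and simply appeals to the argument for Theorem~\ref{Thm:bound-Jn}: starting from the estimate $|M_n|\leqslant n\,2^{n-2}[\cdots]$ with $\tilde{\tilde{g}}$ in place of $\tilde{g}$, one applies conditioning (Theorem~\ref{Thm:conditioning}) and then Lemma~\ref{Lem:Cauchy-determinant} \emph{mutatis mutandis}, the kernel in the estimates being $\hbar H+Q$ in both cases since $\tmop{Re}(\Delta_F)=\tmop{Re}(\omega)=H$. Your additional wavefront-set check that $\tilde{\tilde{g}}\in\mathcal{D}(\mathbb{R}^2)$ is a sound way to justify what the paper asserts implicitly, and it correctly identifies the only point where the Feynman kernel could have caused trouble.
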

From Theorems~\ref{Thm:bound-Jn} and~\ref{Thm::bound-Mn}, the following corollary follows.
\begin{corollary}
Under the assumptions of Theorem~\ref{Thm:expansion-of-the-interacting-field},
  the power series defining the interacting field $\Phi_{I, f}(\varphi)$ as per Equation \eqref{Eq: expansion interacting 2} is absolutely
  convergent, for all field configurations $\varphi \in \mathcal{E}(\mathbb{R}^2)$.
\end{corollary}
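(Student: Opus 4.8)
The plan is to reduce the statement to the convergence result already secured for the $Q-S$-matrix. By Theorem~\ref{Thm:expansion-of-the-interacting-field}, evaluated at any fixed $\varphi\in\mathcal{E}(\mathbb{R}^2)$ the interacting field decomposes as
\[
\Phi_{I,f} = \Phi_f + \sum_{n\geqslant 0}\left(\frac{i\lambda}{\hbar}\right)^n J_n(V_g^{\otimes n},\Phi_f) + \sum_{n\geqslant 0}\left(\frac{i\lambda}{\hbar}\right)^n M_n(V_g^{\otimes n},\Phi_f),
\]
that is, a single bounded summand $\Phi_f$ plus two power series in $\lambda$. Since absolute convergence is stable under finite sums, I would argue that it suffices to prove that each of the $J$- and $M$-series converges absolutely, uniformly in $\varphi$, invoking Theorem~\ref{Thm:bound-Jn} for the former and Theorem~\ref{Thm::bound-Mn} for the latter.

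The decisive observation is that the per-order bounds furnished by Theorems~\ref{Thm:bound-Jn} and~\ref{Thm::bound-Mn} are structurally identical to the $Q-S$-matrix bound~\eqref{Eq:bound-n-th-order-QS} of Theorem~\ref{Thm:estimates-Q-S-matrix-I}: they share the same factors $(2\mu)^{n\alpha}$, $(C_Q)^{n^2}$, $(2\lambda e^{2^{-1}a^2 K}/\hbar)^n$ and $C^{n/p}$, together with the same denominator $(n!)^{1-1/p}$. The only discrepancies are the harmless extra polynomial factor $n2^{n-1}$, produced by the combinatorics of the single linear field insertion, and the replacement of $\|g\|_{L^q}^n$ by $\|g\|_{L^q}^{n-1}\|\tilde g\|_{L^q}$, respectively $\|g\|_{L^q}^{n-1}\|\tilde{\tilde g}\|_{L^q}$. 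Here I would first record that the modified test functions $\tilde g$ and $\tilde{\tilde g}$ of Equations~\eqref{Eq:modified-test-funtion} and~\eqref{Eq:new-test-funct-for-M} both lie in $\mathcal{D}(\mathbb{R}^2)$, being products of the compactly supported $g_Q$ with the smooth functions $(Q+\hbar\omega)f$ and $(Q+\hbar\Delta_F)f$, so that their $L^q$-norms are finite constants independent of $n$.

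With these bounds at hand, I would conclude absolute convergence exactly as in Corollary~\ref{Cor: convergence S matrix}: for any fixed $p>1$ with $p<\alpha^{-1}$, Stirling's formula shows that the factorial $(n!)^{1-1/p}$ in the denominator controls the product of the remaining factors, and the additional linear factor $n$ is absorbed without affecting the conclusion. Because the phases $e^{i a\varphi(\cdot)}$ entering the vertex functionals have modulus one, all estimates are uniform in $\varphi$, so the two series converge absolutely for every $\varphi\in\mathcal{E}(\mathbb{R}^2)$. I expect no genuine obstacle at the level of the corollary itself: all the analytic work has already been discharged in Theorems~\ref{Thm:bound-Jn} and~\ref{Thm::bound-Mn}, and the only points requiring attention are the admissibility of $\tilde g,\tilde{\tilde g}$ and the verification that the delicate balance between the growth factors and $(n!)^{1-1/p}$—settled once and for all in Corollary~\ref{Cor: convergence S matrix}—is unaffected by the spurious extra factor $n$.
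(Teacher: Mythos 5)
Your proposal is correct and follows essentially the same route as the paper: the paper's proof likewise reduces the corollary to the decomposition of Equation \eqref{Eq: expansion interacting 2} together with the per-order bounds of Theorems \ref{Thm:bound-Jn} and \ref{Thm::bound-Mn}, and then invokes the same Stirling-type argument as in Corollary \ref{Cor: convergence S matrix}, the extra factor $n2^{n}$ and the replacement of one $\|g\|_{L^q}$ by $\|\tilde g\|_{L^q}$ (resp. $\|\tilde{\tilde g}\|_{L^q}$) being harmless exactly as you observe. Your additional remarks on the admissibility of $\tilde g,\tilde{\tilde g}\in\mathcal{D}(\mathbb{R}^2)$ and on the $\varphi$-independence of the bounds only make explicit what the paper leaves implicit.
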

\begin{proof}
	In complete analogy with the proof of Corollary \ref{Cor: convergence S matrix}, convergence is a direct consequence of the estimates in Theorems \ref{Thm:bound-Jn} and \ref{Thm::bound-Mn}, combined with the decomposition in Equation\eqref{Eq: expansion interacting 2}.
\end{proof}
\subsubsection{Convergence of the $n$-point correlation functions}\label{Sec:convergence-corr-functions}
In this section we shall extend the convergence result proven in Section~\ref{Sec:conv-interacting-field} at the level of the $n$-point correlation functions of the interacting field. To wit, in this section we shall prove convergence of the power series defining the interacting counterpart of algebra elements of the form
$$ \Phi_{f_1} \ldots \Phi_{f_p}, \quad 
 p \in \mathbb{N}, \quad 
   f_i \in \mathcal{D} (\mathbb{R}^2), \quad \forall i \in \{ 1, \ldots, p \}.$$
Throughout this section we adopt the notation $\Phi_1 \ldots \Phi_p \assign \Phi_{f_1}\ldots \Phi_{f_p}$. On account of Equation~\eqref{Eq:Bog-Map}, the $Q$-deformed interacting version of the product of fields $F = \Phi_1 \ldots \Phi_p$ reads
\[ \Gamma_Q [R_{\lambda V} (\Phi_1 \ldots \Phi_p)] = \Gamma_Q ((S (\lambda
   V_g))^{\star_{\hbar \omega} - 1}) \star_{Q + \hbar \omega} [(\Gamma_Q (S
   (\lambda V_g)) \star_{Q + \hbar \Delta_F} \Gamma_Q (\Phi_1 \ldots \Phi_p))]
   . \]
As a first step let us observe that, splitting the deformed product $\star_{Q + \hbar \Delta_F}$ into the sum between the pointwise counterpart and higher order contracted contributions, we obtain
\[ \begin{array}{lll}
     \Gamma_Q (S (\lambda V_g)) \star_{Q + \hbar \Delta_F} \Gamma_Q (\Phi_1
     \ldots \Phi_p) & = & \Gamma_Q (S (\lambda V_g)) \Gamma_Q (\Phi_1 \ldots
     \Phi_p) +\\
     &  & + \sum_{n = 1}^p \langle \Gamma_Q (S (\lambda V_g))^{(n)}, (Q +
     \hbar \Delta_F)^{\otimes n} \Gamma_Q (\Phi_1 \ldots \Phi_p)^{(n)} \rangle.
   \end{array} \]
As a consequence,
\begin{align}\label{Eq:decomp-correlation}
\nonumber
    \Gamma_Q [R_{\lambda V} (\Phi_1 \ldots \Phi_p)] & =  \Gamma_Q ((S
    (\lambda V_g))^{\star_{\hbar \omega} - 1}) \star_{Q + \hbar \omega}
    [\Gamma_Q (S (\lambda V_g)) \Gamma_Q (\Phi_1 \ldots \Phi_p)]\\
    \nonumber
    &   + \sum_{n = 1}^p \Gamma_Q ((S (\lambda V_g))^{\star_{\hbar \omega} -
    1}) \star_{Q + \hbar \omega} \langle \Gamma_Q (S (\lambda V_g))^{(n)}, (Q
    + \hbar \Delta_F)^{\otimes n} \Gamma_Q (\Phi_1 \ldots \Phi_p)^{(n)}
    \rangle\\
    & \backassign  J (V_g, \Phi_1 \ldots \Phi_p) + \sum_{n = 1}^p M^n (V_g,
    \Phi_1 \ldots \Phi_p),
\end{align}
where the arguments of $J$ and $M^n$ remind us of the specific functional under scrutiny. Mirroring the analysis of the interacting field, let us start working with $J (V_g, \Phi_1 \ldots \Phi_p)$, namely
\begin{equation}
  J (V_g, \Phi_1 \ldots \Phi_p) = \Gamma_Q ((S (\lambda V_g))^{\star_{\hbar
  \omega} - 1}) \star_{Q + \hbar \omega} [\Gamma_Q (S (\lambda V_g)) \Gamma_Q
  (\Phi_1 \ldots \Phi_p)] . \label{Eq:relevant-corr}
\end{equation}
Note that, setting $I = \{ 1, \ldots, p \}$, $I_1 = I
\setminus \{ i_1, j_1 \}$ and, iteratively,  $I_{\ell} = I_{\ell - 1} \setminus
\{ i_{\ell}, j_{\ell} \}$, the deformed product $\Gamma_Q (\Phi_1 \ldots \Phi_p)$ can be written as
\begin{align}\label{Eq:Wick}
\nonumber
    \Gamma_Q (\Phi_1 \ldots \Phi_p) & =  \Phi_1 \ldots \Phi_p +\\
    \nonumber
    &   + \sum_{i_1 < j_1 ; i_1, j_1 \in I} Q (f_{i_1}, f_{j_1}) \left[
    \prod_{k_1 \in I_1} \Phi_{k_1} + \sum_{i_2 < j_2 ; i_2, j_2 \in I_1} Q
    (f_{i_2}, f_{j_2}) \left[ \prod_{k_2 \in I_2} \Phi_{k_2} + \ldots +
    \right. \right.\\
    &  \qquad \qquad \qquad \qquad + \left. \left. \sum_{i_{\lfloor p / 2
    \rfloor} < j_{\lfloor p / 2 \rfloor} ; i_{\lfloor p / 2 \rfloor},
    j_{\lfloor p / 2 \rfloor} \in I_{\lfloor p / 2 \rfloor}} Q (f_{i_{\lfloor
    p / 2 \rfloor}}, f_{j_{\lfloor p / 2 \rfloor}}) \Phi_{\lfloor p / 2
    \rfloor} \right] \right],
\end{align}
which is nothing but Wick's theorem adapted to the map $\Gamma_Q$, see Equation \eqref{Eq:GammaQ}. In Equation \eqref{Eq:Wick}, with the notation $\Phi_{\lfloor p / 2 \rfloor} = 1$ we mean that, whenever $p$ is even, then $\Phi_{\lfloor p / 2 \rfloor} = 1$.
	At this level, Equations~\eqref{Eq:relevant-corr} and~\eqref{Eq:Wick} suggest that, in order to prove the convergence of the
	series defining $J (V_g, \Phi_1 \ldots \Phi_p)$, it suffices to check convergence of the products
	\begin{equation}
		\Gamma_Q ((S (\lambda V_g))^{\star_{\hbar \omega} - 1}) \star_{Q + \hbar
			\omega} [\Gamma_Q (S (\lambda V_g)) \Phi_1 \ldots \Phi_m],
		\label{Eq:form-to-handle}
	\end{equation}
	for $m \leqslant p$. Indeed, this statement stems from replacing Equation~\eqref{Eq:Wick} into Equation~\eqref{Eq:relevant-corr},
	observing that the factors $Q (f_{i_{\ell}}, f_{j_{\ell}})$ are constant
	functionals. Hence they are not affected by the contractions defining the product $\star_{Q + \hbar\omega}$. 
	
To better grasp the underlying rationale, an example is in due order.
\begin{example}
  The simplest contribution to Equation \eqref{Eq:Wick} is the one where a single contraction occurs, namely
  \begin{align*}
  \sum_{i_1 < j_1 ; i_1, j_1 \in I} &\Gamma_Q ((S (\lambda
  V_g))^{\star_{\hbar \omega} - 1}) \star_{Q + \hbar \omega} \left[ Q
  (f_{i_1}, f_{j_1}) \prod_{k_1 \in I_1} \Phi_{k_1} \right]\\
  &=\sum_{i_1 < j_1 ; i_1, j_1 \in I} Q (f_{i_1}, f_{j_1}) \left[ \Gamma_Q
  ((S (\lambda V_g))^{\star_{\hbar \omega} - 1}) \star_{Q + \hbar \omega}
  \prod_{k_1 \in I_1} \Phi_{k_1} \right].
  \end{align*}
  The only term to be tamed is the one under square brackets which falls in the class of those considered in Equation~\eqref{Eq:form-to-handle}.
\end{example}
As a consequence, the analysis of $J (V_g, \Phi_1 \ldots \Phi_p)$ boils down to that of terms of the form
\begin{equation}
  \Gamma_Q ((S (\lambda V_g))^{\star_{\hbar \omega} - 1}) \star_{Q + \hbar
  \omega} [\Gamma_Q (S (\lambda V_g)) \Phi_1 \ldots \Phi_m],
  \label{Eq:case-to-study}
\end{equation}
for $m \leqslant p$. Observe that $J (V_g, \Phi_1 \ldots \Phi_p)$ is a finite sum of contributions of this kind, see Equations \eqref{Eq:relevant-corr} and \eqref{Eq:Wick}. Hence, we need a generalization of Lemma~\ref{Lemma:Leibniz} to deal with this scenario.
\begin{lemma}\label{Lem:Leibniz-pro}
Let $K$ be an integral kernel and let $A,B\in\mathcal{F}_{loc}$ be regular
  functionals, $m \in \mathbb{N}$ and $I = \{ 1, \ldots, m \}$. Then
  \begin{align}
  \nonumber
    A \star_K (B \Phi_1 \ldots \Phi_m) =& (A \star_K B) \Phi_1 \ldots \Phi_m \\+&
    \sum_{\ell = 1}^m \frac{1}{\ell !} \sum_{i_1, \ldots, i_{\ell} \in I, i_1
    < \ldots < i_{\ell}} \left[ \left\langle A^{(\ell)}, K^{\otimes \ell}
    \Phi_{i_1}^{(1)} {\ldots \Phi_{i_{\ell}}^{(1)}}  \right\rangle \star_K B
    \right] \prod_{j \in I \setminus \{ i_1, \ldots, i_{\ell} \}} \Phi_j,
    \label{Eq:Leibniz-pro}
  \end{align}
  where  \[ \prod_{j \in\emptyset} \Phi_j = 1. \]
\end{lemma}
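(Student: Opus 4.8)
The plan is to prove the identity by induction on $m$, taking the single-field Leibniz rule of Lemma~\ref{Lemma:Leibniz} as the base case $m=1$. Note that $C=\Phi_1$ is an admissible choice there because the smeared linear field is linear, so by Example~\ref{Ex: basic functionals} one has $\Phi_i^{(k)}=0$ for every $k\geq 2$ and only a single first-derivative contraction can occur per field. For the inductive step I would isolate the last factor, writing $B\Phi_1\ldots\Phi_m=(B\Phi_1\ldots\Phi_{m-1})\Phi_m$, and apply Lemma~\ref{Lemma:Leibniz} with the role of $B$ played by the smooth functional $B\Phi_1\ldots\Phi_{m-1}$ and $C=\Phi_m$. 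This produces exactly two contributions,
\begin{equation*}
A\star_K(B\Phi_1\ldots\Phi_m)=\big[A\star_K(B\Phi_1\ldots\Phi_{m-1})\big]\Phi_m+\langle A^{(1)},K\Phi_m^{(1)}\rangle\star_K(B\Phi_1\ldots\Phi_{m-1}),
\end{equation*}
and the strategy is to feed the inductive hypothesis into each of them separately.

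For the first contribution the inductive hypothesis applies verbatim, and multiplying through by the surviving factor $\Phi_m$ reproduces precisely those terms of the target formula in which the index $m$ takes part in no contraction; the factor $\Phi_m$ is simply absorbed into $\prod_{j\in I\setminus\{i_1,\ldots,i_\ell\}}\Phi_j$. For the second contribution I would apply the inductive hypothesis to $\tilde A\star_K(B\Phi_1\ldots\Phi_{m-1})$ with $\tilde A\assign\langle A^{(1)},K\Phi_m^{(1)}\rangle$, using the chain-rule identity for the derivatives of $\tilde A$, namely $\langle\tilde A^{(\ell)},K^{\otimes\ell}\Phi_{i_1}^{(1)}\ldots\Phi_{i_\ell}^{(1)}\rangle=\langle A^{(\ell+1)},K^{\otimes(\ell+1)}\Phi_{i_1}^{(1)}\ldots\Phi_{i_\ell}^{(1)}\Phi_m^{(1)}\rangle$. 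In this way the order-$0$ term $(\tilde A\star_K B)\Phi_1\ldots\Phi_{m-1}$ becomes the size-one contraction of the set $\{m\}$, while each order-$\ell$ term of the hypothesis becomes a size-$(\ell+1)$ contraction of $A$ in which $\Phi_m^{(1)}$ always appears and the remaining contracted indices range over $\{1,\ldots,m-1\}$. Hence the second contribution supplies exactly those terms in which $m$ does participate in a contraction.

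The main obstacle, and the only genuinely delicate point, is the combinatorial bookkeeping showing that these two families recombine into a single sum over all contracted index sets drawn from $\{1,\ldots,m\}$ with the correct weights. Here one must exploit the symmetry of $A^{(\ell)}$ and of $K^{\otimes\ell}$ under permutation of their arguments, which renders $\langle A^{(\ell)},K^{\otimes\ell}\Phi_{i_1}^{(1)}\ldots\Phi_{i_\ell}^{(1)}\rangle$ invariant under reordering of $i_1,\ldots,i_\ell$; it is precisely this symmetry, together with the prefactor $1/\ell!$, that makes the weights produced by the two contributions consistent. Concretely, a size-$\ell$ set containing $m$ is generated from the order-$(\ell-1)$ term of the hypothesis applied to $\tilde A$, and one must verify that the multiplicity it acquires there is normalised by the corresponding factorial to the same weight carried, for an $m$-free set of the same cardinality, by the first contribution. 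I would carry out this matching by splitting the target sum according to whether $m$ lies in the contracted set and checking agreement term by term, finally invoking the regularity of $A,B\in\mathcal{F}_{loc}$ and the linearity of the $\Phi_i$ to guarantee that every contraction written above is well defined.
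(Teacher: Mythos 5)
Your route is genuinely different from the paper's. The paper proves the identity in one stroke: it expands $A \star_K (B \Phi_1 \ldots \Phi_m) = \sum_{n \geqslant 0} \frac{1}{n!} \langle A^{(n)}, K^{\otimes n} (B \Phi_1 \ldots \Phi_m)^{(n)} \rangle$, inserts the Leibniz rule $(B \Phi_1 \ldots \Phi_m)^{(n)} = \sum_{k} \binom{n}{k} B^{(n-k)} (\Phi_1 \ldots \Phi_m)^{(k)}$ (which truncates at $k \leqslant m$ by linearity of the $\Phi_i$), and then exchanges and re-indexes the sums so that, for each fixed set of contracted indices, the residual series reassembles into $\langle A^{(k)}, K^{\otimes k} \Phi_{i_1}^{(1)} \cdots \Phi_{i_k}^{(1)} \rangle \star_K B$. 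Your induction on $m$, bootstrapped from Lemma~\ref{Lemma:Leibniz}, is a legitimate alternative scheme, and its two pillars are sound: peeling off the last factor is admissible because $\Phi_m$ is linear, and the identity $\langle \tilde{A}^{(\ell)}, K^{\otimes \ell} \Phi_{i_1}^{(1)} \cdots \Phi_{i_\ell}^{(1)} \rangle = \langle A^{(\ell+1)}, K^{\otimes (\ell+1)} \Phi_{i_1}^{(1)} \cdots \Phi_{i_\ell}^{(1)} \Phi_m^{(1)} \rangle$ holds precisely because $\Phi_m^{(1)}$ is configuration-independent; the paper relies on the same fact when it recognizes the factor $\star_K B$ in its resummation.

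However, the step you defer is exactly where your argument does not close as described, and the mechanism you invoke for it is not correct. Track the prefactors literally: a contracted set of cardinality $\ell$ not containing $m$ inherits the weight $1/\ell!$ from the inductive hypothesis inside the first contribution, whereas a set of cardinality $\ell$ containing $m$ arises from the order-$(\ell-1)$ term of the hypothesis applied to $\tilde{A}$ and therefore carries $1/(\ell-1)! = \ell \cdot (1/\ell!)$. The symmetry of $A^{(\ell)}$ cannot supply the missing factor $1/\ell$: both expressions are single sums over \emph{unordered} index sets, so there is no leftover permutation multiplicity for the prefactor to compensate, contrary to your claim that ``this symmetry, together with the prefactor $1/\ell!$,'' makes the weights consistent. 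What your induction actually produces---consistently with the base case $m=1$, with the permutation symmetry of the left-hand side, and with a direct check at $m=2$---is the version of Equation~\eqref{Eq:Leibniz-pro} in which every unordered subset carries weight one and each bracket is the single contraction. Reconciling this with the stated $1/\ell!$ requires reading $\langle A^{(\ell)}, K^{\otimes \ell} \Phi_{i_1}^{(1)} \cdots \Phi_{i_\ell}^{(1)} \rangle$ as the sum over all $\ell!$ (equal, by symmetry) assignments of the contracted fields to the slots of $A^{(\ell)}$; this is also the normalization under which the paper's own expansion of $(\Phi_1 \ldots \Phi_m)^{(k)}$ over increasing multi-indices is legitimate. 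So you must either make that symmetrization convention explicit and carry the factor $\ell!$ through the induction, or prove the single-contraction statement and remark on the equivalence; as written, the term-by-term matching you propose fails by a factor of $\ell$ for every contracted set containing $m$.
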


\begin{proof}
  The statement descends directly from an iterative application of the Leibniz rule. It suffices to recall the identity
  \[ (B \Phi_1 \ldots \Phi_m)^{(n)} = \sum_{k = 0}^n \binom{n}{k} B^{(n - k)}
     (\Phi_1 \ldots \Phi_m)^{(k)}, \]
  observing that it is non-vanishing only for $k \leqslant m$. On account of the explicit expression of $\star_K$ as per Equation \eqref{Eq:deform-quantisation}, it holds that
\begin{align}\label{Eq: AB}
    A \star_K (B \Phi_1 \ldots \Phi_m)  = & \sum_{n \geqslant 0}
       \frac{1}{n!} \langle A^{(n)}, K^{\otimes n} (B \Phi_1 \ldots
       \Phi_m)^{(n)} \rangle\\
        = & \sum_{n \geqslant 0} \frac{1}{n!} \sum_{k = 0}^n \binom{n}{k}
       \langle A^{(n)}, K^{\otimes n} B^{(n - k)} (\Phi_1 \ldots \Phi_m)^{(k)}
       \rangle
\end{align} 
We observe that the case $k = 0$ in Equation \eqref{Eq: AB} coincides with the first contribution on the right hand side of Equation~\eqref{Eq:Leibniz-pro}, obtained when all the functional derivatives act on $B$. Considering instead a generic $k$, such tat $1 \leqslant k \leqslant m$, and exchanging the sums in Equation \eqref{Eq: AB} we write the inner series as
\begin{align*}
       \sum_{n \geqslant k} \frac{1}{n!} \binom{n}{k} \langle A^{(n)},
       K^{\otimes n} B^{(n - k)} &(\Phi_1 \ldots \Phi_m)^{(k)} \rangle =\\
       =  \sum_{n \geqslant k} \frac{1}{k ! (n - k) !}& \sum_{\substack{i_1,
       \ldots, i_{k} \in I\\ i_1 < \ldots < i_{k}}} \left\langle A^{(n)},
       K^{\otimes n} \left( \Phi_{i_1}^{(1)} {\ldots \Phi_{i_{k}}^{(1)}} 
       B^{(n - k)} \right) \right\rangle \prod_{j
       \in I \setminus \{ i_1, \ldots, i_{k} \}} \Phi_j\\
        =  \sum_{h \geqslant 0} \frac{1}{k !h!}& \sum_{\substack{i_1, \ldots,
       i_{k} \in I\\ i_1 < \ldots < i_{k}}} \left\langle A^{(h + k)},
       K^{\otimes h + k} \left( \Phi_{i_1}^{(1)} {\ldots
       \Phi_{i_{k}}^{(1)}}  B^{(h)} \right) \right\rangle\prod_{j
       \in I \setminus \{ i_1, \ldots, i_{k} \}} \Phi_j\\
        =  \frac{1}{k !} &\sum_{\substack{i_1, \ldots, i_{k} \in I\\ i_1 < \ldots
       < i_{k}}} \left[ \left\langle A^{(k)}, K^{\otimes k}
       \Phi_{i_1}^{(1)} {\ldots \Phi_{i_{k}}^{(1)}}  \right\rangle \star_K
       B \right] \prod_{j \in I \setminus \{ i_1, \ldots, i_{k} \}} \Phi_j.
  \end{align*}
  Summing over all values of $k$, the proof is concluded.
\end{proof}

We can apply Lemma~\ref{Lem:Leibniz-pro} directly to Equation~\eqref{Eq:case-to-study} by choosing $A = \Gamma_Q ((S (\lambda V_g))^{\star_{\hbar \omega} -
1})$, $B = \Gamma_Q (S (\lambda V_g))$ and $K = Q + \hbar \omega$. As a result, the deformed products of interest assume the form
\begin{align}
     &\Gamma_Q ((S (\lambda V_g))^{\star_{\hbar \omega} - 1}) \star_{Q + \hbar
     \omega} [\Gamma_Q (S (\lambda V_g)) \Phi_1 \ldots \Phi_m] =  \Phi_1
     \ldots \Phi_m+\notag \\+ &\sum_{\ell = 1}^m \frac{1}{\ell !} \sum\limits_{\substack{i_1, \ldots,
     i_{\ell} \in I\\ i_1 < \ldots < i_{\ell}}} \prod_{j \in I \setminus \{ i_1,
     \ldots, i_{\ell} \}} \Phi_j  \underbrace{\left\langle \Gamma_Q ((S (\lambda V_g))^{\star_{\hbar
     \omega} - 1})^{(\ell)}, (Q + \hbar \omega)^{\otimes \ell}
     \Phi_{i_1}^{(1)} {\ldots \Phi_{i_{\ell}}^{(1)}}  \right\rangle \star_{Q +
     \hbar \omega} \Gamma_Q S (\lambda V_g)}_{\ell-\textrm{th term}} .\label{Eq: Complicated}
\end{align}

\noindent Considering the $\ell$-th term in Equation \eqref{Eq: Complicated}, we can rewrite it as
\begin{align}\label{Eq:term-to-handle}
\nonumber
    \left\langle \Gamma_Q ((S (\lambda V_g))^{\star_{\hbar \omega} -
    1})^{(\ell)}, (Q + \hbar \omega)^{\otimes \ell} \Phi_{i_1}^{(1)} {\ldots
    \Phi_{i_{\ell}}^{(1)}}  \right\rangle \star_{Q + \hbar \omega} \Gamma_Q S
    (\lambda V_g) = \sum_{n \geqslant 0} \left( \frac{i \lambda}{\hbar}
    \right)^n \sum_{j = 0}^n \binom{n}{j} (- 1)^j \times\\
  \times \left\langle
    \mathcal{T}_j^{\hbar \Delta_{\tmop{AF}} + Q} (V_{g_Q}^{\otimes
    j})^{(\ell)}, (Q + \hbar \omega)^{\otimes \ell} \Phi_{i_1}^{(1)} {\ldots
    \Phi_{i_{\ell}}^{(1)}}  \right\rangle \star_{Q + \hbar \omega}
    \mathcal{T}_{n - j}^{\hbar \Delta_F + Q} (V_{g_Q}^{\otimes (n - j)}),
\end{align} 
where we have exploited Equation~\eqref{Eq:action-of-gamma-Q-on-interaction}. Observing that the $j$-th functional derivatives acting on $V_{g_Q}^{\otimes j}$
are distributed on the factors according to the Leibniz rule, the $\ell$-th derivative of the modified time ordered product of the vertex functionals reads
\begin{align*}
     &\mathcal{T}_j^{\hbar \Delta_{\tmop{AF}} + Q} (V_{g_Q}^{\otimes
     j})^{(\ell)}  =  \mathcal{T}_j^{\hbar \Delta_{\tmop{AF}} + Q} (V_{g_Q} 
     \otimes \ldots \otimes V_{g_Q})^{(\ell)}\\
     &=  \sum_{i_1 = 0}^{\ell} \ldots
     \sum_{i_{j - 1} = 0}^{\ell - i_1 - \ldots - i_{j - 2}} \binom{\ell}{i_1}
     \binom{\ell - i_1}{i_2} \ldots \binom{\ell - i_1 - \ldots - i_{j -
     2}}{i_{j - 1}} \mathcal{T}_j^{\hbar \Delta_{\tmop{AF}} + Q}
     (V_{g_Q}^{(i_1)} \otimes \ldots \otimes V_{g_Q}^{(\ell - i_1 - \ldots -
     i_{j - 1})}),
\end{align*}
giving rise to a finite number of contributions. These ought to be contracted with $(Q + \hbar \omega)^{\otimes \ell} \Phi_{i_1}^{(1)}
{\ldots \Phi_{i_{\ell}}^{(1)}}$, yielding
\begin{align}\label{Eq: final form to be estimated}
     \eqref{Eq:term-to-handle} = \sum_{n \geqslant 0} \left( \frac{i
     \lambda}{\hbar} \right)^n \sum_{\ell = 0}^n \binom{n}{j} (- 1)^j
     \sum_{i_1 = 0}^{\ell} \sum_{i_2 = 0}^{\ell - i_1} \ldots \sum_{i_{j - 1}
     = 0}^{\ell - i_1 - \ldots - i_{j - 2}} \binom{\ell}{i_1} \binom{\ell -
     i_1}{i_2} \ldots \binom{\ell - i_1 - \ldots - i_{j - 2}}{i_{j - 1}}
     \times\nonumber\\
     \times \left\langle
     \mathcal{T}_j^{\hbar \Delta_{\tmop{AF}} + Q} \left(
     V_{\tilde{g}_{i_1}}^{(i_1)} \otimes \ldots \otimes
     V_{\tilde{g}_{i_{\ell}}}^{(\ell - i_1 - \ldots - i_{j - 1})} \right)
     \right\rangle \star_{Q + \hbar \omega} \mathcal{T}_{n - j}^{\hbar
     \Delta_F + Q} (V_{g_Q}^{\otimes n - j}),
\end{align}
where, for simplicity of the notation, we introduced
\[ \tilde{g}_{i_k} (y) \assign g_Q (y) [(Q + \hbar \omega) f_{i_k}]^{^{i_k}}
   (y)\in\mathcal{D}(\mathbb{R}^2) \,. \]
\begin{remark}
   	 At this point, convergence of the series in Equation \eqref{Eq: final form to be estimated} descends by mirroring the analysis of the interacting field in Section \ref{Sec:conv-interacting-field}. This can be seen by noticing that, apart from an irrelevant coefficient, functional derivatives of $V_{g_Q} $ are either cosine or sine functionals. Hence we have to cope with a finite number of terms whose structure closely resembles the one of the interacting field.
\end{remark}
This concludes our investigation on the absolute convergence of the series
defining $J (V_g, \Phi_1 \ldots \Phi_p)$. As in the case of the interacting field, an identical procedure applies to the
analysis of $M^n$ introduced in Equation~\eqref{Eq:decomp-correlation}. Indeed, by means of the same
argument, discussed after Equation~\eqref{Eq:Wick}, one can restrict the attention to studying
\[ \Gamma_Q ((S (\lambda V_g))^{\star_{\hbar \omega} - 1}) \star_{Q + \hbar
   \omega} \langle \Gamma_Q (S (\lambda V_g))^{(n)}, (Q + \hbar
   \Delta_F)^{\otimes n} (\Phi_1 \ldots \Phi_p)^{(n)} \rangle, \]
for $n \leqslant p$. As highlighted by Equation~\eqref{Eq:new-test-funct-for-M},
the only difference with respect to the analysis of $J$ lies in the fact
that the modified test-functions are built out of the Feynman
propagator, namely
\[ \tilde{g}^F_{i_k} (y) \assign g_Q (y) [(Q + \hbar \mathLaplace_F)
   f_{i_k}]^{^{i_k}} (y)\in\mathcal{D}(\mathbb{R}^2) . \]
The content of this section culminates in the following result.
\begin{corollary}
  The power series defining the interacting
  observable $\Gamma_Q [R_{\lambda V} (\Phi_1 \ldots \Phi_p)]$ is absolutely
  convergent for any field configuration $\varphi \in \mathcal{E}(\mathbb{R}^2)$ and for any $p \in \mathbb{N}$.
\end{corollary}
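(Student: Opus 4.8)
The plan is to leverage the decomposition established in Equation~\eqref{Eq:decomp-correlation}, which splits the interacting observable into $J(V_g, \Phi_1\ldots\Phi_p)$ and the finite sum $\sum_{n=1}^p M^n(V_g, \Phi_1\ldots\Phi_p)$. Since $p$ is fixed and finite, it suffices to establish absolute convergence of each of these finitely many contributions separately. First I would treat $J$: expanding $\Gamma_Q(\Phi_1\ldots\Phi_p)$ via the adapted Wick theorem of Equation~\eqref{Eq:Wick} produces a finite number of summands, each carrying a constant prefactor of the form $Q(f_{i_\ell}, f_{j_\ell})$. As observed after Equation~\eqref{Eq:Wick}, these prefactors are constant functionals and are therefore inert under the contractions defining $\star_{Q+\hbar\omega}$, so the problem reduces to proving convergence of the model terms in Equation~\eqref{Eq:case-to-study}, namely $\Gamma_Q((S(\lambda V_g))^{\star_{\hbar\omega}-1})\star_{Q+\hbar\omega}[\Gamma_Q(S(\lambda V_g))\Phi_1\ldots\Phi_m]$ for each $m\leqslant p$.

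Next I would apply the generalized Leibniz rule of Lemma~\ref{Lem:Leibniz-pro} with the choices $A=\Gamma_Q((S(\lambda V_g))^{\star_{\hbar\omega}-1})$, $B=\Gamma_Q(S(\lambda V_g))$ and $K=Q+\hbar\omega$, obtaining the expansion in Equation~\eqref{Eq: Complicated}. Inserting the perturbative series for the $Q$--$S$-matrix and its $\star_{\hbar\omega}$-inverse from Equations~\eqref{Eq:Q-S-matrix} and~\eqref{Eq:inverse-Q-S-matrix}, and distributing the functional derivatives across the tensor factors by the Leibniz rule, yields Equation~\eqref{Eq: final form to be estimated}. The decisive structural observation is that every functional derivative $V_{g_Q}^{(i)}$ is, up to an irrelevant power of $ia$, again a vertex functional of sine or cosine type; hence each summand has exactly the shape of the interacting-field contributions $J_n$ studied in Section~\ref{Sec:conv-interacting-field}, but with the smooth compactly supported effective test functions $\tilde{g}_{i_k}(y) = g_Q(y)[(Q+\hbar\omega)f_{i_k}]^{i_k}(y)\in\mathcal{D}(\mathbb{R}^2)$ in place of the single $\tilde{g}$ of Equation~\eqref{Eq:modified-test-funtion}.

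With this identification in hand, the estimate proceeds exactly as in the proof of Theorem~\ref{Thm:bound-Jn}: restricting the support of $g$ to the diamond $D_\mu$, one invokes the conditioning inequality of Theorem~\ref{Thm:conditioning} to bound the massive kernel $\hbar H+Q$ by its massless counterpart $\hbar H_0+Q$, and then applies the Cauchy-determinant bound of Lemma~\ref{Lem:Cauchy-determinant} to each $\tmop{ev}_0$ of the associated time-ordered product. This produces for the $n$-th coefficient a bound carrying the factor $(n!)^{-(1-1/p)}$, so that absolute convergence follows from Stirling's formula precisely as in Corollary~\ref{Cor: convergence S matrix}. The analysis of each $M^n$ is identical, the only modification being that the effective test functions $\tilde{g}^F_{i_k}(y)=g_Q(y)[(Q+\hbar\Delta_F)f_{i_k}]^{i_k}(y)$ are built from the Feynman rather than the Hadamard kernel, a change which does not affect the estimate since $\tmop{Re}(\Delta_F)=\tmop{Re}(\omega)=H$.

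The step I expect to be the main obstacle is the combinatorial bookkeeping. Three nested Leibniz/Wick expansions are composed here --- the contraction of the $p$ fields, the outer expansion of Lemma~\ref{Lem:Leibniz-pro}, and the inner distribution of the $\ell$ derivatives across the $j$ vertex factors --- each contributing sums of binomial coefficients. One must verify that the total multiplicity of terms produced at order $n$ grows at most single-factorially in $n$, so that it is absorbed by the $(n!)^{1-1/p}$ gain for $p>1$; one must also keep the constants $C_Q(\mu)$, $K$ and $\tilde{C}$ uniform across all the finitely many reduced terms, and confirm that the positivity of $\hbar H_0+Q$ on $\mathrm{supp}(g)^{\otimes 2}$ continues to legitimise the Cauchy--Schwarz reduction underlying Lemma~\ref{Lem:Cauchy-determinant}. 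Once these uniform bounds are secured, summing the finitely many contributions completes the argument.
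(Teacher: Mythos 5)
Your proposal is correct and follows essentially the same route as the paper: the decomposition into $J$ and the finite sum of $M^n$ terms, the Wick-type reduction via Equation~\eqref{Eq:Wick} to terms of the form in Equation~\eqref{Eq:case-to-study}, the application of Lemma~\ref{Lem:Leibniz-pro}, the identification of the derivatives of $V_{g_Q}$ as sine/cosine vertex functionals with effective test functions $\tilde{g}_{i_k}$ (resp. $\tilde{g}^F_{i_k}$ for $M^n$), and the conditioning plus Cauchy-determinant estimates as in Theorem~\ref{Thm:bound-Jn}. The combinatorial worry you flag is benign, since for fixed $p$ the number of terms generated by distributing at most $p$ derivatives over $n$ vertex factors grows only polynomially in $n$ and is absorbed by the $(n!)^{1-1/p}$ gain, exactly as the paper implicitly assumes.
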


\section{The classical limit $\hbar \rightarrow 0^+$}\label{Sec:classical-limit}
The previous section has been entirely devoted to showing absolute convergence of the formal power series defining the interacting field and the associated correlation functions within the perturbative quantum approach to the stochastic sine-Gordon model. Our next objective is to retrieve the information concerning the classical stochastic Sine-Gordon model via a limit procedure. In this endeavor we shall adapt to our setting the approach discussed in \cite{Dutsch-Fredenhagen-AQFT-perturb-th-loop-exp}.

As anticipated in Section~\ref{Sec:Strategy}, our ultimate goal is to compute
the expectation value of the solution of the stochastic sine-Gordon equation on the two-dimensional Minkowski space-time. As extensively discussed in \cite{BCDR23, BDR23,  DDR20} this is achieved by applying
the deformation map $\Gamma_Q$, whose effect is to codify the probabilistic information of the free stochastic equation at the level of the algebra of classical interacting observables. In turn, these can be seen as the classical limit of their quantum interacting counterparts. Having proven absolute convergence of the stochastic, quantum interacting field, existence of the classical limit would imply the absolute convergence of the formal power series representing the expectation value of the perturbative solution of the stochastic sine-Gordon model.

Before investigating classical limit of Equation~\eqref{Eq:Bog-Map}, one must check that such limit is well-defined. In the following we shall prove that the only non-vanishing contributions to the series defining Equation~\eqref{Eq:Bog-Map} involve non-negative powers of $\hbar$ ruling out possible blow-ups as $\hbar \rightarrow 0^+$. To keep the discussion as general as possible, in this section we shall consider arbitrary multi-local observables $F \in
\mathcal{F}^{\otimes m}_{\tmop{loc}}(\mathbb{R}^2)$, which encompasses the relevant examples of the interacting field and of its correlation functions.

As discussed in Section \ref{Sec:interacting-AQFT}, the starting point is the power series expansion in $\lambda>0$ of the quantum interacting
observable $F \in \mathcal{F}^{\otimes m}_{\tmop{loc}}(\mathbb{R}^2)$, $m \in
\mathbb{N}$,
\begin{equation}\label{Eq:expansion-in-ret-prod}
  R_{\lambda V_g} (F) = \sum_{n \geqslant 0} \frac{\lambda^n}{n!} R_{n, m}
  (V_g^{\otimes n}, F), 
\end{equation}
where $R_n (V_g^{\otimes n}, F)$ are the so-called {\tmem{retarded products}},
namely
\begin{equation}\label{Eq:retarded-products}
  R_{n, m} (V_g^{\otimes n}, F_f) = \left( \frac{i}{\hbar} \right)^n
  \sum_{\ell = 0}^n \binom{n}{\ell} (- 1)^{\ell} \mathcal{T}_{\ell}^{\hbar
  \Delta_{\tmop{AF}}} (V_g \otimes \ldots \otimes V_g) \star_{\hbar \omega}
  \mathcal{T}_{n - \ell, m}^{\hbar \Delta_F} (V_g \otimes \ldots \otimes V_g
  \otimes F_f), 
\end{equation}
where with $\mathcal{T}_{n - \ell, m}^{\hbar \Delta_F} (V_g
\otimes \ldots \otimes V_g \otimes F)$ we mean that the argument of
$\mathcal{T}_{n - \ell, m}^{\hbar \Delta_F}$ is given by $n - \ell$ copies of
$V_g$, so to keep track of the fact that $F$ depends on $m$
points. Accordingly, per consistency $f$ must lie in $\mathcal{D} (\mathbb{R}^{2m})$. In the following we prove that all contributions to Equation~\eqref{Eq:retarded-products} are of order $O(\hbar^0)$, which is tantamount to 
the existence and finiteness for any $n \geqslant 0$ of the limit
\[ \lim_{\hbar \rightarrow 0^+} R_{n, m} (V_g^{\otimes n}, F_f). \]
\begin{remark}
If we consider in the previous discussion the functionals codifying the interacting field and the $n$-point correlation functions, we can combine the existence of the limit as $\hbar\to 0$ to the proof that the underlying power series in the coupling constant $\lambda$ is absolutely convergent is a suitable regime. This two ingredients are the cornerstone of our construction of the solutions and of the correlation functions of the stochastic sine-Gordon equation on two-dimensional Minkowski spacetime.
\end{remark}
A central notion in this section is that of connected products. Hereafter, a connection is represented by a contraction of fields by means of a suitable integral kernel.

\begin{definition}\label{Def: connected products}
	Let $F_1, \ldots, F_n \in \mathcal{F}_{\tmop{loc}}(\mathbb{R}^2)$ and let $\star_{\hbar\omega}$ be the deformed product introduced in Equation \eqref{Eq:deform-quantisation}. In addition, for any $n>1$, we denote by $\sigma_P$ the collection of all partitions $P\equiv \bigcup_{j=1}^k P_j$ of the set $\{1,\ldots,n\}$ into $k$ disjoint, non-empty subsets $P_j$ with $1<k<n$.	We call connected product between the functionals $F_1, \ldots, F_n$  
\begin{equation}
(F_1 \star_{\hbar \omega} \ldots \star_{\hbar \omega} F_n)^c = F_1
   \star_{\hbar \omega} \ldots \star_{\hbar \omega} F_n - \sum_{| P |
   \geqslant 2} \prod_{p \in P} (F_{p_1} \star_{\hbar \omega} \ldots
   \star_{\hbar \omega} F_{p_{| p |}})^c,
\end{equation}
where $p = (p_1, \ldots, p_{| p |})$, $p_1 < \ldots < p_{| p |}$, the sum runs
over the set of all partitions $P$ of the set $\{ 1, \ldots, n \}$ containing
at least two sets and where $\prod$ denotes the classical pointwise product.
Similarly, given the time-ordering map $\mathcal{T}_n$ as per Equation \eqref{Eq: time-ordered product}, we define the connected time-ordered and anti-time-ordered products as
\[ \mathcal{T}_n^K (F_1 \otimes \ldots \otimes F_n)^c =\mathcal{T}_n^K (F_1
   \otimes \ldots \otimes F_n) - \sum_{| P | \geqslant 2} \prod_{p \in P}
   \mathcal{T}_{| p |}^K (F_{p_1} \otimes \ldots \otimes F_{p_{| p |}})^c, \]
   for $K = \hbar \Delta_F$ and $K = \hbar \Delta_{\tmop{AF}}$.
\end{definition}

As proven in {\cite[Prop.~1]{Dutsch-Fredenhagen-AQFT-perturb-th-loop-exp}} and in the following comments, whenever one considers local functionals $F_1, \ldots, F_n$ of order $\hbar^0$, then
\begin{align}
  &(F_1 \star_{\hbar \omega} \ldots \star_{\hbar \omega} F_n)^c =\mathcal{O}
  (\hbar^{n - 1}), \label{Eq:homogeneity-star}\\
  & \mathcal{T}_n^K (F_1 \otimes \ldots \otimes F_n)^c =\mathcal{O} (\hbar^{n -
  	1}), \label{Eq:homogeneity-T}
\end{align}
regardless whether $K = \hbar \Delta_F$ or $K = \hbar \Delta_{\tmop{AF}}$. Heuristically, this statement relies on the observation that a fully connected product of functionals accounts for at least $n - 1$ contractions, each of which carries a factor $\hbar$.
\begin{proposition}\label{Prop:classical-limit}
For any $F\in\mathcal{F}^{\otimes m}_{\textrm{loc}}(\mathbb{R}^2)$ it holds that
  \begin{enumerate}
    \item any non-vanishing contribution to  $R_{n, m} (V_g^{\otimes n}, F)$ in Equation \eqref{Eq:retarded-products} is such that each one among the $n$ functionals $V_g$ is contracted at least once with one of the entries of
    $F$;
    
    \item for any $n \geqslant0$,
    \[ R_{n, m} (V_g^{\otimes n}, F) =\mathcal{O} (\hbar^0) . \]
  \end{enumerate}
\end{proposition}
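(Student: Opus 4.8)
The plan is to prove the two assertions in sequence: first establish (1), that in any surviving contribution every copy of $V_g$ lies in the connected component of $F$ (i.e.\ is linked to $F$ through a chain of contractions), and then deduce (2) by a power counting in $\hbar$. The two main ingredients are the combinatorics of connected products from Definition \ref{Def: connected products} and the homogeneity estimates \eqref{Eq:homogeneity-star}--\eqref{Eq:homogeneity-T}.

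For part (1) I would decompose $R_{n,m}(V_g^{\otimes n}, F)$, as written in \eqref{Eq:retarded-products}, into its connected components relative to $F$. Each copy of $V_g$ either belongs to the cluster containing $F$, or sits in a ``vacuum'' cluster built solely out of copies of $V_g$ and disconnected from $F$. Since all contractions between such a vacuum cluster and the remaining functionals vanish by definition of disconnectedness, every deformed product ($\star_{\hbar\omega}$, $\cdot_{\hbar\Delta_F}$, $\cdot_{\hbar\Delta_{\tmop{AF}}}$) joining them reduces to the pointwise one, so that the vacuum cluster factorizes out of the full expression. The decisive observation is that, summing over how the $\ell$ anti-time-ordered and the $n-\ell$ time-ordered copies of $V_g$ are distributed, the contribution of a disconnected cluster of size $s\geq 1$ reorganizes precisely into $R_{s,0}(V_g^{\otimes s}, 1)$. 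From the defining relation \eqref{Eq:Bogoliubov map} one has $R_{\lambda V_g}(1) = S(\lambda V_g)^{\star_{\hbar\omega}-1}\star_{\hbar\omega} S(\lambda V_g) = 1$, whence, matching powers of $\lambda$, $R_{s,0}(V_g^{\otimes s}, 1)=0$ for every $s\geq 1$. Consequently any term containing a vacuum cluster cancels, and only the configurations in which each $V_g$ is connected to $F$ survive. I expect this cancellation to be the genuine obstacle: one must check that the factorization of the mixed time-ordered/anti-time-ordered structure along a disconnected cluster is clean and that the weights $\binom{n}{\ell}(-1)^\ell$ split into exactly the vanishing vacuum combination. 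The support property recalled in Remark \ref{Rem:supp-retarded-products} is a useful complement here, since it already forbids any $V_g$ lying outside $J^-(\mathrm{supp}(F))$ from reaching $F$.

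Once part (1) is in place, part (2) follows by power counting. After the vacuum clusters have cancelled, every surviving term of $R_{n,m}(V_g^{\otimes n}, F)$ is a single connected product of the $n+1$ functionals $V_g,\ldots,V_g,F$, each of which is of order $\hbar^0$. By the homogeneity estimates \eqref{Eq:homogeneity-star} and \eqref{Eq:homogeneity-T}, a connected product of $n+1$ such functionals involves at least $n$ contractions, each carrying one factor of $\hbar$, and is therefore $\mathcal{O}(\hbar^n)$. Multiplying by the explicit prefactor $\left(\frac{i}{\hbar}\right)^n=\mathcal{O}(\hbar^{-n})$ appearing in \eqref{Eq:retarded-products} yields $R_{n,m}(V_g^{\otimes n}, F)=\mathcal{O}(\hbar^0)$. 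In particular the limit $\lim_{\hbar\to 0^+} R_{n,m}(V_g^{\otimes n}, F)$ exists and is finite for every $n\geq 0$, which is exactly the input needed to take the classical limit of the interacting observable in \eqref{Eq:Bog-Map}; note that the subsequent application of $\Gamma_Q$ does not affect this count, since $Q$ in \eqref{Eq:def-of-Q} is independent of $\hbar$.
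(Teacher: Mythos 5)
Your proposal is correct, and for item 1 it takes a genuinely different route from the paper's. The paper isolates a single completely uncontracted vertex $\tilde V_g$, pulls it out of both (anti-)time-ordered factors as a pointwise multiplier, and checks by hand that the sum over $\ell$ in \eqref{Eq:retarded-products} becomes telescopic and cancels. You instead fix the vacuum set of size $s$, resum over all of its internal contraction structures and over all ways of distributing its vertices between the $\mathcal{T}^{\hbar\Delta_{\tmop{AF}}}$- and $\mathcal{T}^{\hbar\Delta_F}$-factors, identify the result with $R_{s,0}(V_g^{\otimes s},1)$, and kill it via the formal-series identity $R_{\lambda V_g}(1)=S(\lambda V_g)^{\star_{\hbar\omega}-1}\star_{\hbar\omega}S(\lambda V_g)=1$. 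This linked-cluster argument is in fact stronger than the paper's: item 2 really needs that every connected component of a surviving term contains an entry of $F$, i.e.\ that vacuum clusters of \emph{arbitrary} size cancel, and such clusters cannot be pulled out as a fixed pointwise factor (their internal structure changes with the splitting between the two time-ordered factors), so the telescopic computation as written only disposes of the case $s=1$; your resummation covers all $s$ at once. What the paper's route buys is full explicitness --- it never invokes the order-by-order content of the inverse-$S$-matrix relation --- while your route buys generality at the price of one combinatorial check that you flag but do not perform: that $\binom{n}{\ell}=\sum_j \binom{s}{j}\binom{n-s}{\ell-j}$ together with $(-1)^\ell=(-1)^j(-1)^{\ell-j}$ makes the weights factorize into the vanishing vacuum combination times the $F$-connected remainder; this is straightforward and is the one step you should still write down. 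For item 2 your argument and the paper's are in substance the same power counting: connectivity forces at least $n$ contractions, each carrying one factor of $\hbar$ by \eqref{Eq:homogeneity-star}--\eqref{Eq:homogeneity-T}, which compensates the prefactor $(i/\hbar)^n$; the only caveat is that those estimates are stated for local entries, so your ``single connected product of $n+1$ functionals'' must be read after merging the $m$ entries of the multi-local $F$ into a single node --- harmless for the edge count, and equivalent to the paper's bookkeeping via the bound $|P|+|Q|-m$ on the number of cross-contractions.
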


\begin{proof} 
  Starting from $1.$, observe that it suffices to prove that, if one of the $n$ factors $V_g$ in $R_{n, m} (V_g^{\otimes n}, F)$ is not contracted with all the other factors $V_g$ and with $F$, then its contribution to $R_{n, m} (V_g^{\otimes n}, F)$ vanishes. 
  Denoting the non-contracted vertex functional by $\tilde{V}_g$ to distinguish it from the others, Equation \eqref{Eq:retarded-products}, takes the form
\begin{align}\label{Eq: non-contracted vertex}
 R_{n, m} (V_g^{\otimes n}, F_f) = \left( \frac{i}{\hbar} \right)^n
  \sum_{\ell = 0}^n \sum_{(p_1,p_2)\in\mathcal{P}_{\ell,n-\ell}}(- 1)^{\ell} \mathcal{T}_{\ell}^{\hbar
  \Delta_{\tmop{AF}}} (V_{g}^{\otimes p_1}) \star_{\hbar \omega}
  \mathcal{T}_{n - \ell, m}^{\hbar \Delta_F} (V_{g}^{\otimes p_2}
  \otimes F_f), 
\end{align}
where $\mathcal{P}_{\ell,n-\ell}$ is the collection of partitions of the set $\{1,\ldots,n\}$ into subsets of size $\ell$ and $n-\ell$, respectively.
We stress that, if all the factors $V_g$ are identical, then Equation \eqref{Eq: non-contracted vertex} turns into Equation \eqref{Eq:retarded-products} due to the relation
$$
\sum_{(p_1,p_2)\in\mathcal{P}_{\ell,n-\ell}}=\binom{n}{\ell}.
$$
Nonetheless, the expression in Equation \eqref{Eq: non-contracted vertex} is more convenient since it allows to keep track of the fact that one of the factor $V_g$ is different from the other ones as it is the only one not connected with the other ones. 
For a fixed $n\geq 1$ and $0<\ell<n$, the case $n=0$ being trivial, we consider
\begin{align}\label{Eq:neq-combinatoric}
\sum_{(p_1,p_2)\in\mathcal{P}_{\ell,n-\ell}}(- 1)^{\ell} \mathcal{T}_{\ell}^{\hbar
  \Delta_{\tmop{AF}}} (V_{g}^{\otimes p_1}) \star_{\hbar \omega}
  \mathcal{T}_{n - \ell, m}^{\hbar \Delta_F} (V_{g}^{\otimes p_2}
  \otimes F_f).
\end{align}
Notice that there are precisely $\binom{n-1}{\ell-1}$ cases where the factor $\tilde{V}_g$ lies in the argument of $\mathcal{T}_{\ell}^{\hbar\Delta_{\tmop{AF}}}$ and
\begin{align*}
\binom{n}{\ell}-\binom{n-1}{\ell-1}=\binom{n-1}{\ell},
\end{align*}
cases where $\tilde{V}_g$ is acted upon by $\mathcal{T}_{\ell}^{\hbar\Delta_{\tmop{F}}}$. This combinatorial argument is direct consequence of the property of the $n-1$ factors $V_g$ of being indistinguishable. As a result it holds that
\begin{align*}
\eqref{Eq:neq-combinatoric}=(-1)^\ell\bigg[\binom{n-1}{\ell-1}\mathcal{T}_{\ell}^{\hbar
  \Delta_{\tmop{AF}}} (\tilde{V}_g\otimes &V_{g}^{\otimes(\ell-1)}) \star_{\hbar \omega}
  \mathcal{T}_{n - \ell, m}^{\hbar \Delta_F}(V_{g}^{\otimes(n-\ell)}
  \otimes F_f)\\
  &+\binom{n-1}{\ell}\mathcal{T}_{\ell}^{\hbar
  \Delta_{\tmop{AF}}} (V_{g}^{\otimes\ell}) \star_{\hbar \omega}
  \mathcal{T}_{n - \ell, m}^{\hbar \Delta_F} (\tilde{V}_g\otimes V_{g}^{\otimes(n-\ell-1)}
  \otimes F_f)\bigg]\\
=(-1)^\ell\tilde{V}_g\bigg[\binom{n-1}{\ell-1}\mathcal{T}_{\ell-1}^{\hbar
  \Delta_{\tmop{AF}}} (&V_{g}^{\otimes(\ell-1)}) \star_{\hbar \omega}
  \mathcal{T}_{n - \ell, m}^{\hbar \Delta_F}(V_{g}^{\otimes(n-\ell)}
  \otimes F_f)\\
  &+\binom{n-1}{\ell}\mathcal{T}_{\ell}^{\hbar
  \Delta_{\tmop{AF}}} (V_{g}^{\otimes\ell}) \star_{\hbar \omega}
  \mathcal{T}_{n - \ell-1, m}^{\hbar \Delta_F} (V_{g}^{\otimes(n-\ell-1)}
  \otimes F_f)\bigg],
\end{align*}
where we exploited that the (anti)time-ordered products are symmetric and that $\tilde{V}_g$ is not contracted with the other vertices, \textit{i.e.}, it multiplies them via the pointwise product.
Focusing once more on the retarded products, Equation \eqref{Eq: non-contracted vertex} reads
\begin{align}\label{Eq:telescopic-sum}
\nonumber
R_{n, m} (V_g^{\otimes n}, F_f) =\left(\frac{i}{\hbar}\right)^n&\tilde{V}_g
 \bigg[ \mathcal{T}_{n-1,m}^{\hbar \Delta_F}(V_{g}^{\otimes(n-1)}
  \otimes F_f)+(-1)^n\mathcal{T}_{n-1}^{\hbar
  \Delta_{\tmop{AF}}} (V_{g}^{\otimes(n-1)})\star_{\hbar\omega}
  \mathcal{T}_{0,m}^{\hbar \Delta_F}(F_f)\\\nonumber
  &\quad+\sum_{\ell = 1}^{n-1}\Bigl((-1)^\ell\binom{n-1}{\ell-1}\mathcal{T}_{\ell-1}^{\hbar\Delta_{\tmop{AF}}} (V_{g}^{\otimes(\ell-1)}) \star_{\hbar \omega}
  \mathcal{T}_{n - \ell, m}^{\hbar \Delta_F}(V_{g}^{\otimes(n-\ell)}
  \otimes F_f)\\
  &\qquad+(-1)^\ell\binom{n-1}{\ell}\mathcal{T}_{\ell}^{\hbar
  \Delta_{\tmop{AF}}} (V_{g}^{\otimes\ell}) \star_{\hbar \omega}
  \mathcal{T}_{n - \ell-1, m}^{\hbar \Delta_F} (V_{g}^{\otimes(n-\ell-1)}
  \otimes F_f)\Bigr) \bigg],
\end{align}
where, for later convenience, in the first line of Equation \eqref{Eq:telescopic-sum} we have isolated the terms due to $\ell=0$ and $\ell=n$. 
Observe that the sum in Equation \eqref{Eq:telescopic-sum} is telescopic: the contribution from $\ell=j$ in the first term under the sum is
$$
(-1)^j\binom{n-1}{j-1}\mathcal{T}_{j-1}^{\hbar\Delta_{\tmop{AF}}} (V_{g}^{\otimes(j-1)}) \star_{\hbar \omega}
  \mathcal{T}_{n - j, m}^{\hbar \Delta_F}(V_{g}^{\otimes(n-j)}
  \otimes F_f),
$$
while the case $\ell=j-1$ in the second term under the sum reads
$$
(-1)^{j-1}\binom{n-1}{j-1}\mathcal{T}_{j-1}^{\hbar\Delta_{\tmop{AF}}} (V_{g}^{\otimes(j-1)}) \star_{\hbar \omega}
  \mathcal{T}_{n - j, m}^{\hbar \Delta_F}(V_{g}^{\otimes(n-j)}
  \otimes F_f).
$$
A direct inspection entails that these two contributions cancel each other. As a consequence, only the case $\ell=1$ in the first term and the case $\ell=n-1$ in the second term survive yielding
\begin{align*}
R_{n, m} (V_g^{\otimes n}, F_f) &=\left(\frac{i}{\hbar}\right)^n\tilde{V}_g
 \bigg[ \mathcal{T}_{n-1,m}^{\hbar \Delta_F}(V_{g}^{\otimes(n-1)}
  \otimes F_f)+(-1)^n\mathcal{T}_{n-1}^{\hbar
  \Delta_{\tmop{AF}}} (V_{g}^{\otimes(n-1)})\star_{\hbar\omega}
  \mathcal{T}_{0,m}^{\hbar \Delta_F}(F_f)\\
  &\qquad\qquad\qquad-\mathcal{T}_{n - 1, m}^{\hbar \Delta_F}(V_{g}^{\otimes(n-1)}
  \otimes F_f)+(-1)^{n-1}\mathcal{T}_{n-1}^{\hbar
  \Delta_{\tmop{AF}}} (V_{g}^{\otimes(n-1)}) \star_{\hbar \omega}
  \mathcal{T}_{0, m}^{\hbar \Delta_F}(F_f) \bigg]=0\,.
\end{align*}
This concludes the proof of item $1.$

\vskip .2cm

\noindent Focusing instead on item $2.$, the rationale at the heart of the proof consists of writing
  \[ \mathcal{T}_{\ell}^{\hbar \Delta_{\tmop{AF}}} (V_g \otimes \ldots
     \otimes V_g) \star_{\hbar \omega} \mathcal{T}_{n - \ell, m}^{\hbar
     \Delta_F} (V_g \otimes \ldots \otimes V_g \otimes F), \]
in terms of connected time-ordered products, see Definition \ref{Def: connected products}, and to use the ensuing expression in Equation \eqref{Eq:retarded-products} obtaining
\begin{align*}
      R_{n, m} (V_g^{\otimes n}, F)  & =  \left( \frac{i}{\hbar} \right)^n\sum_{I
       \subset \{ 1, \ldots, n\}} (- 1)^{| I |} \sum_{P \subset
       \tmop{Part} \{ I \}} \sum_{Q \subset \tmop{Part} \left\{ I^c \bigsqcup
       \{ 1, \ldots, m \} \right\}}\\
       &   \qquad\qquad \qquad \left( \prod_{p \in P} \mathcal{T}_{| p |}^{\hbar
       \Delta_{\tmop{AF}}} (p)^c \right) \star_{\hbar \omega} \left( \prod_{q
       \in Q} \mathcal{T}_{| q |}^{\hbar \Delta_F} (q)^c \right) .
 \end{align*}
  On account of the dependence of (anti)time-ordered products on $\hbar$ as per Equation~\eqref{Eq:homogeneity-T}, it follows that
  \begin{equation}
    \prod_{p \in P} \mathcal{T}_{| p |}^{\hbar \Delta_{\tmop{AF}}} (p)^c
    =\mathcal{O} (\hbar^{| I | - | P |}), \qquad \qquad \prod_{q \in Q}
    \mathcal{T}_{| q |}^{\hbar \Delta_F} (q)^c =\mathcal{O} (\hbar^{| I^c | +
    m - | Q |}) . \label{Eq:connect-comp}
  \end{equation}
  Yet, resorting to point $1.$ of this proposition, the only non-vanishing
  contributions are those where all the vertices associated with $V_g$ are
  contracted with at least one of $F$. Since $F$ has $m$ vertices,
  among these contributions the one with lowest order in $\hbar$ are those with $m$ disconnected components, each of which encompasses precisely one vertex of $F$. Hence we can infer that, overall, there exist at least $n \geqslant | P | + | Q | -
  m$ contractions. This is due to the fact that, denoting by $(X_j, Y_j, y_j)_{j = 1}^m$ the vertices associated with such connected components, there are at least $| X_j | + | Y_j | - 1$ contractions for any $j=1,\ldots,m$. As a consequence, since $\sum_{j = 1}^m | X_j | = | P |$ and $\sum_{j = 1}^m| Y_j | = | Q |$, we conclude that the minimum number of contractions is $n
  \geqslant | P | + | Q | - m$. 
  
  To conclude, on account of Equation~\eqref{Eq:connect-comp}, the lowest order in $\hbar$ contributing to the sum is 
  \[ (| I | - | P |) + (| I^c | + m - | Q |) + (| P | + | Q | - m) = | I | +
     | I^c | = n, \]
  compensating exactly the overall factor $\left( \frac{i}{\hbar} \right)^n$.
\end{proof}

\begin{remark}\label{Rem: existence classical limit}
	 It is important to observe that Proposition \ref{Prop:classical-limit}, and in particular point $2.$ entails that, at any perturbative order in $\lambda>0$, the classical limit exists.
\end{remark}

On top of Remark \ref{Rem: existence classical limit}, being the classical limit of $R_{\lambda V_g} (F)$ well defined, the remaining task is to prove that it coincides with $r_{\lambda V_g}(F)$ obtained via the classical M\"oller map as per Equation \eqref{Eq:first-classical-obs}. To this end, we recall that the algebra of classical observables is endowed with the Poisson brackets
\[ \{ \varphi (x), \varphi (y) \} = \Delta (x - y) . \]
Classical interacting observables are obtained as formal
power series in the coupling parameter $\lambda$ having coefficients in
$\mathcal{F}^V(\mathbb{R}^2)$, see Equation \eqref{Eq:first-classical-obs}. For the sake of readability we recall the formula here, at the level of integral kernel and adopting the notation $Y =
(y_1, \ldots, y_m)$,
\begin{equation}
  r_{\lambda V_g} (F) (Y) = \sum_{n \geqslant 0} \lambda^n \int_{t_1 \leqslant
  \ldots \leqslant t_n \leqslant t} \mathd \mu_{x_1} \ldots \mathd \mu_{x_n} g
  (x_1) \ldots g (x_n) \{ V  (x_1), \{ V (x_2), \ldots \{ V (x_n), F (Y) \}
  \ldots \} \} . \label{Eq:perturbative-solution}
\end{equation}
As already stated above, we focus on proving that the quantum interacting observables built in terms of retarded products via the Bogoliubov map, see Equation~\eqref{Eq:retarded-products}, converge to their classical counterpart
represented by Equation~\eqref{Eq:perturbative-solution}, see \cite[Sec. 5.3]{Dutsch-Fredenhagen-AQFT-perturb-th-loop-exp} for further details. The main ingredient that we employ is the following lemma.
\begin{lemma}
  \label{Lem:retarded-property-commutator}
  Let $h,f,g\in \mathcal{D}'(\mathbb{R}^2)$ be such that $\tmop{supp} (h)$ is contained in the past of a fixed
  Cauchy surface $\Sigma$, while $\tmop{supp} (f)$ and $\tmop{supp} (g)$ are contained in its future. Then, for any $n\in\mathbb{N}$, if holds that
  \[ R_{n + 1, m} (V_h \otimes V_g^{\otimes n}, F_f) = - \frac{i}{\hbar} [V_h,
     R_{n, m} (V_g^{\otimes n}, F_f)]_{\hbar \omega} . \]
\end{lemma}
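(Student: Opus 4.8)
The plan is to prove the identity by singling out the distinguished vertex $V_h$ inside the explicit expansion of the retarded product in Equation~\eqref{Eq:retarded-products} and then exploiting causal factorization. First I would write
\[ R_{n+1,m}(V_h \otimes V_g^{\otimes n}, F_f) = \left(\frac{i}{\hbar}\right)^{n+1} \sum_{\ell=0}^{n+1} (-1)^\ell \, \mathcal{T}_{\ell}^{\hbar\Delta_{\tmop{AF}}}(\cdots) \star_{\hbar\omega} \mathcal{T}_{n+1-\ell,m}^{\hbar\Delta_F}(\cdots \otimes F_f), \]
where in each summand the $n+1$ interaction vertices are distributed between the anti-Feynman factor and the Feynman factor. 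Since $V_h$ is distinguished from the $n$ indistinguishable copies of $V_g$, every summand falls into exactly one of two classes, according to whether $V_h$ appears in the anti-time-ordered factor or in the time-ordered one; at the combinatorial level this is the Pascal splitting $\binom{n+1}{\ell}=\binom{n}{\ell-1}+\binom{n}{\ell}$.

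The key observation is that the support hypotheses place $V_h$ strictly in the causal past of every other vertex: since $\tmop{supp}(V_g)$ and $\tmop{supp}(F_f)$ lie in the future of $\Sigma$ while $\tmop{supp}(V_h)$ lies in its past, one has $\tmop{supp}(V_g) \cap J^-(\tmop{supp}(V_h)) = \emptyset$ and likewise for $F_f$. Hence the causal factorization property of Equation~\eqref{Eq:factoriz-time-ordering} applies, together with its anti-time-ordered counterpart. Concretely, whenever $V_h$ sits in the Feynman factor I would use $\mathcal{T}^{\hbar\Delta_F}(\text{rest} \otimes V_h) = \mathcal{T}^{\hbar\Delta_F}(\text{rest}) \star_{\hbar\omega} V_h$, with the later vertices to the left, while whenever $V_h$ sits in the anti-Feynman factor the reversed factorization gives $\mathcal{T}^{\hbar\Delta_{\tmop{AF}}}(V_h \otimes \text{rest}) = V_h \star_{\hbar\omega} \mathcal{T}^{\hbar\Delta_{\tmop{AF}}}(\text{rest})$, with the earlier vertex to the left. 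The anti-causal version follows by taking the $\star_{\hbar\omega}$-inverse of the factorized $S$-matrix, consistently with the anti-time-ordered expansion of $S(\lambda V)^{\star_{\hbar\omega}-1}$ recorded in Remark~\ref{Rem:inverse-S-matrix-expansion}.

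With these factorizations in hand I would reindex the two families of summands. Using the indistinguishability of the copies of $V_g$, the class with $V_h$ in the anti-Feynman factor reorganizes, after the shift $j=\ell-1$ and after recognizing the defining sum of $R_{n,m}$, into $-\frac{i}{\hbar}\, V_h \star_{\hbar\omega} R_{n,m}(V_g^{\otimes n}, F_f)$, the minus sign arising from $(-1)^\ell=-(-1)^j$; symmetrically, the class with $V_h$ in the Feynman factor reorganizes into $+\frac{i}{\hbar}\, R_{n,m}(V_g^{\otimes n}, F_f) \star_{\hbar\omega} V_h$. Adding the two contributions yields precisely
\[ R_{n+1,m}(V_h \otimes V_g^{\otimes n}, F_f) = -\frac{i}{\hbar}\Bigl(V_h \star_{\hbar\omega} R_{n,m} - R_{n,m} \star_{\hbar\omega} V_h\Bigr) = -\frac{i}{\hbar}\,[V_h, R_{n,m}(V_g^{\otimes n}, F_f)]_{\hbar\omega}. \]

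The main obstacle I anticipate is twofold. First, one must pin down the anti-causal factorization of the anti-time-ordered products with the correct ordering, namely that the earlier functional ends up to the \emph{left} of the $\star_{\hbar\omega}$-product, opposite to the Feynman case; this is where the distinction between $\Delta_F$ and $\Delta_{\tmop{AF}}$ is essential and must be handled via the inverse of the causally factorized $S$-matrix. Second, the combinatorial bookkeeping must be kept consistent, verifying that the binomial coefficients $\binom{n}{\ell-1}$ and $\binom{n}{\ell}$ generated by placing $V_h$ in the two factors collapse, after the index shift, onto the single binomial $\binom{n}{j}$ in the definition of $R_{n,m}$, together with the correct signs. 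These are the only genuinely delicate points; the remainder is a direct rearrangement.
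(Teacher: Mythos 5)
Your proposal is correct and follows essentially the same route as the paper's proof: the Pascal splitting $\binom{n+1}{\ell}=\binom{n}{\ell-1}+\binom{n}{\ell}$ that isolates the distinguished vertex $V_h$, the causal factorization of Equation~\eqref{Eq:factoriz-time-ordering} together with its reversed anti-time-ordered counterpart (pulling $V_h$ to the right of the Feynman factor and to the left of the anti-Feynman one), and the reindexing $j=\ell-1$ that reassembles the two classes into $\frac{i}{\hbar}\,R_{n,m}\star_{\hbar\omega}V_h-\frac{i}{\hbar}\,V_h\star_{\hbar\omega}R_{n,m}$. The signs, prefactors and orderings all match the paper's computation, so there is nothing to amend.
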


\begin{proof}
  A key r\^ole in this proof is played by the combinatorial argument in the proof of item $1.$ of Proposition \ref{Prop:classical-limit}, the factor $\tilde{V}_g$ being replaced by $V_h$.
  \textit{Mutatis mutandis}, it holds that
  \begin{align}\label{Eq:new-combinatoric-star}
  \nonumber
R_{n+1, m} (V_h\otimes V_g^{\otimes n}, F_f)&=\left(\frac{i}{\hbar}\right)^{n+1}\bigg[\mathcal{T}_{n+1,m}^{\hbar\Delta_F}(V_h\otimes V_{g}^{\otimes n}\otimes F_f)+(-1)^{n+1}\mathcal{T}_{n+1}^{\hbar
  \Delta_{\tmop{AF}}} (V_h\otimes V_{g}^{\otimes n})\star_{\hbar\omega}
  \mathcal{T}_{0,m}^{\hbar\Delta_F}(F_f)\\\nonumber
  &\quad+\sum_{\ell = 1}^{n}(-1)^\ell\binom{n}{\ell-1}\mathcal{T}_{\ell}^{\hbar\Delta_{\tmop{AF}}} (V_h\otimes V_{g}^{\otimes(\ell-1)}) \star_{\hbar \omega}
  \mathcal{T}_{n+1-\ell, m}^{\hbar \Delta_F}(V_{g}^{\otimes(n+1-\ell)}
  \otimes F_f)\\
  &\qquad+(-1)^\ell\binom{n}{\ell}\mathcal{T}_{\ell}^{\hbar
  \Delta_{\tmop{AF}}} (V_{g}^{\otimes\ell}) \star_{\hbar \omega}
  \mathcal{T}_{n+1-\ell, m}^{\hbar \Delta_F} (V_h\otimes V_{g}^{\otimes(n-\ell-1)}\otimes F_f) \bigg].
  \end{align}
Resorting now to Equation \eqref{Eq:factoriz-time-ordering} and to its anti-time ordered counterpart, we can reformulate Equation \eqref{Eq:new-combinatoric-star} as
 \begin{gather*}
R_{n+1, m} (V_h\otimes V_g^{\otimes n}, F_f)=\\
=\left(\frac{i}{\hbar}\right)^{n+1}\bigg[\mathcal{T}_{n,m}^{\hbar\Delta_F}(V_{g}^{\otimes n}\otimes F_f)\star_{\hbar\omega}V_h+(-1)^{n+1}V_h\star_{\hbar\omega}\mathcal{T}_{n}^{\hbar\Delta_{\tmop{AF}}} (V_{g}^{\otimes n})\star_{\hbar\omega} \mathcal{T}_{0,m}^{\hbar\Delta_F}(F_f)\\
  \quad+\sum_{\ell = 1}^{n}(-1)^\ell\binom{n}{\ell-1}V_h\star_{\hbar\omega}\mathcal{T}_{\ell}^{\hbar\Delta_{\tmop{AF}}} (V_{g}^{\otimes(\ell-1)}) \star_{\hbar \omega}
  \mathcal{T}_{n+1-\ell, m}^{\hbar \Delta_F}(V_{g}^{\otimes(n+1-\ell)}
  \otimes F_f)\\
  \qquad+(-1)^\ell\binom{n}{\ell}\mathcal{T}_{\ell}^{\hbar
  \Delta_{\tmop{AF}}} (V_{g}^{\otimes\ell}) \star_{\hbar \omega}
  \mathcal{T}_{n+1-\ell, m}^{\hbar \Delta_F} (V_{g}^{\otimes(n-\ell-1)}\otimes F_f)\star_{\hbar\omega}V_h \bigg]=\\
  =\frac{i}{\hbar}\bigg[R_{n, m} (V_g^{\otimes n}, F_f)\bigg]\star_{\hbar\omega}V_h
  -\frac{i}{\hbar}V_h\star_{\hbar\omega}\bigg[R_{n, m} (V_g^{\otimes n}, F_f)\bigg]\\
  =- \frac{i}{\hbar} [V_h, R_{n, m} (V_g^{\otimes n}, F_f)]_{\hbar\omega},
  \end{gather*}
which entails the sought statement.
\end{proof}

Let us consider a family of points $(x_1, \ldots, x_n) \in \mathbb{R}^{2n}$
such that $x_i \neq x_j$ if $i \neq j$. At the level of integral kernels
and, out of an iterative application of Lemma~\ref{Lem:retarded-property-commutator}, {\tmem{cf.}} {\cite[Sec. 5.3]{Dutsch-Fredenhagen-AQFT-perturb-th-loop-exp}, denoting by $X' =
(x'_1, \ldots, x'_m)$ and by $t' \assign \min \{ t'_1, \ldots, t_m' \}$ we obtain
\begin{align}\label{Eq:commutator-Bogoliubov}
\nonumber
    R_{n, m} (V (x_1) \ldots V (x_n), F (X')) & =  \left( - \frac{i}{\hbar}
    \right)^n \sum_{\pi \in \mathcal{S}_n} \vartheta (t' - t_{\pi (n)})
    \vartheta (t_{\pi (n)} - t_{\pi (n - 1)}) \ldots \vartheta (t_{\pi (2)} -
    t_{\pi (1)}) \times\\
    &   \qquad \times [V (x_{\pi (1)}), [V (x_{\pi (2)}), \ldots [V (x_{\pi
    (n)}), F (X')]_{\hbar \omega} \ldots]_{\hbar \omega}]_{\hbar \omega},
\end{align}
where $\mathcal{S}_n$ is the group of permutations of $n$-indices.
Equation \eqref{Eq:commutator-Bogoliubov}, together with the convergence result for the series defining the quantum interacting observables, see Remark \ref{Rem: existence classical limit}, entails that \cite{Dutsch-Fredenhagen-AQFT-perturb-th-loop-exp}
\begin{equation}
  \lim_{\hbar \rightarrow 0^+} R_{\lambda V_g} (F) = r_{\lambda V_g} (F),
  \label{Eq:quantum-classical-limit}
\end{equation}
since the quantisation procedure is designed in such  a way that
\[ - \frac{i}{\hbar} [\cdot, \cdot]_{\hbar \omega} \rightarrow \{ \cdot, \cdot
   \}, \qquad \qquad \tmop{as} \quad \hbar \rightarrow 0^+ . \]

\begin{remark}
  This argument confirms the result of Proposition~\ref{Prop:classical-limit}
  since it shows that the above is a Taylor series expansion in $\hbar$
  and not a Laurent one as for the $S$-matrix. We stress that,
  for a finite value of $\hbar$, Equation~\eqref{Eq:commutator-Bogoliubov}
  involves a number of contributions which is strictly larger than that  of those involved in its classical counterpart, Equation~\eqref{Eq:perturbative-solution}. This is due to the fact that the quantum commutators $[\cdot, \cdot]_{\hbar\omega}$ are constructed with respect to an exponential product, while this is not the case for the classical Poisson bracket $\{ \cdot \comma \cdot\}$. Using a language typical of quantum field theory, we could rephrase the statement as a consequence of the fact that, in an interacting quantum theory, at a perturbative level, one has to take into account loop diagrams, which do not have a counterpart in the classical scenario.
\end{remark}
As a last step the existence of the classical limit must be translated to the study of the underlying stochastic sine-Gordon model.
\begin{theorem}\label{Thm: hbar limit}
	Let $F = \Phi^{\otimes m}\in\mathcal{F}_{\mu c}(\mathbb{R}^2)$, $m \in \mathbb{N}$. Denoting by $\Gamma_Q$ the deformation map as per Equation \eqref{Eq:GammaQ}, by $R_{\lambda V_g} (F)$ the interacting version of the observable $F\in\mathcal{F}_{\mu c}(\mathbb{R}^2)$, see Equation \eqref{Eq:Bogoliubov map} and by $r_{\lambda V_g} (F)$ the corresponding classical interacting observable obtained via the classical M\"oller map introduced in Equation \eqref{Eq:first-classical-obs}, it holds that
  \[ \lim_{\hbar \rightarrow 0^+} \mathGamma_Q [R_{\lambda V_g} (F)] =
     \mathGamma_Q [r_{\lambda V_g} (F)].\]
\end{theorem}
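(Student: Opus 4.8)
The plan is to exploit the fact that the deformation map $\Gamma_Q$ of Equation~\eqref{Eq:GammaQ} is built solely out of the real, symmetric and $\hbar$-independent covariance $Q$, so that it neither carries powers of $\hbar$ nor interferes with the cancellations responsible for the existence of the classical limit. I would first reduce the statement to an order-by-order interchange between $\Gamma_Q$ and $\lim_{\hbar\to 0^+}$, and then lift this to the full power series in $\lambda$ by a dominated-convergence argument. Since $F=\Phi^{\otimes m}$ is precisely the correlation-function observable treated in Section~\ref{Sec:convergence-corr-functions}, the series $\Gamma_Q[R_{\lambda V_g}(F)]=\sum_{n\geq 0}\frac{\lambda^n}{n!}\Gamma_Q[R_{n,m}(V_g^{\otimes n},F)]$ is already known to converge absolutely and uniformly in $\varphi$, so the only genuine content is the exchange of the limit with the two operations $\Gamma_Q$ and $\sum_n$.

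For the order-by-order step I would use the intertwining property of $\Gamma_Q$. As in Remark~\ref{Rem:deformation-map} and Equation~\eqref{Eq: algebra homomorphism}, since $\langle Q,\delta^2\rangle$ and $\langle\hbar\omega,\delta^2\rangle$ commute one has $\Gamma_Q\Gamma_{\hbar\omega}=\Gamma_{Q+\hbar\omega}$ and hence $\Gamma_Q(A\star_{\hbar\omega}B)=\Gamma_Q(A)\star_{Q+\hbar\omega}\Gamma_Q(B)$; applying this to the commutator representation~\eqref{Eq:commutator-Bogoliubov} of $R_{n,m}$, together with $\Gamma_Q(V_g)=V_{g_Q}$ from Equation~\eqref{Eq:action-of-gamma-Q-on-interaction}, expresses $\Gamma_Q[R_{n,m}(V_g^{\otimes n},F)]$ as a sum over permutations of nested commutators $[\,\cdot\,,\,\cdot\,]_{Q+\hbar\omega}$ of the functionals $V_{g_Q}$ and $\Gamma_Q F$. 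The crucial observation is that $Q$ is symmetric (cf. Equation~\eqref{Eq:first-app-Q} and the discussion following Theorem~\ref{Thm:conditioning}), so it drops out of every commutator, whose surviving part is governed only by the antisymmetric piece $\tfrac{i\hbar}{2}\Delta$ of $Q+\hbar\omega$. Consequently the $\hbar$-counting is identical to the $Q$-free case: each commutator is $\mathcal{O}(\hbar)$, the overall prefactor $(i/\hbar)^n$ is exactly compensated (this is the content of Proposition~\ref{Prop:classical-limit}, whose combinatorial cancellation is purely algebraic and therefore transfers), and $-\tfrac{i}{\hbar}[\,\cdot\,,\,\cdot\,]_{Q+\hbar\omega}\to\{\,\cdot\,,\,\cdot\,\}$ as $\hbar\to 0^+$. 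Since the same intertwining relation gives, in the limit, that $\Gamma_Q$ is a Poisson homomorphism, $\Gamma_Q\{A,B\}=\{\Gamma_Q A,\Gamma_Q B\}$, I conclude $\lim_{\hbar\to0^+}\Gamma_Q[R_{n,m}(V_g^{\otimes n},F)]=\Gamma_Q[r_{n,m}(V_g^{\otimes n},F)]$, the $n$-th coefficient of $\Gamma_Q[r_{\lambda V_g}(F)]$.

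The last step, exchanging $\lim_{\hbar\to0^+}$ with the sum over $n$, is the main obstacle: the fixed-$\hbar$ estimates established so far (Theorem~\ref{Thm:estimates-Q-S-matrix-I} and Theorems~\ref{Thm:bound-Jn},~\ref{Thm::bound-Mn}) carry a factor $(\lambda/\hbar)^n$ which blows up as $\hbar\to0^+$ and is therefore useless for a uniform-in-$\hbar$ bound. The resolution is to estimate instead the retarded organization~\eqref{Eq:retarded-products} directly, where the spurious $\hbar^{-n}$ is cancelled by the $\hbar^{n}$ coming from the $n$ forced contractions of Proposition~\ref{Prop:classical-limit}, so that the effective expansion parameter becomes $\lambda$ rather than $\lambda/\hbar$. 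Concretely, I would re-run the Cauchy-determinant bound of Lemma~\ref{Lem:Cauchy-determinant} on $\Gamma_Q[R_{n,m}]$ after exhibiting these cancellations: since $\alpha=\tfrac{a^2\hbar}{4\pi}\to0$, the finite-ultraviolet condition $\alpha<1$ holds uniformly on $\hbar\in(0,\hbar_0]$ and the factor $(2\mu)^{n\alpha}$ is dominated by a geometric factor independent of $\hbar$, while the remaining factors are $\hbar$-independent and, exactly as in Section~\ref{Sec:convergence-corr-functions}, are tamed by $(n!)^{1-1/p}$ for any fixed $p>1$. This furnishes a bound $\big\|\tfrac{\lambda^n}{n!}\Gamma_Q[R_{n,m}(V_g^{\otimes n},F)]\big\|\leq b_n$ with $\sum_n b_n<\infty$ and $b_n$ independent of $\hbar\in(0,\hbar_0]$, after which a dominated-convergence (equivalently Moore--Osgood) argument interchanges the limit with the series. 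Combined with the order-by-order identity of the previous paragraph, this yields $\lim_{\hbar\to0^+}\Gamma_Q[R_{\lambda V_g}(F)]=\Gamma_Q[r_{\lambda V_g}(F)]$. I expect the genuinely delicate point to be making the $\hbar^{-n}/\hbar^{n}$ cancellation manifest at the level of absolute values, since the compensations of Proposition~\ref{Prop:classical-limit} occur before one passes to moduli and must be organized so as to survive the triangle inequality.
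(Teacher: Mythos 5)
Your proposal is correct and follows essentially the same route as the paper: the order-by-order limit $\lim_{\hbar\to 0^+}\Gamma_Q[R_{n,m}(V_g^{\otimes n},F)]=\Gamma_Q[r_{n,m}(V_g^{\otimes n},F)]$ rests, as in the paper, on Proposition \ref{Prop:classical-limit} together with the commutator representation of Equation \eqref{Eq:commutator-Bogoliubov} (your intertwining $\Gamma_Q(A\star_{\hbar\omega}B)=\Gamma_Q(A)\star_{Q+\hbar\omega}\Gamma_Q(B)$ and $\Gamma_Q(V_g)=V_{g_Q}$ is exactly the manipulation carried out in Section \ref{Sec:interplay}), and the theorem is then reduced to interchanging $\lim_{\hbar\to 0^+}$ with the series in $\lambda$. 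The step you flag as genuinely delicate---$\hbar$-uniform, summable bounds that survive the triangle inequality---is precisely the step the paper settles by assertion rather than proof: writing $R_{n,m}=r_{n,m}+\mathfrak{O}_{n,m}$, it states, citing Proposition \ref{Prop:classical-limit}, that $\sum_n\Gamma_Q[\hbar^{-1}\mathfrak{O}_{n,m}]$ is absolutely convergent and bounded as $\hbar\to 0^+$, which is just the remainder form of your dominated-convergence hypothesis; so your plan of re-running Lemma \ref{Lem:Cauchy-determinant} uniformly in $\hbar\in(0,\hbar_0]$ is, if anything, more explicit about what remains to be checked than the published argument. One minor imprecision that does not affect your proof: $\Gamma_Q$ intertwines the classical Poisson bracket with the bracket built on the $\cdot_Q$ product (as your own commutator computation in fact produces), not with the undeformed bracket; the naive identity $\Gamma_Q\{A,B\}=\{\Gamma_Q A,\Gamma_Q B\}$ already fails for quadratic functionals by a constant term.
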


\begin{proof}
  This result is a direct consequence of Proposition \ref{Prop:classical-limit}. As a matter of fact we observe that
  \[ \lim_{\hbar \rightarrow 0^+} \mathGamma_Q [R_{\lambda V_g} (F)] =
     \lim_{\hbar \rightarrow 0^+} \lim_{k \rightarrow \infty} \sum_{n = 0}^k
     \mathGamma_Q [R_{\lambda V_g} (F)]_n = \lim_{\hbar \rightarrow 0^+}
     \lim_{k \rightarrow \infty} \sum_{n = 0}^k \mathGamma_Q [R_{n, m} (F)],
  \]
  and we recall that $R_{n, m} (F)$ can be decomposed in the sum of a term of
  order zero in $\hbar$, which coincides with $r_{n, m} (F)$, and of additional terms all of order
  $\mathcal{O} (\hbar)$, denoted by $\mathfrak{O}_n$:
  \[ R_{n, m} (F) = r_{n, m} (F) +\mathfrak{O}_{n, m} (F). \]
  Thus, 
  \[ \lim_{\hbar \rightarrow 0^+} \lim_{k \rightarrow \infty} \sum_{n = 0}^k
     \mathGamma_Q [r_{n, m} (F)] = \lim_{k \rightarrow \infty} \sum_{n = 0}^k
     \mathGamma_Q [r_{n, m} (F)] = \mathGamma_Q [r_{\lambda V_g} (F)], \]
  where, in the first step, we exploited that $\lim_{k \rightarrow \infty}
  \sum_{n = 0}^k \mathGamma_Q [r_{n, m} (F)]$ does not depend on $\hbar$ while
  the second one is a direct consequence of Equation~\eqref{Eq:quantum-classical-limit} together with the convergence results proven in Section~\ref{Sec:interplay}. Then, the following chain of identities is satisfied
\begin{align*}
       \lim_{\hbar \rightarrow 0^+} \mathGamma_Q [R_{\lambda V_g} (F)] & = 
       \mathGamma_Q [r_{\lambda V_g} (F)] + \lim_{\hbar \rightarrow 0^+}
       \lim_{k \rightarrow \infty} \sum_{n = 0}^k \mathGamma_Q
       [\mathfrak{O}_n]\\
       & =  \mathGamma_Q [r_{\lambda V_g} (F)] + \lim_{\hbar \rightarrow
       0^+} \lim_{k \rightarrow \infty} \sum_{n = 0}^k \hbar \mathGamma_Q
       [\hbar^{- 1} \mathfrak{O}_n]\\
       & =  \mathGamma_Q [r_{\lambda V_g} (F)] + \lim_{\hbar \rightarrow
       0^+} \hbar \lim_{k \rightarrow \infty} \sum_{n = 0}^k \mathGamma_Q
       [\hbar^{- 1} \mathfrak{O}_n] .
\end{align*}
  We observe that, on account of Proposition \ref{Prop:classical-limit}, $\lim_{k \rightarrow
  \infty} \sum_{n = 0}^k \mathGamma_Q [\hbar^{- 1} \mathfrak{O}_n]$ is
  absolutely convergent and bounded for $\hbar \rightarrow 0^+$ and therefore $\lim_{\hbar \rightarrow 0^+} \hbar
  \lim_{k \rightarrow \infty} \sum_{n = 0}^k \mathGamma_Q [\hbar^{- 1}
  \mathfrak{O}_n] = 0$.
  As a consequence
  \[ \lim_{\hbar \rightarrow 0^+} \mathGamma_Q [R_{\lambda V_g} (F)] =
     \mathGamma_Q [r_{\lambda V_g} (F)] , \]
  which is the sought after identity.
\end{proof}

\section*{Acknowledgments}
The authors are grateful to Nicola Pinamonti for the enlightening discussions about the topic. We are grateful to Beatrice Costeri for the comments on this manuscript. The work of A.B is supported by a PhD fellowship of the University of Pavia and partly by the GNFM-Indam Progetto Giovani {\em Feynman propagator for Dirac fields: a microlocal analytic approach}, CUP E53C22001930001, whose support is gratefully acknowledged. A.B. is also grateful to the Sorbonne Universit\'e for the kind hospitality during the realization of part of this project. The research of P.R. is supported by a postdoc fellowship of the University of Bonn, which is gratefully acknowledged. 

\appendix
\section{Comparison with a perturbative approach}

In this short appendix, we compare the non-perturbative results concerning the stochastic sine-Gordon model constructed in this work with the one that  would be obtained by means of a purely perturbative approach to the solution theory of the stochastic Sine-Gordon equation \eqref{Eq: sine-Gordon equation}. We observe that Proposition \ref{Prop:classical-limit} provides a rationale for computing at a perturbative level the coefficients $\mathGamma_Q [r_{n,m} (F)] $ of the expectation value of any observable $F$. Yet, for the sake of simplicity, we shall discuss here only the case of an interacting field $\mathGamma_Q [r_{\lambda V_g} (\Phi_f)]$, \textit{i.e.}, a solution to Equation \eqref{Eq: sine-Gordon equation}. Our goal is to confirm that the results in this paper are in agreement with a perturbative approach at any order in the coupling parameter. Hence we first compute the lower order contributions to the solution of the functional version of Equation \eqref{Eq: sine-Gordon equation}, {\it i.e.},
\begin{equation}
	\Psi=\lambda a\Delta^R\circledast g\sin(a\Psi)+\Phi_f=-\lambda\Delta^R\circledast  V^{(1)}_{g}(\Psi)+\Phi_f,\qquad\Psi\llbracket\lambda\rrbracket=\sum_{n=0}^\infty\lambda^n\Psi_n,\qquad \Psi_n\in\mathcal{F}_{loc},
\end{equation}
where $\circledast$ denotes the stochastic convolution, see {\it e.g.} \cite{DDRZ21}, $\Phi_f$ is the functional introduced in Example \ref{Ex: basic functionals} with $f\in\mathcal{D}(\mathbb{R}^2)$, while $V^{(1)}_{g}$ is the first order functional derivative of the vertex functional introduced in Equation \eqref{Eq: vertex functional}. The coefficients of the perturbative expansion $\Psi\llbracket\lambda\rrbracket$ can be computed via the formula 
\begin{equation}\label{Eq: derivatives perturbatives coefficients}
	\Psi_n:=\frac{1}{n!}\frac{d^n}{d\lambda^n}\Psi\llbracket\lambda\rrbracket\Big\vert_{\lambda=0}.
\end{equation}
Focusing the attention on to the lower order contributions, Equation \eqref{Eq: derivatives perturbatives coefficients}  yields
\begin{align}\label{Eq: lower order terms}	
	&\Psi_0=\Phi,\nonumber\\
	&\Psi_1=\Delta^R\circledast V^{(1)}_{g}\\
	&\Psi_2=\Delta^R\circledast(V^{(2)}_g\Delta^R\circledast V^{(1)}_g)\nonumber,
\end{align}
where $V^{(1)}_g$ is as per Equation \eqref{Eq: derivative vertex functional}, while the second order functional derivative $V^{(2)}_g$ can be defined accordingly. 
We claim that these coefficients coincide with those obtained via the classical M\"oller map as per Equation \eqref{Eq: r-map}. To wit, we calculate the quantum counterpart of the interacting field at a given order and, subsequently, we take the classical limit which has been proven to coincide with $r_{\lambda V_g}(\psi)$, see \cite{Dutsch-Fredenhagen-AQFT-perturb-th-loop-exp}. Specializing Equations \eqref{Eq:expansion-in-ret-prod} and \eqref{Eq:retarded-products} to $F=\Phi_f$, it holds that
\begin{equation*}
  R_{\lambda V_g} (\Phi_f) = \sum_{n \geqslant 0} \frac{\lambda^n}{n!} R_{n, m}
  (V_g^{\otimes n}, \Phi_f), 
\end{equation*}
where 
\begin{equation*}
  R_{n, m} (V_g^{\otimes n}, \Phi_f) = \left( \frac{i}{\hbar} \right)^n
  \sum_{\ell = 0}^n \binom{n}{\ell} (- 1)^{\ell} \mathcal{T}_{\ell}^{\hbar
  \Delta_{\tmop{AF}}} (V_g \otimes \ldots \otimes V_g) \star_{\hbar \omega}
  \mathcal{T}_{n - \ell, m}^{\hbar \Delta_F} (V_g \otimes \ldots \otimes V_g
  \otimes \Phi_f), 
\end{equation*}
is the $n$-th perturbative contribution to the interacting field, up to a factor $n!$ at the denominator. Proposition \ref{Prop:classical-limit} highlights how the only non-vanishing contributions are those for which all the $n$ terms $V_g$ are contracted at least once with the linear field $\Phi_f$. This implies that $ R_{n, m} (V_g^{\otimes n}, F)=\mathcal{O}(\hbar^0)$. The only contributions surviving the limiting procedure are those with \emph{exactly} $n$-contractions, the terms with more being multiplied by a positive power of $\hbar$, in agreement with the absence of loops in the classical framework. The trivial case $n=0$ yields $R_{0,1}(\Phi_f)=\Phi_f$. Considering the case $n=1$, it holds that
\begin{align*}
R_{1, 1} (V_g, \Phi_f) &= \frac{i}{\hbar}[ \mathcal{T}_{1, 1}^{\hbar \Delta_F} (V_g \otimes \Phi_f)-
  \mathcal{T}_{1}^{\hbar
  \Delta_{\tmop{AF}}} (V_g) \star_{\hbar \omega}
  \mathcal{T}_{0, 1}^{\hbar \Delta_F} (\Phi_f)]\\
  &=\frac{i}{\hbar}[V_g \star_{\hbar \Delta_F} \Phi_f-
V_g \star_{\hbar \omega} \Phi_f].
\end{align*}
Since only the contributions with exactly one contraction contribute to the classical case, it descends that
\begin{align*}
r_{1, 1} (V_g, \Phi_f) =\frac{i}{\hbar} \langle V^{(1)}_g, \hbar(\Delta_F-\omega) f\rangle=-\langle V^{(1)}_g,\Delta^A f\rangle,
\end{align*}
where, in the last identity, we used that $\Delta_F-\omega=i\Delta^A$, see Equation \eqref{Eq: feynman propagator}.
We conclude that 
\begin{align}\label{Eq: r_1}
r_{1, 1} (V_g, \Phi_f) =-\langle\Delta^RV^{(1)}_g,f\rangle.
\end{align}
To better grasp the structure of a perturbative solution of Equation \eqref{Eq: sine-Gordon equation}, the analysis of the second order is way more enlightening. At the quantum level it reads
\begin{align*}
  R_{2, 1} (V_g\otimes V_g, \Phi_f) =-&\frac{1}{\hbar^2}
 [\mathcal{T}_{2, 1}^{\hbar \Delta_F} (V_g \otimes V_g
  \otimes \Phi_f)-
  2\mathcal{T}_{1}^{\hbar
  \Delta_{\tmop{AF}}} (V_g) \star_{\hbar \omega}
  \mathcal{T}_{1, 1}^{\hbar \Delta_F} (V_g \otimes \Phi_f)\\
  &+\mathcal{T}_{2}^{\hbar
  \Delta_{\tmop{AF}}} (V_g\otimes V_g) \star_{\hbar \omega}
  \mathcal{T}_{0,1}^{\hbar \Delta_F} (\Phi_f)]\\
  =-&\frac{1}{\hbar^2}[V_g \star_{\hbar\Delta_F} V_g\star_{\hbar\Delta_F}  \Phi_f-2V_g\star_{\hbar\omega}(V_G\star_{\hbar\Delta_F}\Phi_f )+V_g \star_{\hbar\Delta_{AF}}(V_g\star_{\hbar\omega}\Phi_f)].
\end{align*}
As proven, the only non-vanishing contributions to the classical limit are those where there are exactly two contractions and where all the interacting vertices are contracted with $\Phi_f$. To analyze this scenario, we can resort to a graphical representation. The rationale is to encode the information carried by the underlying kernels within graphs subordinated to the following rules:
\begin{itemize}
	\item black dots represent occurrences of the vertex functionals $V_g$, while purple dots indicate the linear field $\Phi_f$, 
	\item the occurrence of a kernel is denoted by a segment joining the contracted functionals. To distinguish all possible bi-distributions involved in the computation, we associate black edges to $\hbar\Delta_F$, green edges to $\hbar\omega$ and red ones to $\hbar\Delta_{AF}$.
\end{itemize}
In view of these rules, taking into account the relevant multiplicities and the Leibniz rule, we obtain
\begin{equation}\label{Eq: graphs}
\frac{1}{2!}r_{2, 1} (V_g\otimes V_g, \Phi_f) =-
\begin{tikzpicture}[thick,scale=2]
\filldraw (.25,0.35)circle (1pt);
\filldraw (-.25,0.35)circle (1pt);
\filldraw[purple] (0,0) circle (1pt);
\draw (-.25,0.35) -- (.25,0.35);
\draw (0,0) -- (.25,0.35);
\end{tikzpicture}\,+
\begin{tikzpicture}[thick,scale=2]
\filldraw (.25,0.35)circle (1pt);
\filldraw (-.25,0.35)circle (1pt);
\filldraw[purple] (0,0)circle (1pt);
\draw[green] (-.25,0.35) -- (.25,0.35);
\draw (0,0) -- (.25,0.35);
\end{tikzpicture}\,+
\begin{tikzpicture}[thick,scale=2]
\filldraw (.25,0.35)circle (1pt);
\filldraw (-.25,0.35)circle (1pt);
\filldraw[purple] (0,0)circle (1pt);
\draw[green] (-.25,0.35) -- (0,0);
\draw[green] (-0.25,0.35) -- (.25,0.35);
\end{tikzpicture}\,-
\begin{tikzpicture}[thick,scale=2]
\filldraw (.25,0.35)circle (1pt);
\filldraw (-.25,0.35)circle (1pt);
\filldraw[purple] (0,0)circle (1pt);
\draw[red] (-.25,0.35) -- (.25,0.35);
\draw[green] (0,0) -- (.25,0.35);
\end{tikzpicture}\,.
\end{equation}
At the level of integral kernels the expression for the classical solution at second order in perturbation theory reads
\begin{gather}\label{Eq: r_2}
\frac{1}{2!}r_{2,1}(V_g\otimes V_g,\Phi_f)=\notag\\
=-\langle V^{(1)}_g,\Delta_F\langle V^{(2)}_g,\Delta_F f\rangle\rangle+\langle V^{(1)}_g,\omega\langle V^{(2)}_g,\Delta_F f\rangle\rangle+\langle\langle V^{(2)}_g\rangle,\omega f,\omega V^{(1)}_g\rangle\nonumber\\
-\langle V^{(1)}_g,\Delta_{AF}\langle V^{(2)}_g,\omega f\rangle\rangle\nonumber\\
=-\langle V^{(1)}_g,(\Delta_F-\omega)\langle V^{(2)}_g,\Delta_F f\rangle\rangle+\langle\langle V^{(2)}_g\rangle,\omega f,\omega V^{(1)}_g\rangle-\langle V^{(1)}_g,\Delta_{AF}\langle V^{(2)}_g,\omega f\rangle\rangle\nonumber\\
=-\langle V^{(1)}_g,i\Delta^A\langle V^{(2)}_g,\Delta_F f\rangle\rangle+\langle V^{(1)}_g,(\Delta_{AF}-i\Delta^A)\langle V^{(2)}_g,\omega f\rangle\rangle-\langle V^{(1)}_g,\Delta_{AF}\langle V^{(2)}_g,\omega f\rangle\rangle\nonumber\\
=-\langle V^{(1)}_g,i\Delta^A\langle V^{(2)}_g,\Delta_F f\rangle\rangle+\langle V^{(1)}_g,i\Delta^A\langle V^{(2)}_g,\omega f\rangle\rangle\nonumber\\
=\langle V^{(1)}_g,\Delta^A\langle V^{(2)}_g,\Delta^A f\rangle\rangle,
\end{gather}
where, in the third equality, we used once more the identity $\Delta_F-\omega=i\Delta^A$ as well as 
\begin{align*}
\langle\langle V^{(2)}_g,\omega f\rangle,\omega V^{(1)}_g\rangle&=\langle\omega V^{(1)}_g,\langle V^{(2)}_g,\omega f\rangle\rangle=\langle(\Delta_{AF}+i\Delta^R) V^{(1)}_g,\langle V^{(2)}_g,\omega f\rangle\rangle
\\
&=\langle V^{(1)}_g,(\Delta_{AF}+i\Delta^A)\langle V^{(2)}_g,\omega f\rangle\rangle.
\end{align*}
A direct inspection shows that, up to tje second order in $\lambda$, the perturbative coefficients calculated both via the perturbative expansion as per Equation \eqref{Eq: lower order terms} and via a classical limiting procedure, as per Equations \eqref{Eq: r_1} and \eqref{Eq: r_2}, coincide.

Eventually, an analogous comparison can be carried over at the level of expectation values, the only difference being the action of the map $\Gamma_Q$ implementing additional contractions. Even if from the standpoint of a perturbative analysis of the SPDE this boils down to an algorithmic procedure, its field theoretical counterpart relies heavily on Theorem \ref{Thm: hbar limit}.
\begin{remark}
	This graphical representation of the contributions to the classical version of the interacting field has the net advantage of allowing for an immediate extension to the computation of expectation values. As explained in Section \ref{Sec:microlocal-approach-to-spdes}, such information is encoded at the algebraic level by contracting pairs of fields into a kernel $Q\in\mathcal{D}'(\mathbb{R}^2\times \mathbb{R}^2)$ which encodes the stochastic properties of the free random field. Hence the graphical counterpart of this operation amounts to adding a colored edge representing $Q$. At a practical level the problem of computing expectation values of products of fields boils down to finding all admissible maximally connected graphs having a fixed number of vertices and edges, including the two-point function of the free random field. This viewpoint, which sheds light on how the stochastic information is handled at the same level of the quantum features of the problem, simplifies a two-steps procedure within the perturbative study of the solution, namely building the desired perturbative coefficients and applying the deformation map $\Gamma_{\cdot_Q}$ evaluated at the zero configuration.  
\end{remark}

\end{document}